\journal{JMVA}
\newtheorem{theorem}{Theorem}
\newtheorem{acknowledgement}[theorem]{Acknowledgement}
\newtheorem{definition}[theorem]{Definition}
\newtheorem{example}[theorem]{Example}
\newtheorem{lemma}[theorem]{Lemma}
\newtheorem{remark}[theorem]{Remark}
\newenvironment{proof}[1][Proof]{\noindent\textbf{#1.} }{\ \rule{0.5em}{0.5em}}
\begin{document}

\begin{frontmatter}%


%

\title{Harmonic analysis and distribution-free inference for spherical distributions}%

%

\author{S. Rao Jammalamadaka}%

\address{Department of Statistics and Applied Probability, University of California, Santa Barbara, CA 93106, USA \\ rao@pstat.ucsb.edu}%

\author{Gyorgy Terdik \small{(Corresponding author)}}%

\address{Faculty of Informatics, University of Debrecen, 4029 Debrecen, Hungary \\ Terdik.Gyorgy@inf.unideb.hu}%

\begin{abstract}%

Fourier analysis and representation of \textit{circular distributions} in
terms of their Fourier coefficients, is quite commonly discussed and used
for model-free inference such as testing uniformity and symmetry etc. in
dealing with 2-dimensional directions. However a similar discussion for 
\textit{spherical distributions}, which are used to model 3-dimensional
directional data, has not been fully developed in the literature in terms of
their harmonics. This paper, in what we believe is the first such attempt,
looks at the probability distributions on a unit sphere, through the
perspective of spherical harmonics, analogous to the Fourier analysis for
distributions on a unit circle. Harmonic representations of many currently
used spherical models are presented and discussed. A very general family of
spherical distributions is then introduced, special cases of which yield
many known spherical models. Through the prism of harmonic analysis, one can
look at the mean direction, dispersion, and various forms of symmetry for
these models in a generic setting. Aspects of distribution free inference
such as estimation and large-sample tests for these symmetries, are
provided. The paper concludes with a real-data example analyzing the
longitudinal sunspot activity.

\end{abstract}%

\begin{keyword}%


3D-Directional data \sep harmonic analysis \sep symmetries of densities \sep %
large-sample tests



\MSC[2008] Primary 62H11, 62H15 \sep Secondary 60E05 60E10

\end{keyword}%

\end{frontmatter}%



\section{Introduction}

Probability models for directional data in two- and three-dimensions can be
represented using the circumference of a unit circle or the surface of a
unit sphere as the support, and are called circular and spherical
distributions respectively. In any discussion of probability distributions
on the circle, Fourier analysis becomes an integral part, because such
probability densities on a circle are periodic with period $2\pi$ (see e.g. 
\cite{jammalamadaka2001topics}, p. 25, \cite{mardia2009directional}). On the
other hand, when it comes to directions in 3-dimensions and probability
distributions on the sphere, such analysis is nearly absent, partly because
of the complexity, as we shall soon see. Yet, in spite of this complexity,
such a study of spherical distributions through their harmonics lets one
probe deeper into their fundamental properties and characterize their
behaviour in terms of their symmetries etc. in a much more general
model-free setting rather than by dealing with specific parametric models.
We might mention here a paper \cite{healy1998spherical} which considers
nonparametric deconvolution of spherical densities via spherical harmonics.
It is our goal in this paper to discuss such harmonic analysis of spherical
distributions in a general setting, and to provide model-free large-sample
tests for testing various symmetries that these distributions might enjoy.

Harmonics play an important role in geosciences where they serve as a
smoothing and interpolation device for noisy data sets. A good example of
this is the map of global heat flow presented in \cite{CHAPMAN1975}; see
also \cite{GentonSTA477} where spatio-temporal data on global climate is
presented through visual animations. In the case of gravity and magnetic
field etc., spherical harmonics are the solutions to differential equations
that govern the potentials and provide a deeper physical relationship to the
fields. More applications for such analysis include computational geometry, 
\cite{kazhdan2004reflective}, image processing \cite{sun19973d}, approximate
symmetries in a large data set \cite{korman2015probably}, \textquotedblleft
big data\textquotedblright\ analysis for different meteorological data sets 
\cite{de2015tracing}, etc.

The paper is organized as follows. We start with a brief introduction to
spherical harmonics, and some essential properties that will be needed in
the rest of the paper. We then discuss some parametric models for the unit
sphere that have been commonly used in the literature to model 3-dimensional
directional data. We do this by introducing a very broad class of spherical
distributions called the \textquotedblleft Generalized-Fisher-Bingham"
family, that encompasses many currently used parametric models. In Section
3, we provide representation of the mean direction and dispersion for such
spherical models, in terms of their harmonics. In Section 4, we consider
various types of symmetries that a spherical distribution might enjoy, such
as isotropy, antipodal symmetry etc. Sections 5-7 deal with nonparametric
inference viz. estimation of the mean, as well as omnibus large-sample tests
for these various forms of symmetry. The final section deals with a real
data example of sunspot activity.

For a slightly more practical and computational side of these issues, the
reader is referred to a companion paper \cite{TGy_RSJ_WB2017} together with
a MATLAB package:``3D-Directional Statistics, Simulation and Visualization"
by the authors, which provides various simulation techniques and
visualization tools for these spherical models.

\subsection{Spherical Harmonics and the unit Sphere--some basics}

Continuous functions on a compact set can usually be approximated uniformly
by an orthogonal system of basis functions. In particular, we consider the
unit sphere in 3 dimensions, labelled the 2D-Sphere $\mathbb{S}_{2}$ in $%
\mathbb{R}^{3}$, with co-latitude $\vartheta\in\left[ 0,\pi\right] $ and
longitude $\varphi\in\left[ 0,2\pi\right] $. Continuous functions on such a
sphere can be approximated uniformly by composition of \textit{Orthonormal
Spherical Harmonics}, (see e.g. \cite{Mueller1966}, Theorem 9). Such a basis
set for $\mathbb{S}_{2}$ is given by the complex-valued functions $\{Y_{\ell
}^{m}\left( \vartheta,\varphi\right) $, $\ell=0,1,2,\ldots$, \ \ $m=-\ell
,-\ell+1,\ldots-1,0,1,\ldots,\ell-1,\ell\}$ of \textbf{degree} $\ell$ and 
\textbf{order} $m$, defined by 
\begin{equation}
Y_{\ell}^{m}\left( \vartheta,\varphi\right) =\left( -1\right) ^{m}\sqrt{%
\frac{2\ell+1}{4\pi}\frac{\left( \ell-m\right) !}{\left( \ell+m\right) !}}%
P_{\ell}^{m}\left( \cos\vartheta\right) e^{im\varphi },\;\varphi\in\left[
0,2\pi\right] ,\;\vartheta\in\left[ 0,\pi\right] ,
\label{SpheriHarmOrthoNorm}
\end{equation}
where $P_{\ell}^{m}$ denotes \textit{associated normalized }\textbf{Legendre
function} \textit{of the first kind, }(\cite{Erdelyi11a}, 3.2)---see
Appendix \ref{LegendrePoly}. These $\left\{ Y_{\ell}^{m}\right\} $ are fully%
\textit{\ }normalized in the sense, for each $\ell$ and $m$ 
\begin{equation*}
\int_{0}^{2\pi}\int_{0}^{\pi}\left\vert Y_{\ell}^{m}\left( \vartheta
,\varphi\right) \right\vert ^{2}\sin\vartheta d\vartheta d\varphi=1.
\end{equation*}
\smallskip If $\underline{\widetilde{x}}\left( \vartheta,\varphi\right) $
denotes a point on the unit sphere $\mathbb{S}_{2}$, we shall use the
alternate notations $Y_{\ell}^{m}\left( \vartheta,\varphi\right) $ and $%
Y_{\ell}^{m}\left( \underline{\widetilde{x}}\right) $ interchangeably. In
particular, the first few spherical harmonics are given by 
\begin{align}
Y_{0}^{0} & =\frac{1}{\sqrt{4\pi}},  \label{Y_eleje} \\
Y_{1}^{-1} & =\sqrt{\frac{3}{8\pi}}\sin\vartheta e^{-i\varphi},\quad
Y_{1}^{0}=\sqrt{\frac{3}{4\pi}}\cos\vartheta,\quad Y_{1}^{1}=-\sqrt{\frac {3%
}{8\pi}}\sin\vartheta e^{i\varphi},  \notag \\
Y_{2}^{-2} & =\sqrt{\frac{15}{32\pi}}\sin^{2}\vartheta e^{-i2\varphi},\quad
Y_{2}^{-1}=\sqrt{\frac{15}{8\pi}}\cos\vartheta\sin\vartheta e^{-i\varphi
},\quad Y_{2}^{0}=\sqrt{\frac{5}{16\pi}}\left( 3\cos^{2}\vartheta-1\right) 
\notag \\
Y_{2}^{1} & =-\sqrt{\frac{15}{8\pi}}\cos\vartheta\sin\vartheta e^{i\varphi
},\quad Y_{2}^{2}=\sqrt{\frac{15}{32\pi}}\sin^{2}\vartheta e^{i2\varphi }. 
\notag
\end{align}
The spherical harmonics of order $0$ have a special form so that for each $%
\ell$ 
\begin{equation}
Y_{\ell}^{0}\left( \vartheta,\varphi\right) =\sqrt{\frac{2\ell+1}{4\pi}}%
P_{\ell}\left( \cos\vartheta\right) .  \label{Y2P}
\end{equation}
where $P_{\ell}$ denotes Legendre polynomial, see Equation (\ref%
{Legendre_poly}).

\smallskip\noindent\textbf{Real-valued spherical harmonics:} Spherical
harmonics are in general complex valued, because they depend on $%
e^{im\varphi }$ where $\varphi$ is the longitude. Clearly $e^{im\varphi}$ is
a complete orthogonal system which is equivalent to the real valued
sine-cosine system. Similarly we can define real spherical harmonic
functions (note the double subscripts) 
\begin{equation*}
Y_{\ell,m}=\left\{ 
\begin{array}{cc}
\frac{1}{\sqrt{2}}\left( Y_{\ell}^{m}+\left( -1\right)
^{m}Y_{\ell}^{-m}\right) & m>0 \\ 
Y_{\ell}^{0} & m=0 \\ 
\frac{1}{i\sqrt{2}}\left( Y_{\ell}^{-m}-\left( -1\right)
^{m}Y_{\ell}^{m}\right) & m<0%
\end{array}
\right. .
\end{equation*}
The harmonics for $m>0$ are said to be of the cosine type, while those for $%
m<0$ are of the sine type.

As in Fourier analysis, complex-valued harmonics can be re-expressed in
terms of real-valued harmonics and sometimes they provide considerably
simpler formulae. However, in most cases, we will keep using the
complex-valued spherical harmonics and develop the basic theory using them.

\smallskip We refer the reader to \cite{Varshalovich1988} and \cite%
{SteinWeiss} for a more detailed account of spherical harmonics.

\section{Probability distributions on a unit sphere and their harmonic
representation}

Let $\underline{X}\in\mathbb{R}^{3}$ be a random variable with continuous
density function $f$, with characteristic function $\Phi\left( \underline{%
\omega}\right) =\mathsf{E}e^{i\underline{\omega}\cdot \underline{X}}$, and
its inverse%
\begin{equation*}
f\left( \underline{x}\right) =\frac{1}{\left( 2\pi\right) ^{3}}\int_{\mathbb{%
R}^{3}}e^{-i\underline{\omega}\cdot\underline{x}}\Phi\left( \underline{\omega%
}\right) d\underline{\omega}.
\end{equation*}
Recall the \textbf{\ Rayleigh plane wave expansion } of the exponential $%
e^{-i\underline{\omega}\cdot\underline{x}}$ (see \cite{Abramowit12} 10.1.47)
that for any $\underline{\omega}$, $\underline{x}\in\mathbb{R}^{3}$, $%
\rho=\left\vert \underline{\omega}\right\vert $, $r=\left\vert \underline{x}%
\right\vert $, 
\begin{align}
e^{i\underline{\omega}\cdot\underline{x}} &
=4\pi\sum_{\ell=0}^{\infty}\sum_{m=-\ell}^{\ell}i^{\ell}j_{\ell}\left( \rho
r\right) Y_{\ell}^{m}\left( \underline{\widetilde{\omega}}\right)
^{\ast}Y_{\ell}^{m}\left( \underline{\widetilde{x}}\right)
\label{Rayleigh_wave} \\
& =\sum_{\ell=0}^{\infty}\left( 2\ell+1\right) i^{\ell}j_{\ell}\left( \rho
r\right) P_{\ell}\left( \underline{\omega}\cdot\underline{x}\right) ,  \notag
\end{align}
where the notation $\ast$ is for the transpose and conjugate for a matrix
and just the conjugate for a scalar and where $j_{\ell}\left( z\right) $ is
the \textbf{Spherical Bessel function} of the first kind (see \cite%
{Abramowit12}, 10.1.1)), and is related to the Bessel function of the first
kind $J_{\ell+1/2}$ by the equation 
\begin{equation}
j_{\ell}\left( z\right) =\sqrt{\frac{\pi}{2z}}J_{\ell+1/2}\left( z\right) .
\label{SphericalBessel}
\end{equation}

Utilizing this, we may write the density $f$ in terms of the spherical
Bessel functions $j_{\ell}$ and spherical harmonic basis functions $%
Y_{\ell}^{m}$ (see (\ref{SphericalBessel}) and (\ref{SpheriHarmOrthoNorm})).
Putting $r=\left\vert \underline{x}\right\vert $, $\underline{\widetilde{x}}=%
\underline{x}/r$, $\rho=\left\vert \underline{\omega}\right\vert $, $%
\underline{\widetilde{\omega}}=\underline{\omega}/\rho$, for vectors $%
\underline{x},\underline{\omega}\in\mathbb{R}^{3}$, and denoting $%
\Omega\left( d\underline{\widetilde{\omega}}\right) $ as the Lebesgue
element of surface area on $\mathbb{S}_{2}$, we have 
\begin{align*}
f\left( \underline{x}\right) & =\frac{1}{\left( 2\pi\right) ^{3}}\int_{%
\mathbb{S}_{2}}\int_{0}^{\infty}4\pi\sum_{\ell=0}^{\infty}\sum_{m=-\ell
}^{\ell}\left( -i\right) ^{\ell}j_{\ell}\left( \rho r\right)
Y_{\ell}^{m}\left( \underline{\widetilde{\omega}}\right)
^{\ast}Y_{\ell}^{m}\left( \underline{\widetilde{x}}\right) \Phi\left( \rho%
\underline{\widetilde{\omega }}\right) \rho^{2}d\rho\Omega\left( d\underline{%
\widetilde{\omega}}\right) \\
& =\frac{1}{2\pi^{2}}\sum_{\ell=0}^{\infty}\sum_{m=-\ell}^{\ell}b_{\ell}^{m}%
\left( r\right) Y_{\ell}^{m}\left( \underline{\widetilde{x}}\right) ,
\end{align*}
where 
\begin{equation*}
b_{\ell}^{m}\left( r\right) =\int_{\mathbb{S}_{2}}\int_{0}^{\infty}\left(
-i\right) ^{\ell}j_{\ell}\left( r\rho\right) Y_{\ell}^{m}\left( \underline{%
\widetilde{\omega}}\right) ^{\ast}\Phi\left( \rho \underline{\widetilde{%
\omega}}\right) \rho^{2}d\rho\Omega\left( d\underline{\widetilde{\omega}}%
\right) .
\end{equation*}

In this paper, we are concerned with spherical distributions with density $%
f\left( \underline{\widetilde{x}}\right) $ on the unit sphere $\mathbb{S}%
_{2} $, which is a compact set in $\mathbb{R}^{3}$. When the density $f$ is
concentrated on unit sphere $\mathbb{S}_{2}$, we will write $a_{\ell
}^{m}=b_{\ell}^{m}\left( 1\right) /2\pi^{2}$.

\begin{theorem}[\protect\cite{Mueller1966}]
Let the density function $f\left( \underline{\widetilde{x}}\right) $ on the
unit sphere $\mathbb{S}_{2}$ be continuous. Then it has the series expansion
in terms of spherical harmonics $Y_{\ell}^{m}$ viz. 
\begin{equation}
f\left( \underline{\widetilde{x}}\right)
=\sum_{\ell=0}^{\infty}\sum_{m=-\ell}^{\ell}a_{\ell}^{m}Y_{\ell}^{m}\left( 
\underline{\widetilde{x}}\right) ,  \label{Series_f_L}
\end{equation}
where the complex valued coefficients $\left\{ a_{\ell}^{m}\right\} $ given
by 
\begin{equation}
a_{\ell}^{m}=\int_{\mathbb{S}_{2}}f\left( \underline{\widetilde{x}}\right)
Y_{\ell}^{m\ast}\left( \underline{\widetilde{x}}\right) \Omega\left( d%
\underline{\widetilde{x}}\right) ,  \label{a_l_m}
\end{equation}
and satisfy $a_{\ell}^{m\ast}=\left( -1\right) ^{m}a_{\ell}^{-m}$ and the
series (\ref{Series_f_L}) converges uniformly to $f$.
\end{theorem}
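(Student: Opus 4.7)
The proof rests on three classical facts: (i) completeness of the spherical harmonic system, which is essentially the M\"uller approximation theorem (Theorem~9) cited above, asserting that finite linear combinations of the $Y_{\ell}^{m}$ are uniformly dense in $C(\mathbb{S}_{2})$; (ii) orthonormality of $\{Y_{\ell}^{m}\}$ with respect to the surface measure $\Omega$, stated just after (\ref{SpheriHarmOrthoNorm}); and (iii) the conjugation identity $Y_{\ell}^{m\ast}=(-1)^{m}Y_{\ell}^{-m}$, immediate from (\ref{SpheriHarmOrthoNorm}) since $P_{\ell}^{m}$ is real and complex conjugation flips the sign inside $e^{im\varphi}$.

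First I would derive formula (\ref{a_l_m}) by a formal orthogonality argument: multiply (\ref{Series_f_L}) by $Y_{\ell'}^{m'\ast}(\underline{\widetilde{x}})$, integrate against $\Omega(d\underline{\widetilde{x}})$ over $\mathbb{S}_{2}$, interchange sum and integral (justified a posteriori by the uniform convergence, or directly by the $L^{2}$ Parseval identity), and use orthonormality to pick out the $(\ell',m')$-coefficient. The reality relation $a_{\ell}^{m\ast}=(-1)^{m}a_{\ell}^{-m}$ then follows by conjugating (\ref{a_l_m}), using that $f$ is real-valued, and substituting (iii).

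Next I would handle convergence. $L^{2}$-convergence of the expansion is immediate: by (i), finite linear combinations of $Y_{\ell}^{m}$ are dense in $L^{2}(\mathbb{S}_{2})$, so the partial sums $S_{L}=\sum_{\ell\leq L}\sum_{m}a_{\ell}^{m}Y_{\ell}^{m}$ are the $L^{2}$-best approximations and converge to $f$ in norm via Bessel's inequality applied to the continuous (hence square-integrable) $f$. To upgrade to uniform convergence on $\mathbb{S}_{2}$, I would write $S_{L}$ as the integral of $f$ against the reproducing kernel of the harmonic space of degree $\leq L$, which by the addition theorem reduces to $\sum_{\ell\leq L}\tfrac{2\ell+1}{4\pi}P_{\ell}(\underline{\widetilde{x}}\cdot\underline{\widetilde{y}})$, and then combine M\"uller's uniform polynomial density with a Ces\`aro-type summability estimate on this kernel.

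The hardest step is precisely this uniform convergence: the analogy with classical Fourier series warns that bare continuity is typically insufficient, since the Lebesgue constants of the spherical-harmonic partial sums grow with $L$. The cleanest resolution is either to interpret the convergence through Ces\`aro or Abel summability---for which continuous $f$ does suffice---or else to invoke the mild regularity that is implicit in M\"uller's formulation. Once that hurdle is cleared, the identification of coefficients and the reality constraint are routine bookkeeping.
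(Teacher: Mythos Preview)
The paper does not give its own proof of this theorem; it simply states the result and attributes it to M\"uller \cite{Mueller1966}. So there is nothing to compare your argument against line by line.

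That said, your outline is essentially the standard one, and your handling of the coefficient formula (\ref{a_l_m}) and the reality constraint $a_{\ell}^{m\ast}=(-1)^{m}a_{\ell}^{-m}$ is correct and routine. You are also right to flag the uniform-convergence step as the delicate point. In fact the concern you raise is not merely a technicality to be patched: just as on the circle, there exist continuous functions on $\mathbb{S}_{2}$ whose Laplace--Fourier partial sums fail to converge uniformly, because the Lebesgue constants of the zonal reproducing kernel $\sum_{\ell\le L}\tfrac{2\ell+1}{4\pi}P_{\ell}$ are unbounded in $L$. M\"uller's Theorem~9, which the paper cites earlier, is a Weierstrass-type density statement (finite harmonic combinations are uniformly dense in $C(\mathbb{S}_{2})$), not a statement about convergence of the specific orthogonal expansion. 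So the theorem as literally phrased---uniform convergence of (\ref{Series_f_L}) for every continuous $f$---requires either a summability interpretation (Ces\`aro/Abel, as you suggest) or an extra smoothness hypothesis on $f$; your proposal already identifies both escape routes.
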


Notice that the spherical harmonic $Y_{0}^{0}=1/\sqrt{4\pi}$, hence $%
a_{0}^{0}=1/\sqrt{4\pi}$ is the normalizing constant for $f$. The series
expansion (\ref{Series_f_L}) for a spherical density is analogous to the
Fourier series expansion of a circular density (see e.g. Equation (2.1.5) in 
\cite{jammalamadaka2001topics}).

One can replace $Y_{\ell}^{m}$ by real spherical harmonics $Y_{\ell,m}$ in (%
\ref{Series_f_L}) then the coefficients $a_{\ell,m}$ of $Y_{\ell,m}$ will be
given in terms of $a_{\ell}^{m}$ as follows 
\begin{equation*}
a_{\ell,m}=\left\{ 
\begin{array}{cc}
\frac{1}{\sqrt{2}}\left( a_{\ell}^{m}+\left( -1\right)
^{m}a_{\ell}^{-m}\right) & m>0 \\ 
a_{\ell}^{0} & m=0 \\ 
\frac{1}{i\sqrt{2}}\left( a_{\ell}^{-m}-\left( -1\right)
^{m}a_{\ell}^{m}\right) & m<0.%
\end{array}
\right. .
\end{equation*}

\subsection{Harmonics for some specific spherical models}

In this section, we provide harmonic representations for many commonly used
spherical models and a few new ones. While this section is comprehensive in
covering most known spherical models, such a harmonic analysis is similar to
what one would find as Fourier representations of existing circular models
in the literature (See e.g. Section 2.2 of \cite{jammalamadaka2001topics} or
Section 3.5 of \cite{mardia2009directional}). The first 6 examples are
directly functions of the spherical harmonics, while Example \ref{Ex_GFB} is
a very broad class of parametric densities that is introduced here and
called, the \textquotedblleft Generalized Fisher-Bingham". Examples 8-11 are
various special cases of this family, and are widely used in the literature. 

\begin{example}
\label{Ex_Uniform}Uniform / isotropic distribution $f\left( \underline{%
\widetilde{x}}\right) =1/4\pi$, then $\left( \vartheta ,\varphi\right) $
have joint probability density 
\begin{equation*}
f\left( \vartheta,\varphi\right) =\frac{1}{4\pi}\sin\vartheta.
\end{equation*}
\end{example}

\begin{example}
\label{Ex_BM}Brownian Motion distribution, (\cite{mardia2009directional}
p.181)%
\begin{equation}
f\left( \underline{\widetilde{x}};\underline{\widetilde{x}}_{0},\zeta\right)
=\sum_{\ell=0}^{\infty}e^{-\ell\left( \ell+1\right) /4\zeta}\frac{2\ell +1}{%
4\pi}P_{\ell}\left( \underline{\widetilde{x}}_{0}\cdot \underline{\widetilde{%
x}}\right) ,  \label{dens_BM}
\end{equation}
where $\zeta>0$, notice (\ref{Y2P}).
\end{example}

Another class of densities is obtained by taking 
\begin{equation}
f\left( \underline{\widetilde{x}}\right) =\left\vert \sum_{\ell=0}^{\infty
}\sum_{m=-\ell}^{\ell}b_{\ell}^{m}Y_{\ell}^{m}\left( \underline{\widetilde{x}%
}\right) \right\vert ^{2}  \label{densityQM_General}
\end{equation}
for given values $\{b_{\ell}^{m}\}$, and this can then be re-expressed in
the general form (\ref{Series_f_L}), where $a_{\ell}^{n}$ can be explicitly
obtained from the Clebsch-Gordan series (see (\ref{Clebsch_Gordan_series})).

Two special cases described in Examples \ref{EX_Ylm_compl_suqare} and \ref%
{EX_Ylm_real_suqare}, are of particular interest. In quantum mechanics, the $%
Y_{\ell,m}\left( \underline{\widetilde{x}}\right) ^{2}$ itself and the
probability density function (\ref{densityQM_General}) plays an important
role in modelling the hydrogen atom, for instance when $\sum_{\ell=0}^{%
\infty}\sum_{m=-\ell}^{\ell}\left\vert b_{\ell}^{m}\right\vert ^{2}=1$. As
can be seen from equations (\ref{SpheriHarmOrthoNorm}) and (\ref%
{Density_SphH}), the squared modulus $\left\vert Y_{\ell}^{m}\left( 
\underline{\widetilde{x}}\right) \right\vert ^{2}$ is also a density
function and serves as a model for rotational symmetric density function on
sphere, since it depends on $\cos\vartheta$ only.

\begin{example}
\label{EX_Ylm_compl_suqare}The density $\left\vert Y_{\ell}^{m}\right\vert
^{2}$ has a representation of the form (\ref{Series_f_L}), as 
\begin{equation}
\left\vert Y_{\ell}^{m}\right\vert ^{2}=\left( -1\right) ^{m}\frac{2\ell +1}{%
\sqrt{4\pi}}\sum_{h=0}^{\ell}\sqrt{\frac{1}{4h+1}}C_{\ell,0;%
\ell,0}^{2h,0}C_{\ell,m;\ell,-m}^{2h,0}Y_{2h}^{0},  \label{Ylm_abs_suqare}
\end{equation}
where $C_{\ell_{1},m_{1};\ell_{2},m_{2}}^{k,m}\ $are Clebsch-Gordan
coefficients (see (\ref{Clebsch_Gordan_series} and \cite{TGy_RSJ_WB2017}).
\end{example}

\begin{example}
\label{EX_Ylm_real_suqare}For $m>0$, the density $Y_{\ell,m}^{2}$ has the
representation of the form (\ref{Series_f_L}) as 
\begin{align}
Y_{\ell,m}^{2} & =\frac{2\ell+1}{2\sqrt{2\pi}}\sum_{h=m}^{\ell}\sqrt {\frac{1%
}{4h+1}}C_{\ell,0;\ell,0}^{2h,0}C_{\ell,m;\ell,m}^{2h,2m}Y_{2h\boldsymbol{,}%
2m}  \notag \\
& +\left( -1\right) ^{m}\frac{2\ell+1}{\sqrt{4\pi}}\sum_{h=0}^{\ell}\sqrt{%
\frac{1}{4h+1}}C_{\ell,0;\ell,0}^{2h,0}C_{\ell,m;\ell,-m}^{2h,0}Y_{2h%
\boldsymbol{,0}}.  \label{Ylm_real_suqare}
\end{align}
\end{example}

See Appendix \ref{Appendix_Proofs1} for a proof.

\begin{example}
Exponential family (see \cite{beran1979exponential} and \cite%
{watson1983statistics} p.82)%
\begin{equation}
f_{e}\left( \underline{\widetilde{x}}\right)
=\exp\sum_{\ell=0}^{\infty}\sum_{m=-\ell}^{\ell}c_{\ell}^{m}Y_{\ell}^{m}%
\left( \underline{\widetilde{x}}\right) ,  \label{Exp_family}
\end{equation}
where $c_{\ell}^{m\ast}=\left( -1\right) ^{m}c_{\ell}^{-m}$. The normalizing
constant corresponds to $c_{0}^{0}$, it depends on the rest of the
parameters $c_{\ell}^{m}$, since the integral of $f_{e}$ must be $1$. Beran (%
\cite{beran1979exponential}) uses exponential of orthonormal spherical
harmonics, not separated by degree. The likelihood of observations also
takes this form. \newline
\end{example}

\begin{example}
Exponential densities with rotational symmetry around the axis $\underline{%
\widetilde{x}}_{0}$ have the form 
\begin{equation}
f_{e}\left( \underline{\widetilde{x}};\underline{\widetilde{x}}_{0}\right)
=\exp\sum_{\ell=0}^{\infty}c_{\ell}P_{\ell}\left( \cos\gamma\right) ,
\label{exp_Legrange}
\end{equation}
where $\gamma$ is the angle between $\underline{\widetilde{x}}$ and $%
\underline{\widetilde{x}}_{0}$, i.e. $\cos\gamma=\underline{\widetilde{x}}%
\cdot\underline{\widetilde{x}}_{0}$. The density $f_{e}\left( \underline{%
\widetilde{x}}\right) $ has the series expansion (\ref{Series_f_L}), 
\begin{equation*}
f_{e}\left( \underline{\widetilde{x}};\underline{\widetilde{x}}_{0}\right)
=\sum_{\ell=0}^{\infty}f_{\ell}\frac{2\ell+1}{4\pi}P_{\ell}\left( \cos
\gamma\right) ,
\end{equation*}
with%
\begin{equation*}
f_{\ell}=2\pi\int_{-1}^{1}\widehat{f}_{e}\left( x\right) P_{\ell}\left(
x\right) dx,
\end{equation*}
where we set $f_{e}\left( \underline{\widetilde{x}};\underline{\widetilde{x}}%
_{0}\right) =f_{e}\left( \underline{\widetilde{x}}\cdot \underline{%
\widetilde{x}}_{0}\right) =f_{e}\left( \cos\gamma\right) =\widehat{f}%
_{e}\left( x\right) $, see Appendix \ref{Appendix_Proofs_vMF} for a proof of
Example \ref{Ex_vMF}. \newline
For instance, consider the Generalized von Mises distribution discussed in 
\cite{gatto2007generalized} defined by 
\begin{equation*}
f_{e}\left( \underline{\widetilde{x}};\underline{\widetilde{x}}_{0}\right)
=\exp\left( \kappa_{0}+\kappa_{1}\cos\gamma+\kappa_{2}\cos2\gamma\right) ,
\end{equation*}
which can be rewritten as%
\begin{equation*}
f_{e}\left( \underline{\widetilde{x}};\underline{\widetilde{x}}_{0}\right)
=\exp\left( c_{0}+c_{1}P_{1}\left( \cos\gamma\right) +c_{2}P_{2}\left(
\cos\gamma\right) \right) ,
\end{equation*}
using $\cos2\gamma=2\cos^{2}\gamma-1$. A more general orthogonal form (\cite%
{maksimov1967necessary}) is to take 
\begin{equation*}
f_{e}\left( \underline{\widetilde{x}};\underline{\widetilde{x}}_{0}\right)
=\exp\left( \sum_{\ell=0}^{\infty}\kappa_{\ell}\cos\ell\gamma\right) .
\end{equation*}
This comes directly from (\ref{exp_Legrange}) using 
\begin{align*}
\frac{1}{4\pi}\int_{\mathbb{S}_{2}}P_{\ell}\left( \cos\vartheta\right) \cos
k\vartheta\Omega\left( d\underline{\widetilde{x}}\right) & =\frac{1}{4\pi }%
\int_{0}^{2\pi}\int_{0}^{\pi}P_{\ell}\left( \cos\vartheta\right) \cos
k\vartheta\sin\vartheta d\vartheta d\varphi \\
& =\frac{1}{2}\int_{0}^{\pi}P_{\ell}\left( \cos\vartheta\right) \cos
k\vartheta\sin\vartheta d\vartheta \\
& =\frac{1}{2}\int_{-1}^{1}P_{\ell}\left( x\right) T_{k}\left( x\right) dx,
\end{align*}
where $T_{k}$ denotes the Chebyshev polynomial of degree $k$.
\end{example}

\begin{example}[Generalized Fisher-Bingham family of spherical distributions]

\label{Ex_GFB}A very broad class of distributions, which we shall call
Generalized Fisher-Bingham family ($\mathbf{GFB}$), has the following
density 
\begin{equation*}
f\left( \underline{\widetilde{x}};\underline{\widetilde{\mu}},\kappa
,A\right) \cong\exp\left( \kappa\underline{\widetilde{\mu}}\cdot \underline{%
\widetilde{x}}+\underline{\widetilde{x}}^{\top}A\underline{\widetilde{x}}%
\right) ,
\end{equation*}
where $\cong$ denotes equality up to a normalizing constant and where $A$ is
a symmetric 3 by 3 matrix. Matrix $A$ has the form $A=MZM^{\top}$, where $M=%
\left[ \widetilde{\underline{\mu}}_{1},\widetilde{\underline{\mu}}_{2},%
\widetilde{\underline{\mu}}_{3}\right] $ is an orthogonal matrix and $%
Z=diag\left( \zeta_{1},\zeta_{2},\zeta_{3}\right) $ is a diagonal matrix. To
avoid identifiability problems, it is necessary to impose some constraint,
and the usual restriction imposed is that 
\begin{equation}
tr\left( A\right) =\zeta_{1}+\zeta_{2}+\zeta_{3}=0.  \label{Assumption_zeta}
\end{equation}
Further assume $\widetilde{\underline{\mu}}=\widetilde{\underline{\mu}}_{3}$%
, so that the number of parameters is $6$. The resulting $\mathbf{GFB}_{6}$
model has a density given by, 
\begin{equation}
f_{6}\left( \underline{\widetilde{x}};\kappa,\zeta_{1},\zeta_{2},M\right)
\cong\exp\left( \kappa\widetilde{\underline{\mu}}_{3}\cdot \underline{%
\widetilde{x}}+\sum_{k=1}^{3}\zeta_{k}\left( \widetilde{\underline{\mu}}%
_{k}\cdot\underline{\widetilde{x}}\right) ^{2}\right) .  \label{Density_FB6}
\end{equation}
An alternative parametrization of model $\mathbf{GFB}_{6}$, when the
constraint $\zeta_{2}=-\zeta_{1}$ is imposed, provides an alternate form of
the density 
\begin{equation}
f_{6}\left( \underline{\widetilde{x}};\kappa,\beta,\gamma,M\right)
\cong\exp\left( \kappa\underline{\widetilde{\mu}}_{3}\cdot\underline{%
\widetilde{x}}+\gamma\left( \underline{\widetilde{\mu}}_{3}\cdot\underline{%
\widetilde{x}}\right) ^{2}+\beta\left( \left( \underline{\widetilde{\mu}}%
_{1}\cdot\underline{\widetilde{x}}\right) ^{2}-\left( \underline{\widetilde{%
\mu }}_{2}\cdot\underline{\widetilde{x}}\right) ^{2}\right) \right) ,
\label{FB6_param_G}
\end{equation}
where again vectors $\widetilde{\underline{\mu}}_{1}$, $\widetilde{%
\underline{\mu}}_{2}$ and $\widetilde{\underline{\mu}}_{3}$ constitute an
orthogonal system.
\end{example}

\smallskip The next 4 examples provide some important special cases of this
Generalized Fisher-Bingham family, that have been considered in the
literature:

\begin{example}
\label{Ex_vMF} We first consider the \textit{von Mises-Fisher} (Fisher,
Langevin) distribution (see \cite{fisher1953dispersion}, \cite%
{jammalamadaka2001topics}) which is obtained from (\ref{FB6_param_G}) by
setting both $\gamma=0$ and $\beta=0$. The normalizing constant here
involves the \textbf{\ }Modified Bessel function of the first kind (see \cite%
{Abramowit12},9.6.18) given by 
\begin{equation*}
I_{\nu}\left( z\right) =\frac{\left( \frac{1}{2}z\right) ^{\nu}}{\sqrt {\pi}%
\Gamma\left( \nu+1/2\right) }\int_{0}^{\pi}e^{\pm
z\cos\vartheta}\sin^{2\nu}\vartheta d\vartheta,
\end{equation*}
and the density%
\begin{equation*}
f\left( \underline{\widetilde{x}};\underline{\widetilde{\mu}},\kappa\right) =%
\frac{\sqrt{\kappa}}{\left( 2\pi\right) ^{3/2}I_{1/2}\left( \kappa\right) }%
\exp\left( \kappa\underline{\widetilde{\mu}}\cdot\underline{\widetilde{x}}%
\right) ,
\end{equation*}
where $\kappa\geq0$. Notice that $f$ is a particular case of $\mathbf{GFB}%
_{6}$ setting $\gamma=0$, and $\beta=0$, in (\ref{FB6_param_G}). The density
function $f$ on $\mathbb{S}_{2}$ depends on cosine of the angle, say $\gamma$%
, between $\underline{\widetilde{x}}$ and $\underline{\widetilde{\mu}}$,
i.e. $\underline{\widetilde{x}}\cdot\underline{\widetilde{\mu}}=\cos\left(
\gamma\right) $. We find the series expansion (\ref{Series_f_L}) as 
\begin{equation}
f\left( \underline{\widetilde{x}};\underline{\widetilde{\mu}},\kappa\right)
=\sum_{\ell=0}^{\infty}\sqrt{\frac{2\ell+1}{4\pi}}\frac{I_{\ell+1/2}\left(
\kappa\right) }{I_{1/2}\left( \kappa\right) }Y_{\ell}^{0}\left( \underline{%
\widetilde{x}}\cdot\underline{\widetilde{\mu}}\right) .
\label{Density_Fisher}
\end{equation}
See Appendix \ref{Appendix_Proofs_vMF} for the proof.
\end{example}

%

\begin{example}
The model proposed by Bingham (\cite{bingham1974antipodally}) which we label 
$\mathbf{GFB}_{5,B}$, is obtained by setting $\kappa=0$ in (\ref{FB6_param_G}%
) and has the form%
\begin{equation*}
f_{B}\left( \underline{\widetilde{x}};\beta,\gamma,M\right) \cong\exp\left(
\gamma\left( \widetilde{\underline{\mu}}_{3}\cdot\underline{\widetilde{x}}%
\right) ^{2}+\beta\left( \left( \widetilde{\underline{\mu}}_{1}\cdot%
\underline{\widetilde{x}}\right) ^{2}-\left( \widetilde{\underline{\mu }}%
_{2}\cdot\underline{\widetilde{x}}\right) ^{2}\right) \right) ,
\end{equation*}
with five parameters.
\end{example}

For this special case with $\kappa=0$, see \cite{bingham1974antipodally} and
the companion paper \cite{TGy_RSJ_WB2017} for a more detailed account and
its visualization and simulation.

\begin{example}
\label{Example_DW} The Dimroth-Watson (Watson) Distribution, \cite%
{watson1983statistics}, is obtained by setting $\kappa=0$, and $\beta=0$, in
(\ref{FB6_param_G}), and we then have 
\begin{equation}
f\left( \underline{\widetilde{x}};\underline{\widetilde{\mu}},\gamma\right) =%
\frac{1}{M\left( 1/2,3/2,\gamma\right) }\exp\left( \gamma\left( \underline{%
\widetilde{\mu}}\cdot\underline{\widetilde{x}}\right) ^{2}\right) ,
\label{Distr_DW}
\end{equation}
where $M\left( 1/2,3/2,\gamma\right) $ is Kummer's function, \cite%
{mardia2009directional} p.181. The series expansion (\ref{Series_f_L}) of
this density is 
\begin{align}
f\left( \underline{\widetilde{x}};\underline{\widetilde{\mu}},\gamma\right)
& =\sum_{\ell=0}^{\infty}c_{\ell}\sqrt{\frac{2\ell+1}{4\pi}}%
Y_{\ell}^{0}\left( \vartheta,\varphi\right)  \label{Density_Watson} \\
& =\sum_{\ell=0}^{\infty}c_{\ell}\frac{2\ell+1}{4\pi}P_{\ell}\left(
\cos\vartheta\right) ,  \notag
\end{align}
where $\vartheta=\arccos\left( \underline{\widetilde{\mu}}\cdot \underline{%
\widetilde{x}}\right) $. For odd indices $c_{2\ell+1}=0$, and for the even
indices%
\begin{equation}
c_{2\ell}=\frac{2\pi}{M\left( 1/2,3/2,\gamma\right) }\int_{-1}^{1}\exp\left(
\gamma y^{2}\right) P_{2\ell}\left( y\right) dy.  \label{c_el_DW}
\end{equation}
\end{example}

%

\begin{example}
The model proposed by Kent (\cite{kent1982fisher}) which we label $\mathbf{%
GFB}_{5,K}$, arises by setting $\gamma=0$ in (\ref{FB6_param_G}), and has
the form 
\begin{equation}
f_{5}\left( \underline{\widetilde{x}};\kappa,\beta,M\right) \cong\exp\left(
\kappa\widetilde{\underline{\mu}}_{3}\cdot\underline{\widetilde{x}}%
+\beta\left( \left( \widetilde{\underline{\mu}}_{1}\cdot \underline{%
\widetilde{x}}\right) ^{2}-\left( \widetilde{\underline{\mu}}_{2}\cdot%
\underline{\widetilde{x}}\right) ^{2}\right) \right) ,  \label{Density_FB5}
\end{equation}
which defines the five parameter model $\mathbf{GFB}_{5,K}$.
\end{example}

\section{Mean direction, Moment of Inertia}

In this section, the Mean direction and dispersion of any spherical model is
expressed in terms of the harmonics.

\subsection{Mean direction in terms of spherical harmonics}

The mean of a random variable $\widetilde{\underline{X}}$ is calculated by 
\begin{align}
\underline{\mu} & =\mathsf{E}\widetilde{\underline{X}}=\int_{\mathbb{S}_{2}}%
\underline{\widetilde{x}}f\left( \underline{\widetilde{x}}\right)
\Omega\left( d\underline{\widetilde{x}}\right)  \label{Mean_Dir} \\
& =\int_{\mathbb{S}_{2}}\underline{\widetilde{x}}%
\sum_{m=-1}^{1}a_{1}^{m}Y_{1}^{m}\left( \underline{\widetilde{x}}\right)
\Omega\left( d\underline{\widetilde{x}}\right) .  \notag
\end{align}
We express the entries of $\underline{\widetilde{x}}=\left( \sin\vartheta
\cos\varphi,\sin\vartheta\sin\varphi,\cos\vartheta\right) ^{\top}$ in terms
of spherical harmonics 
\begin{equation}
\widetilde{x}_{1}=\sqrt{\frac{2\pi}{3}}\left( Y_{1}^{-1}-Y_{1}^{1}\right)
,\quad\widetilde{x}_{2}=i\sqrt{\frac{2\pi}{3}}\left(
Y_{1}^{-1}+Y_{1}^{1}\right) ,\quad\widetilde{x}_{3}=\sqrt{\frac{4\pi}{3}}%
Y_{1}^{0},  \label{coord2Sph}
\end{equation}
note $Y_{\ell}^{m\ast}\left( \vartheta,\varphi\right) =\left( -1\right)
^{m}Y_{\ell}^{-m}\left( \vartheta,\varphi\right) $. Or in terms of real
spherical harmonics 
\begin{equation*}
\widetilde{x}_{1}=\sqrt{\frac{4\pi}{3}}Y_{1,1},\quad\widetilde{x}_{2}=\sqrt{%
\frac{4\pi}{3}}Y_{1,-1},\quad\widetilde{x}_{3}=\sqrt{\frac{4\pi}{3}}Y_{1,0}.
\end{equation*}
Using the orthogonality of spherical harmonics, we have the components of $%
\underline{\mu}$ 
\begin{align}
\mu_{1} & =\int_{\mathbb{S}_{2}}\widetilde{x}_{1}%
\sum_{m=-1}^{1}a_{1}^{m}Y_{1}^{m}\left( \underline{\widetilde{x}}\right)
\Omega\left( d\underline{\widetilde{x}}\right)  \notag \\
& =\int_{\mathbb{S}_{2}}\sqrt{\frac{2\pi}{3}}\left(
Y_{1}^{-1}-Y_{1}^{1}\right) \sum_{m=-1}^{1}a_{1}^{m}Y_{1}^{m}\left( 
\underline{\widetilde{x}}\right) \Omega\left( d\underline{\widetilde{x}}%
\right)  \notag \\
& =\sqrt{\frac{2\pi}{3}}\left( a_{1}^{-1}-a_{1}^{1}\right) =\sqrt {\frac{4\pi%
}{3}}a_{1,1},  \label{Mu1}
\end{align}
similarly 
\begin{align}
\mu_{2} & =i\sqrt{\frac{2\pi}{3}}\left( a_{1}^{1}+a_{1}^{-1}\right) =\sqrt{%
\frac{4\pi}{3}}a_{1,-1},  \label{Mu2} \\
\mu_{3} & =\sqrt{\frac{4\pi}{3}}a_{1}^{0}=\sqrt{\frac{4\pi}{3}}a_{1,0}. 
\notag
\end{align}
Observe that although $a_{\ell}^{m}$ is complex valued, $\mu_{k}$ is real
and one can write\newline
$\underline{\mu}=R\left( \sin\vartheta_{\mu}\cos
\varphi_{\mu},\sin\vartheta_{\mu}\sin\varphi_{\mu},\cos\vartheta_{\mu}%
\right) ^{\top}=R\widetilde{\underline{\mu}}$, where the \emph{resultant} $R$
is defined by 
\begin{align*}
R^{2} & =\frac{4\pi}{3}\left( \left\vert a_{1}^{-1}\right\vert
^{2}+\left\vert a_{1}^{0}\right\vert ^{2}+\left\vert a_{1}^{1}\right\vert
^{2}\right) \\
& =\frac{4\pi}{3}\left( a_{1,1}^{2}+a_{1,-1}^{2}+a_{1,0}^{2}\right) ,
\end{align*}
and $\widetilde{\underline{\mu}}\in\mathbb{S}_{2}$, is the mean direction.
We see that the mean direction depends on the first degree coefficients 
\textit{only}. Also 
\begin{equation*}
R^{2}=\mu_{1}^{2}+\mu_{2}^{2}+\mu_{3}^{2}\leq\int_{\mathbb{S}_{2}}\left( 
\widetilde{x}_{1}^{2}+\widetilde{x}_{2}^{2}+\widetilde{x}_{3}^{2}\right)
f\left( \underline{\widetilde{x}}\right) \Omega\left( d\underline{\widetilde{%
x}}\right) =1
\end{equation*}
so that 
\begin{equation*}
R\leq1.
\end{equation*}

\begin{example}[\textbf{Ex. \protect\ref{Ex_Uniform} contd.}]
If $f\left( \underline{\widetilde{x}}\right) $ is uniform then $\underline{%
\mu}=0$ and $R=0$.
\end{example}

If $\underline{\mu}=0$, then the mean direction is undefined. We can have $%
R=0$ for non-uniform densities as well because of symmetries, as the
following examples illustrate.

\begin{example}[\textbf{Ex. \protect\ref{Example_DW} contd.}]
Dimroth-Watson Distribution (\ref{Density_Watson}) has $\underline{\mu}=0$,
as well as $R=0$.
\end{example}

\begin{example}[\textbf{Ex. \protect\ref{EX_Ylm_compl_suqare}, \protect\ref%
{EX_Ylm_real_suqare} contd.}]
For both the densities $\left\vert Y_{\ell}^{m}\left( \underline{\widetilde{x%
}}\right) \right\vert ^{2}$ and $Y_{\ell,m}^{2}$, we have $R=0$, since there
is no linear term in Clebsch-Gordan series 
(see (\ref{Ylm_abs_suqare}) and (\ref{Ylm_real_suqare}) respectively).
\end{example}

Let us consider the $\mathbf{GFB}_{6}$ model with parametrization (\ref%
{FB6_param_G}).

\begin{lemma}
The mean direction $\underline{\widetilde{\mu}}$ of the model $\mathbf{GFB}%
_{6}$ is characterized by a constant times $\kappa M^{\top}\underline{%
\widetilde{N}}$, in particular if $\kappa=0$, then $\underline{\widetilde{\mu%
}}$ is undefined ($\underline{\mu}=0$).
\end{lemma}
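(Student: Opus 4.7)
The plan is to exploit the orthogonal change of variables built into the matrix $M$. First I would substitute $\widetilde{\underline{y}}=M^{\top}\widetilde{\underline{x}}$, so that $\widetilde{\underline{\mu}}_{k}\cdot\widetilde{\underline{x}}=y_{k}$ for $k=1,2,3$. Because $M$ is orthogonal, the sphere $\mathbb{S}_{2}$ and its surface element $\Omega$ are preserved, and the density (\ref{FB6_param_G}) becomes, up to the normalizing constant,
\[
\tilde{f}(\widetilde{\underline{y}})\;\propto\;\exp\!\bigl(\kappa y_{3}+\gamma y_{3}^{2}+\beta(y_{1}^{2}-y_{2}^{2})\bigr).
\]
This rewriting is the substantive simplification; everything afterwards is a symmetry argument, and the mean direction in the original frame is recovered via $\underline{\mu}=M\,\mathsf{E}\widetilde{\underline{Y}}$.

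Next I would observe that $\tilde{f}$ depends on $y_{1}$ and $y_{2}$ only through their squares, hence is invariant under the reflections $y_{1}\mapsto -y_{1}$ and $y_{2}\mapsto -y_{2}$. The integrands $y_{j}\tilde{f}(\widetilde{\underline{y}})$ for $j=1,2$ are therefore odd under the respective reflection, so $\mathsf{E}Y_{1}=\mathsf{E}Y_{2}=0$. Consequently $\mathsf{E}\widetilde{\underline{Y}}=c(\kappa,\beta,\gamma)\,e_{3}$ for a scalar $c$, and
\[
\underline{\mu} = M\,\mathsf{E}\widetilde{\underline{Y}} = c(\kappa,\beta,\gamma)\,\widetilde{\underline{\mu}}_{3},
\]
since $\widetilde{\underline{\mu}}_{3}=Me_{3}$. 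This is precisely the ``constant times $M$ acting on the polar axis'' structure in the statement (with $\widetilde{\underline{N}}$ interpreted as the reference polar axis $e_{3}$).

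To extract the role of $\kappa$, I would specialize to $\kappa=0$: the density reduces to $\exp(\gamma y_{3}^{2}+\beta(y_{1}^{2}-y_{2}^{2}))$, which has the additional reflection symmetry $y_{3}\mapsto -y_{3}$. The same odd-integrand argument then forces $\mathsf{E}Y_{3}=0$, hence $c(0,\beta,\gamma)=0$ and $\underline{\mu}=0$, so the mean direction $\widetilde{\underline{\mu}}$ is indeed undefined. For $\kappa\neq 0$ this last reflection symmetry is broken by the linear term alone, and a Taylor expansion of $e^{\kappa y_{3}}$ under the $y_{3}\mapsto -y_{3}$--symmetric measure $\exp(\gamma y_{3}^{2}+\beta(y_{1}^{2}-y_{2}^{2}))\,\Omega(d\widetilde{\underline{y}})$ shows that only odd powers of $\kappa$ survive in $c$; thus $c(\kappa,\beta,\gamma)=\kappa\cdot h(\kappa^{2},\beta,\gamma)$ for a function $h$ that is generically nonzero, giving the claimed factor of $\kappa$.

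The main obstacle is not the vanishing at $\kappa=0$, which is immediate from parity, but rather checking cleanly that $c$ factors as $\kappa$ times an even function of $\kappa$ rather than simply possessing a zero at the origin. This comes down to careful bookkeeping of odd versus even powers of $y_{3}$ in the numerator and denominator of $\mathsf{E}Y_{3}$ once $e^{\kappa y_{3}}$ is expanded, together with Fubini/uniform-convergence arguments to exchange sum and integral on the compact sphere --- routine, but the only place where any real work is needed.
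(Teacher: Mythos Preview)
Your argument is correct and reaches the same conclusion as the paper, but by a genuinely different and more elementary route. Both proofs begin with the same orthogonal substitution $\widetilde{\underline{y}}=M^{\top}\widetilde{\underline{x}}$ to reduce to the canonical density $\exp(\kappa y_{3}+\gamma y_{3}^{2}+\beta(y_{1}^{2}-y_{2}^{2}))$. From there, however, the paper rewrites this density as $\exp\bigl(\sqrt{4\pi/15}(\kappa\sqrt{5}\,Y_{1,0}+2\sqrt{3}\,\gamma\,Y_{2,0}+2\beta\,Y_{2,2})\bigr)$, expands the exponential, and invokes the Clebsch--Gordan product formula to argue that every term in the harmonic expansion has even \emph{order}; orthogonality against $\widetilde{y}_{1},\widetilde{y}_{2}$ (which have order $\pm1$) then gives the vanishing. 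For $\kappa=0$ the paper further uses the selection rule $C^{\ell,0}_{\ell_{1},0;\ell_{2},0}=0$ for $\ell_{1}+\ell_{2}+\ell$ odd to conclude that all odd \emph{degrees} vanish, hence $\mathsf{E}Y_{3}=0$.

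Your parity argument bypasses all of this: the reflections $y_{j}\mapsto-y_{j}$ are isometries of $\mathbb{S}_{2}$ preserving $\Omega$, and the density is manifestly invariant under them, so the odd moments vanish immediately. This is cleaner and requires no harmonic machinery; the paper's route, by contrast, serves to illustrate how the Clebsch--Gordan framework encodes such symmetries within the general harmonic setup of the article. Your additional step showing $c(\kappa,\beta,\gamma)=\kappa\,h(\kappa^{2},\beta,\gamma)$ via the Taylor expansion of $e^{\kappa y_{3}}$ against a $y_{3}$--even reference measure is a nice sharpening that the paper does not carry out explicitly---the paper establishes only the direction of the mean and the vanishing at $\kappa=0$, leaving the precise $\kappa$-dependence implicit.
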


\begin{proof}
The mean direction 
\begin{align*}
\underline{\widetilde{\mu}} & =\int_{\mathbb{S}_{2}}\underline{\widetilde{x}}%
f_{6}\left( \underline{\widetilde{x}};\kappa,\beta,\gamma,M\right)
\Omega\left( d\underline{\widetilde{x}}\right) \\
& =M^{\top}\int_{\mathbb{S}_{2}}\underline{\widetilde{y}}f_{6}\left( 
\underline{\widetilde{y}};\kappa,\beta,\gamma\right) \Omega\left( d%
\underline{\widetilde{y}}\right) ,
\end{align*}
hence we shall consider the mean direction of the density in canonical form.
Rewriting the density $f_{6}$ in terms of real spherical harmonics 
\begin{align*}
f_{6}\left( \underline{\widetilde{y}};\kappa,\beta,\gamma\right) & \cong
e^{\kappa\widetilde{y}_{3}+\gamma\widetilde{y}_{3}^{2}+\beta\left( 
\widetilde{y}_{1}^{2}-\widetilde{y}_{2}^{2}\right) } \\
& =\exp\left( \kappa\sqrt{\frac{4\pi}{3}}Y_{1,0}+\gamma\sqrt{\frac{16\pi}{5}}%
Y_{2,0}+\frac{\sqrt{4\pi}}{3}Y_{0,0}+2\beta\sqrt{\frac{4\pi}{15}}%
Y_{2,2}\right) ,
\end{align*}
so that, 
\begin{equation}
f_{6}\left( \underline{\widetilde{y}};\kappa,\beta,\gamma\right)
\cong\exp\left( \sqrt{\frac{4\pi}{15}}\left( \kappa\sqrt{5}Y_{1,0}+\gamma 2%
\sqrt{3}Y_{2,0}+2\beta Y_{2,2}\right) \right) ,  \label{Density_FB6_Sph}
\end{equation}
where, $Y_{1,0}=Y_{1}^{0}$, $Y_{2,0}=Y_{2}^{0}$ and $Y_{2,2}=\left(
Y_{2}^{2}+Y_{2}^{-2}\right) /\sqrt{2}=\sqrt{15/16\pi}\sin^{2}\vartheta\cos%
\left( 2\varphi\right) $. \newline
The formula (\ref{Clebsch_Gordan_series}) for the product of two spherical
harmonics shows that the order in the right side is fixed at $m_{1}-m_{2}$,
while the degree changes from $\left\vert \ell _{1}-\ell_{2}\right\vert $ to 
$\ell_{1}+\ell_{2}$. The series expansion of (\ref{Density_FB6_Sph})
contains various products of spherical harmonics $Y_{1,0}$, $Y_{2,0}$ and $%
Y_{2,2}$, each of them with even order. Therefore, the series expansion (\ref%
{Series_f_L}) of the density (\ref{Density_FB6_Sph}) does not contain
spherical harmonics with odd order. The orthogonality of spherical harmonics
implies that 
\begin{equation*}
\int_{\mathbb{S}_{2}}\widetilde{y}_{k}f_{6}\left( \underline{\widetilde{y}}%
;\kappa,\beta,\gamma\right) \Omega\left( d\underline{\widetilde{y}}\right)
=0,
\end{equation*}
for $k=1$, $2$. Since the coordinates $\widetilde{y}_{1}$, $\widetilde{y}%
_{2} $, and $\widetilde{y}_{3}$ in terms of real spherical harmonics write
as 
\begin{equation*}
\widetilde{y}_{1}=\sqrt{\frac{2\pi}{3}}\left( Y_{1}^{-1}-Y_{1}^{1}\right)
,\quad\widetilde{y}_{2}=i\sqrt{\frac{2\pi}{3}}\left(
Y_{1}^{-1}+Y_{1}^{1}\right) ,\quad\widetilde{y}_{3}=\sqrt{\frac{4\pi}{3}}%
Y_{1}^{0},
\end{equation*}
and $\widetilde{y}_{1}$ and $\widetilde{y}_{2}$ are with odd order. The only
even order coordinate is $\widetilde{y}_{3}$. The result is that the first
two coordinates of mean direction of the density (\ref{Density_FB6_Sph}) is
zero, i.e. is a constant times North pole $\underline{\widetilde{N}}$. 
\newline
If $\kappa=0$, then consider the series expansion (\ref{Series_f_L}) of the
density (\ref{Density_FB6_Sph}) and conclude that not only the coefficients
with odd order are zero but the coefficients with odd degree as well, since $%
C_{\ell_{1},0;\ell_{2},0}^{\ell,0}=0$, if $\ell_{1}+\ell_{2}+\ell=2j+1$,
where $j$ is an integer.
\end{proof}

\subsection{Moment of Inertia and the Variance-Covariance matrix in terms of
spherical harmonics}

in order to consider the asymptotic distribution of the mean, or for
estimating the rotational axes etc., we need to discuss second order
moments, i.e. the moment of inertia, or the variance-covariance matrix,
which we do below.

If we define the product 
\begin{equation*}
\underline{\widetilde{x}}\underline{\widetilde{x}}^{\top}=%
\begin{bmatrix}
\sin^{2}\vartheta\cos^{2}\varphi & \sin^{2}\vartheta\cos\varphi\sin\varphi & 
\cos\vartheta\sin\vartheta\cos\varphi \\ 
\sin\vartheta\sin\varphi & \sin^{2}\vartheta\sin^{2}\varphi & \cos
\vartheta\sin\vartheta\sin\varphi \\ 
\cos\vartheta\sin\vartheta\sin\varphi & \cos\vartheta\sin\vartheta\sin\varphi
& \cos^{2}\vartheta%
\end{bmatrix}
,
\end{equation*}
this can be expressed in terms of real spherical harmonics 
\begin{equation}
\underline{\widetilde{x}}\underline{\widetilde{x}}^{\top}=%
\begin{bmatrix}
\sqrt{\frac{4\pi}{15}}Y_{2,2}-\frac{1}{3}\sqrt{\frac{4\pi}{5}}Y_{2,0}+\frac{%
\sqrt{4\pi}}{3}Y_{0,0} & \sqrt{\frac{4\pi}{15}}Y_{2,-2} & \sqrt {\frac{4\pi}{%
15}}Y_{2,1} \\ 
\sqrt{\frac{4\pi}{15}}Y_{2,-2} & -\sqrt{\frac{4\pi}{15}}Y_{2,2}-\frac{1}{3}%
\sqrt{\frac{4\pi}{5}}Y_{2,0}+\frac{\sqrt{4\pi}}{3}Y_{0,0} & \sqrt {\frac{4\pi%
}{15}}Y_{2,-1} \\ 
\sqrt{\frac{4\pi}{15}}Y_{2,1} & \sqrt{\frac{4\pi}{15}}Y_{2,-1} & \frac{2}{3}%
\sqrt{\frac{4\pi}{5}}Y_{2,0}+\frac{\sqrt{4\pi}}{3}Y_{0,0}%
\end{bmatrix}
,  \label{matrix_xx}
\end{equation}
so that%
\begin{equation*}
\mathsf{E}\widetilde{\underline{X}}\widetilde{\underline{X}}^{\top}=%
\begin{bmatrix}
\sqrt{\frac{4\pi}{15}}a_{2,2}-\frac{1}{3}\sqrt{\frac{4\pi}{5}}a_{2,0}+\frac {%
1}{3} & \sqrt{\frac{4\pi}{15}}a_{2,-2} & \sqrt{\frac{4\pi}{15}}a_{2,1} \\ 
\sqrt{\frac{4\pi}{15}}a_{2,-2} & -\sqrt{\frac{4\pi}{15}}a_{2,2}-\frac{1}{3}%
\sqrt{\frac{4\pi}{5}}a_{2,0}+\frac{1}{3} & \sqrt{\frac{4\pi}{15}}a_{2,-1} \\ 
\sqrt{\frac{4\pi}{15}}a_{2,1} & \sqrt{\frac{4\pi}{15}}a_{2,-1} & \frac{2}{3}%
\sqrt{\frac{4\pi}{5}}a_{2,0}+\frac{1}{3}%
\end{bmatrix}
.
\end{equation*}
Moment of Inertia here depends on second degree coefficients and constant 
\textit{only}. Notice that the trace of $\underline{\widetilde{x}}\underline{%
\widetilde{x}}^{\top}$ is $1$, as well as for $\mathsf{E}\widetilde{%
\underline{X}}\widetilde{\underline{X}}^{\top}$. Variance-Covariance matrix
is given by%
\begin{align}
\limfunc{Var}\left( \widetilde{\underline{X}}\right) & =\mathsf{E}\widetilde{%
\underline{X}}\widetilde{\underline{X}}^{\top }-\mathsf{E}\widetilde{%
\underline{X}}\mathsf{E}\widetilde{\underline{X}}^{\top}  \notag \\
& =\sqrt{\frac{4\pi}{15}}\left[ 
\begin{array}{ccc}
a_{2,2} & a_{2,-2} & a_{2,1} \\ 
a_{2,-2} & -a_{2,2} & a_{2,-1} \\ 
a_{2,1} & a_{2,-1} & 0%
\end{array}
\right] -\frac{4\pi}{3}\underline{\widetilde{\mu}}\underline{\widetilde{\mu}}%
^{\top}-\frac{1}{3}\sqrt{\frac{4\pi}{5}}a_{2,0}\limfunc{Diag}\left(
1,1,-2\right) +\frac{1}{3}\limfunc{Diag}\left( 1,1,1\right) .
\label{Var_Cov_matrix}
\end{align}

\begin{example}[\textbf{Ex. \protect\ref{EX_Ylm_real_suqare} contd.}]
For the density $Y_{\ell,m}^{2}$, $m>0$, the series expansion (\ref%
{Ylm_real_suqare}) provides the following coefficients: $a_{0,0}=\frac{1}{%
\sqrt{4\pi}}$, $a_{1,m}=0$, 
\begin{align*}
a_{2,0} & =\left( -1\right) ^{m}\frac{2\ell+1}{\sqrt{20\pi}}C_{\ell
,0;\ell,0}^{2,0}C_{\ell,m;\ell,-m}^{2,0}, \\
a_{2,2} & =\delta_{m=\boldsymbol{1}}\frac{2\ell+1}{\sqrt{40\pi}}%
C_{\ell,0;\ell,0}^{2,0}C_{\ell,1;\ell,1}^{2,2},
\end{align*}
$a_{2,1}=a_{2,-1}=a_{2,-2}=0$, see \cite{Varshalovich1988}, 8.5.1, p.248,
hence $\func{Var}\left( \widetilde{\underline{X}}\right) $ is diagonal. In
particular, if $\widetilde{\underline{X}}$ is with density $Y_{3,2}^{2}$,
then 
\begin{equation*}
\func{Var}\left( \widetilde{\underline{X}}\right) =%
\begin{bmatrix}
\frac{1}{3} & 0 & 0 \\ 
0 & \frac{1}{3} & 0 \\ 
0 & 0 & \frac{1}{3}%
\end{bmatrix}
, 
\end{equation*}
since $C_{3,2;3,-2}^{2,0}=0$, \cite{Varshalovich1988} 8.5,2 (45) p.252.
\end{example}

\smallskip

\section{Rotations and Symmetries}

Symmetries of physical systems, in particular the rules of atomic
spectroscopy, conservation of angular momentum etc. motivate the
consideration of the group of rotations of a 2D-sphere, which forms a
noncommutative group called $SO\left( 3\right) $ (see \cite{wigner2012group}%
). We follow the usual notation for a rotation $g\in SO\left( 3\right) $
acting on a function $f$ as $\Lambda\left( g\right) f\left( \underline{%
\widetilde{x}}\right) =f\left( g^{-1}\underline{\widetilde{x}}\right) $. In
particular $\Lambda\left( g\right) Y_{\ell}^{m}\left( \underline{\widetilde{x%
}}\right) =Y_{\ell}^{m}\left( g^{-1}\underline{\widetilde{x}}\right) $,
which is a rotated spherical harmonic, and is expressed in terms of
spherical harmonics in a natural manner in terms of Wigner D-matrices $%
D_{k,m}^{\left( \ell\right) }\left( g\right) $, i.e. 
\begin{equation}
\Lambda\left( g\right) Y_{\ell}^{m}\left( \underline{\widetilde{x}}\right)
=\sum_{k=-\ell}^{\ell}D_{k,m}^{\left( \ell\right) }\left( g\right)
Y_{\ell}^{k}\left( \underline{\widetilde{x}}\right) .  \label{MatrixWignerD}
\end{equation}
Applying such a rotation on a density function $f$ we have%
\begin{align*}
\Lambda\left( g\right) f\left( \underline{\widetilde{x}}\right) &
=\sum_{\ell=0}^{\infty}\sum_{m=-\ell}^{\ell}a_{\ell}^{m}\Lambda\left(
g\right) Y_{\ell}^{m}\left( \underline{\widetilde{x}}\right) \\
& =\sum_{\ell=0}^{\infty}\sum_{m=-\ell}^{\ell}a_{\ell}^{m}\sum_{k=-\ell
}^{\ell}D_{k,m}^{\left( \ell\right) }\left( g\right) Y_{\ell}^{k}\left( 
\underline{\widetilde{x}}\right) \\
& =\sum_{\ell=0}^{\infty}\sum_{k=-\ell}^{\ell}b_{\ell}^{k}\left( g\right)
Y_{\ell}^{k}\left( \underline{\widetilde{x}}\right) ,
\end{align*}
such that the new coefficients $b_{\ell}^{k}\left( g\right) $ are transforms
of $a_{\ell}^{m}$ as follows 
\begin{equation}
b_{\ell}^{k}\left( g\right)
=\sum_{m=-\ell}^{\ell}a_{\ell}^{m}D_{k,m}^{\left( \ell\right) }\left(
g\right) .  \label{Coeff_Rotation}
\end{equation}
We use this relationship for \textit{characterizing} symmetric densities in
particular cases.

\begin{remark}
In the case of an exponential family%
\begin{equation*}
\Lambda\left( g\right) f_{e}\left( \underline{\widetilde{x}}\right)
=\exp\sum_{\ell=0}^{\infty}\sum_{m=-\ell}^{\ell}c_{\ell}^{m}\Lambda\left(
g\right) Y_{\ell}^{m}\left( \underline{\widetilde{x}}\right) ,
\end{equation*}
so that the coefficients in the exponent after the rotation satisfy the same
equation (\ref{Coeff_Rotation}). The consequence is that all
characterizations of symmetry discussed below, follow directly from the
coefficients $c_{\ell }^{m}$.
\end{remark}

\section{Inference on isotropy on the sphere}

One of the central problems in dealing with directional data, prior to any
further inference, is to verify/test if the data is isotropic or uniformly
distributed over the space $\mathbb{S}_{2}$, in which case estimation of the
Mean or dispersion does not make sense.

\begin{definition}
$f$ is isotropic (globally symmetric), if for all $g\in SO\left( 3\right) $, 
$\Lambda\left( g\right) f\left( \underline{\widetilde{x}}\right) =f\left( 
\underline{\widetilde{x}}\right) $.
\end{definition}

This definition provides us a necessary and sufficient condition for the
uniformity which will be useful for inference.

\begin{lemma}
$f$ is globally symmetric if and only if $f$ is uniform distribution on the
sphere i.e. $f\left( \underline{\widetilde{x}}\right) =1/4\pi$.
\end{lemma}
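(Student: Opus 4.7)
The ``if'' direction is immediate: if $f(\widetilde{\underline{x}})=1/(4\pi)$, then $\Lambda(g)f(\widetilde{\underline{x}})=f(g^{-1}\widetilde{\underline{x}})=1/(4\pi)=f(\widetilde{\underline{x}})$ for every $g\in SO(3)$, so $f$ is globally symmetric. All the work is in the converse.

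For the ``only if'' direction, the plan is to combine the harmonic expansion from the preceding theorem with the transformation law (\ref{Coeff_Rotation}) for coefficients under rotation. Assume $f$ is isotropic, so $\Lambda(g)f=f$ for every $g\in SO(3)$. Since $\{Y_\ell^m\}$ is an orthonormal system, matching coefficients gives, for every $\ell\ge 0$, every order $k$, and every $g$,
\begin{equation*}
a_\ell^k \;=\; \sum_{m=-\ell}^{\ell} a_\ell^m \, D_{k,m}^{(\ell)}(g).
\end{equation*}

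I would then integrate this identity over $SO(3)$ against the normalized Haar measure $dg$. The Peter--Weyl orthogonality relations for matrix elements of irreducible unitary representations imply
\begin{equation*}
\int_{SO(3)} D_{k,m}^{(\ell)}(g)\, dg \;=\; 0 \qquad \text{for every } \ell\ge 1,
\end{equation*}
since $D^{(\ell)}$ is irreducible and non-trivial, hence orthogonal to the trivial representation. This forces $a_\ell^m=0$ for all $\ell\ge 1$ and all $m$, and the series (\ref{Series_f_L}) collapses to $f(\widetilde{\underline{x}})=a_0^0\,Y_0^0 = \tfrac{1}{\sqrt{4\pi}}\cdot\tfrac{1}{\sqrt{4\pi}} = \tfrac{1}{4\pi}$, using $a_0^0=1/\sqrt{4\pi}$ noted after the theorem.

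The main obstacle is invoking the orthogonality of $D$-matrix elements cleanly. If one prefers an elementary route that avoids Peter--Weyl, the same conclusion can be reached by choosing two families of rotations: first, rotations $g_\varphi$ about the north pole, for which $\Lambda(g_\varphi)Y_\ell^m=e^{-im\varphi}Y_\ell^m$, so that isotropy gives $a_\ell^m(e^{-im\varphi}-1)=0$ for every $\varphi$ and hence $a_\ell^m=0$ whenever $m\ne 0$; second, any rotation that moves the north pole, which expresses $Y_\ell^0$ (for $\ell\ge 1$) as a non-trivial linear combination involving some $Y_\ell^k$ with $k\ne 0$, forcing $a_\ell^0=0$ as well. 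Either route yields $f\equiv 1/(4\pi)$.
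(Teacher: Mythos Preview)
Your proof is correct and follows essentially the same route as the paper: both match coefficients via the transformation law (\ref{Coeff_Rotation}) to get $a_\ell^k=\sum_m a_\ell^m D_{k,m}^{(\ell)}(g)$, then integrate over $SO(3)$ with Haar measure and use $\int_{SO(3)}D_{k,m}^{(\ell)}(g)\,dg=\delta_{\ell,0}\delta_{m,0}\delta_{k,0}$ (which the paper records as equation (\ref{Int_D}) and you derive from Peter--Weyl) to kill all $a_\ell^m$ with $\ell\ge 1$. Your additional elementary alternative via $z$-axis rotations followed by a tilt is a nice bonus not in the paper.
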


\begin{proof}
We show that all coefficients in (\ref{Density_SphH}) are zero except $%
a_{0}^{0}$, i.e. $a_{\ell}^{m}=\delta_{\ell=0}\delta_{m=0}a_{0}^{0}$. This
follows from the equality of $b_{\ell}^{k}\left( g\right) $ and $a_{\ell
}^{k}$, by (\ref{Coeff_Rotation}), since integrating both sides of (\ref%
{Coeff_Rotation}) by the Haar measure $dg$ on $SO\left( 3\right) $ and using
(\ref{Int_D}), we get 
\begin{align*}
\int_{SO\left( 3\right) }a_{\ell}^{k}dg & =\sum_{m=-\ell}^{\ell}a_{\ell
}^{m}\int_{SO\left( 3\right) }D_{k,m}^{\left( \ell\right) }\left( g\right)
dg, \\
a_{\ell}^{k} & =\sum_{m=-\ell}^{\ell}a_{\ell}^{m}\delta_{\ell=0}\delta
_{m=0}\delta_{k=0}=a_{0}^{0}\delta_{\ell=0}\delta_{k=0}.
\end{align*}
\end{proof}

In order to develop an omnibus test of uniformity, we need to consider
estimation of the coefficients $a_{\ell}^{k}$.

\subsection{Estimation of $a_{\ell}^{m}$}

Consider now a random sample (i.i.d. observations) $\underline{\widetilde{x}}%
_{1}\left( \vartheta_{1},\varphi_{1}\right) $, $\underline{\widetilde{x}}%
_{2}\left( \vartheta_{2},\varphi_{2}\right) $,\ldots ,$\underline{\widetilde{%
x}}_{n}\left( \vartheta_{n},\varphi_{n}\right) $. Let the empirical density
function $\overline{f}_{n}\left( \underline{\widetilde{x}}\right) $ be
defined as usual by putting mass $1/n$ at each observation $\underline{%
\widetilde{x}}_{k}$. The estimator $\widehat{a_{\ell}^{m}}$, $\ell\neq0$, of 
$a_{\ell}^{m}$ given in (\ref{a_l_m}) be 
\begin{align}
\widehat{a_{\ell}^{m}} & =\int_{\mathbb{S}_{2}}\overline{f}_{n}\left( 
\underline{\widetilde{x}}\right) Y_{\ell}^{m\ast}\left( \underline{%
\widetilde{x}}\right) \Omega\left( d\underline{\widetilde{x}}\right)  \notag
\\
& =\frac{1}{n}\sum_{k=1}^{n}Y_{\ell}^{m\ast}\left( \underline{\widetilde{x}}%
_{k}\right) .  \label{Esti_a_l_m}
\end{align}
Clearly $\widehat{a_{\ell}^{m}}$ is unbiased since 
\begin{equation*}
\mathsf{E}\widehat{a_{\ell}^{m}}=\frac{1}{n}\sum_{k=1}^{n}\int_{\mathbb{S}%
_{2}}f\left( \underline{\widetilde{x}}\right) Y_{\ell}^{m\ast}\left( 
\underline{\widetilde{x}}\right) \Omega\left( d\underline{\widetilde{x}}%
\right) =a_{\ell}^{m},
\end{equation*}
and has variance%
\begin{equation*}
\limfunc{Var}\widehat{a_{\ell}^{m}}=\frac{1}{n}\left( \mathsf{E}\left\vert
Y_{\ell}^{m}\left( \underline{\widetilde{X}}\right) \right\vert
^{2}-\left\vert a_{\ell}^{m}\right\vert ^{2}\right)
\end{equation*}
where%
\begin{equation*}
\mathsf{E}\left\vert Y_{\ell}^{m}\left( \underline{\widetilde{X}}\right)
\right\vert ^{2}=\left( -1\right) ^{m}\frac{2\ell+1}{\sqrt{4\pi}}\sum_{0\leq
h\leq\ell}\sqrt{\frac{1}{4h+1}}C_{\ell,0;\ell,0}^{2h,0}C_{\ell,m;\ell
,-m}^{2h,0}a_{2h}^{0}.
\end{equation*}
It also follows that $\widehat{a_{\ell}^{m}}$ is consistent.

From now on we introduce the notation 
\begin{equation*}
C\left( \ell_{1},m_{1};\ell_{2},m_{2}\right) =\mathsf{E}Y_{\ell_{1}}^{m_{1}}%
\left( \underline{\widetilde{X}}\right) Y_{\ell_{2}}^{m_{2}}\left( 
\underline{\widetilde{X}}\right) ^{\ast},
\end{equation*}
then 
\begin{equation}
\limfunc{Cov}\left( Y_{\ell_{1}}^{m_{1}}\left( \underline{\widetilde{X}}%
\right) ,Y_{\ell_{2}}^{m_{2}}\left( \underline{\widetilde{X}}\right) \right)
=C\left( \ell_{1},m_{1};\ell _{2},m_{2}\right)
-a_{\ell_{1}}^{m_{1}}a_{\ell_{2}}^{m_{2}\ast}  \label{Cov_Ylm}
\end{equation}
where we can express $C$ in terms of Clebsch-Gordan coefficients 
\begin{align}
C\left( \ell_{1},m_{1};\ell_{2},m_{2}\right) & =\left( -1\right)
^{m_{2}}\sum_{\left\vert \ell_{1}-\ell_{2}\right\vert \leq
h\leq\ell_{1}+\ell_{2}}\sqrt{\frac{\left( 2\ell_{1}+1\right) \left(
2\ell_{2}+1\right) }{4\pi\left( 2h+1\right) }}C_{\ell_{1},0;%
\ell_{2},0}^{h,0}C_{\ell_{1},m_{1};\ell_{2},-m_{2}}^{h,m_{1}-m_{2}}\mathsf{E}%
Y_{h}^{m_{1}-m_{2}}\left( \underline{\widetilde{X}}\right)  \notag \\
& =\left( -1\right) ^{m_{2}}\sum_{\left\vert \ell_{1}-\ell_{2}\right\vert
\leq h\leq\ell_{1}+\ell_{2}}\sqrt{\frac{\left( 2\ell_{1}+1\right) \left(
2\ell_{2}+1\right) }{4\pi\left( 2h+1\right) }}C_{\ell_{1},0;%
\ell_{2},0}^{h,0}C_{\ell_{1},m_{1};%
\ell_{2},-m_{2}}^{h,m_{1}-m_{2}}a_{h}^{m_{1}-m_{2}}.  \label{Exp_Prod Y_gen}
\end{align}
In particular if we fix the degree $\ell$ then $C_{\ell,0;\ell,0}^{h,0}=0$
if $2\ell+h$ is odd, (\cite{Varshalovich1988}, 8.5, (h), p.250) 
\begin{align*}
C\left( \ell,m_{1};\ell,m_{2}\right) & =\left( -1\right) ^{m_{2}}\frac{%
2\ell+1}{\sqrt{4\pi}}\sum_{0\leq h\leq2\ell}\sqrt{\frac{1}{2h+1}}%
C_{\ell,0;\ell,0}^{h,0}C_{\ell,m_{1};%
\ell,-m_{2}}^{h,m_{1}-m_{2}}a_{h}^{m_{1}-m_{2}} \\
& =\left( -1\right) ^{m_{2}}\frac{2\ell+1}{\sqrt{4\pi}}\sum_{0\leq h\leq\ell}%
\sqrt{\frac{1}{4h+1}}C_{\ell,0;\ell,0}^{2h,0}C_{\ell,m_{1};%
\ell,-m_{2}}^{2h,m_{1}-m_{2}}a_{2h}^{m_{1}-m_{2}},
\end{align*}
hence $\limfunc{Cov}\left( Y_{\ell}^{m_{1}}\left( \underline{\widetilde{X}}%
\right) ,Y_{\ell}^{m_{2}}\left( \underline{\widetilde{X}}\right) \right) $
depends on coefficients $a_{2h}^{m_{1}-m_{2}}$, with even degrees $%
2h\leq2\ell$ and $a_{\ell}^{m_{1}}a_{\ell}^{m_{2}\ast}$.

Let $\underline{A}_{L}$ denote the theoretical vector of coefficients $%
a_{\ell}^{m}$, $\ell=1,2\ldots L$, $m=-\ell,\ldots,\ell$. Let us introduce
the corresponding vector of estimated coefficients $\widehat{\underline{A}%
_{L}}\left( n\right) =\left[ \widehat{\underline{a}_{1}}\left( n\right) ,%
\widehat{\underline{a}_{2}}\left( n\right) ,\ldots,\widehat{\underline{a}_{L}%
}\left( n\right) \right] ^{\top}$, where $\widehat{\underline{a}_{\ell }}%
\left( n\right) =\left[ \widehat{a_{\ell}^{-\ell}},\widehat{a_{\ell
}^{-\ell+1}},\ldots,\widehat{a_{\ell}^{\ell}}\right] $ are the estimated
coefficients of degree $\ell$. $\widehat{\underline{A}_{L}}\left( n\right) $
has dimension $L\left( L+2\right) $, and is in general complex valued except
for the entries $\widehat{a_{\ell}^{0}}$, $\ell=1,\ldots,L$.

\begin{theorem}
\label{Theor_AssNormality} $\widehat{\underline{A}_{L}}\left( n\right) $ is
asymptotically complex Gaussian ($\mathcal{CN}$) with expected value $%
\underline{A}_{L}$ and covariance matrix given by (\ref{Cov_Ylm}), i.e. 
\begin{equation*}
\sqrt{n}\left( \widehat{\underline{A}_{L}}\left( n\right) -\underline{A}%
_{L}\right) \overset{\mathcal{D}}{\rightarrow}\mathcal{CN}\left( 0,\mathbf{C}%
\right) ,
\end{equation*}
where $\mathbf{C}$ is as defined in (\ref{Cov_Ylm}).
\end{theorem}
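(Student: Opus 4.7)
The plan is to invoke the multivariate central limit theorem, since $\widehat{a_\ell^m}$ is by construction a sample mean of i.i.d.\ random variables. More precisely, set $Z_{\ell,m}(k) = Y_\ell^{m\ast}(\underline{\widetilde{x}}_k)$, so that $\widehat{a_\ell^m} = \tfrac1n\sum_{k=1}^n Z_{\ell,m}(k)$. Stack the coordinates into a single complex random vector $\underline{Z}(k) \in \mathbb{C}^{L(L+2)}$ (ordered as in the definition of $\widehat{\underline{A}_L}(n)$). Then $\widehat{\underline{A}_L}(n) = \tfrac1n\sum_{k=1}^n \underline{Z}(k)$ and the $\underline{Z}(k)$ are i.i.d.\ with $\mathsf{E}\underline{Z}(k) = \underline{A}_L$ by the unbiasedness calculation already established.

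First I would verify the moment condition. Since $\mathbb{S}_2$ is compact and each $Y_\ell^m$ is continuous, the entries of $\underline{Z}(k)$ are uniformly bounded (indeed $|Y_\ell^m(\underline{\widetilde{x}})| \leq \sqrt{(2\ell+1)/(4\pi)}$), so all second moments are finite and the covariance matrix $\mathbf{C}$ with entries given by (\ref{Cov_Ylm}) is well defined. The identification of these entries as Clebsch--Gordan expressions has already been carried out in the display following (\ref{Exp_Prod Y_gen}), so no new calculation is needed.

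Next I would apply the classical multivariate CLT to the real-valued vector obtained by separating real and imaginary parts of $\underline{Z}(k)$. The CLT yields asymptotic normality in $\mathbb{R}^{2L(L+2)}$ with mean zero and covariance given by the real covariance of $\underline{Z}(k)$. Translating back to the complex setting, this is precisely the statement that $\sqrt{n}(\widehat{\underline{A}_L}(n)-\underline{A}_L) \overset{\mathcal{D}}{\to} \mathcal{CN}(0,\mathbf{C})$, where the complex covariance matrix $\mathbf{C}$ is the one recorded in (\ref{Cov_Ylm}). The conjugate-symmetry relation $\widehat{a_\ell^{m\ast}} = (-1)^m \widehat{a_\ell^{-m}}$ (inherited from the same identity for $Y_\ell^m$) guarantees that this complex Gaussian is consistent with the analogous real Gaussian limit for the real harmonics $Y_{\ell,m}$.

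The only subtle point, and what I would call the main obstacle, is bookkeeping rather than analysis: one must be careful about whether ``complex Gaussian'' here refers to a circularly symmetric (proper) complex normal or allows a nontrivial pseudo-covariance $\mathsf{E}(\underline{Z}-\underline{A}_L)(\underline{Z}-\underline{A}_L)^\top$. Because of the conjugation symmetry noted above, the pseudo-covariance of $\widehat{\underline{A}_L}(n)$ is generally nonzero and is given by a formula of the same Clebsch--Gordan type with $-m_2$ replaced by $m_2$; I would therefore state the limit in terms of both the Hermitian covariance $\mathbf{C}$ from (\ref{Cov_Ylm}) and this pseudo-covariance, or equivalently pass through the real representation to avoid ambiguity. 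With this clarification the conclusion follows directly from the real CLT.
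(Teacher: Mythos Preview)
Your proposal is correct, and in fact more detailed than what the paper offers: the paper states this theorem without proof, evidently regarding it as an immediate consequence of the multivariate central limit theorem applied to the i.i.d.\ summands $Y_\ell^{m\ast}(\underline{\widetilde{x}}_k)$, exactly as you outline. Your remark about the pseudo-covariance is a genuine point that the paper does not address; the conjugation symmetry $\widehat{a_\ell^{m\ast}} = (-1)^m \widehat{a_\ell^{-m}}$ indeed forces the limiting complex Gaussian to be improper, so specifying only the Hermitian covariance $\mathbf{C}$ is, strictly speaking, incomplete, and your suggestion to either record the relation matrix as well or to pass through the real representation is the right way to make the statement precise.
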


\subsubsection{Estimation of the Mean direction}

Estimating the mean direction $\underline{\widetilde{\mu}}$ is very
important in order to construct several tests discussed below. The natural
estimator for $\underline{\widetilde{\mu}}$ based on a sample $\underline{%
\widetilde{x}}_{1}\left( \vartheta_{1},\varphi_{1}\right) $, $\underline{%
\widetilde{x}}_{2}\left( \vartheta_{2},\varphi_{2}\right) $, \ldots, $%
\underline{\widetilde{x}}_{n}\left( \vartheta_{n},\varphi_{n}\right) $ is 
\begin{equation*}
\widehat{\underline{\mu}}=\overline{\underline{\widetilde{x}}}=\frac{1}{n}%
\sum_{k=1}^{n}\underline{\widetilde{x}}_{k}\left( \vartheta_{k},\varphi
_{k}\right) .
\end{equation*}
It is easy to see that this estimator is equivalent to the one based on
estimating $a_{1}^{m}$ by $\widehat{a_{1}^{m}}$ first then using formula (%
\ref{Mean_Dir}) for estimator $\widehat{\underline{\mu}}$. The variance of
this estimator is given in terms of variance-covariance matrix (\ref%
{Var_Cov_matrix}). The asymptotic normality also follows as special cases of
Theorem \ref{Theor_AssNormality}. Estimating the variance-covariance matrix
is very similar and straightforward, see (\ref{Var_Cov_matrix}).

\subsection{Testing uniformity or global symmetry}

Testing uniformity is one of the well studied problem in circular
statistics, and there are a large number of tests (See e.g. Chapter 6 of 
\cite{jammalamadaka2001topics}). More recently Jammalamadaka et al. (\cite%
{RaoJammalamadaka2017}) provide a large sample test using the Fourier
coefficients for testing isotropy of circular data. In the case of the
sphere, as we have seen before, the density $f$ is globally symmetric or
isotropic if and only if it has the form $f\left( \underline{\widetilde{x}}%
\right) =1/4\pi$ --- in other words all coefficients $a_{\ell}^{m}$ are zero
in the series expansion (\ref{Series_f_L}) except $a_{0}^{0}=1/\sqrt{4\pi}$.
Now consider the null hypothesis

\begin{center}
$H_{0}$: $a_{\ell}^{m}=0$, if $\ell\neq0$, for all $m$.
\end{center}

Under this hypothesis, $\mathsf{E}\widehat{a_{\ell}^{m}}=0$, with variance 
\begin{equation*}
\limfunc{Var}\left( \widehat{a_{\ell}^{m}}\right) =\mathsf{E}\left\vert 
\widehat{a_{\ell}^{m}}\right\vert ^{2}-\left\vert \mathsf{E}\widehat{%
a_{\ell}^{m}}\right\vert ^{2}=\frac{1}{4\pi n}.
\end{equation*}
This follows from (\ref{Density_SphH}) and from the Clebsch-Gordan series (%
\ref{Ylm_abs_suqare}) for $\left\vert Y_{\ell}^{m}\left( \underline{%
\widetilde{x}}_{j}\right) \right\vert ^{2}$, so that 
\begin{equation*}
\mathsf{E}\left\vert Y_{\ell}^{m}\left( \underline{\widetilde{x}}_{j}\right)
\right\vert ^{2}=\frac{1}{\sqrt{4\pi}}Y_{0}^{0}=\frac{1}{4\pi},
\end{equation*}
further we use (\ref{Exp_Prod Y_gen}) and (\ref{Cl_Gord_00}) and have 
\begin{equation*}
\mathsf{E}Y_{\ell}^{m\ast}\left( \underline{\widetilde{x}}_{k}\right)
Y_{\ell}^{m}\left( \underline{\widetilde{x}}_{j}\right) =\delta_{j=k}\frac{1%
}{4\pi}+\delta_{j=k}\left\vert a_{\ell}^{m}\right\vert ^{2}=\delta_{j=k}%
\frac{1}{4\pi},
\end{equation*}
therefore%
\begin{equation*}
\mathsf{E}\left\vert \widehat{a_{\ell}^{m}}\right\vert ^{2}=\frac{1}{n^{2}}%
\sum_{k,j=1}^{n}\mathsf{E}Y_{\ell}^{m\ast}\left( \underline{\widetilde{x}}%
_{k}\right) Y_{\ell}^{m}\left( \underline{\widetilde{x}}_{j}\right) =\frac{1%
}{4\pi n}.
\end{equation*}
In general, under $H_{0}$, if $\ell_{1}\ell_{2}\neq0$, 
\begin{align*}
\limfunc{Cov}\left( Y_{\ell_{1}}^{m_{1}}\left( \underline{\widetilde{x}}%
_{k}\right) ,Y_{\ell_{2}}^{m_{2}}\left( \underline{\widetilde{x}}_{k}\right)
\right) & =\mathsf{E}Y_{\ell_{1}}^{m_{1}}\left( \underline{\widetilde{x}}%
_{k}\right) Y_{\ell_{2}}^{m_{2}\ast}\left( \underline{\widetilde{x}}%
_{k}\right) \\
& =\left( -1\right) ^{m_{2}}\sqrt{\frac{\left( 2\ell_{1}+1\right) \left(
2\ell_{2}+1\right) }{4\pi}}C_{\ell_{1},0;\ell_{2},0}^{\boldsymbol{0}%
,0}C_{\ell_{1},m_{1};\ell_{2},-m_{2}}^{\boldsymbol{0},0}Y_{0}^{0} \\
& =\left( -1\right) ^{m_{2}}\frac{2\ell_{1}+1}{\sqrt{4\pi}}\left( -1\right)
^{\ell_{1}}\delta_{\ell_{1}=\ell_{2}}\left( -1\right) ^{\ell
_{1}-m_{1}}\delta_{m_{1}=m_{2}}\frac{1}{2\ell_{1}+1}Y_{0}^{0} \\
& =\delta_{\ell_{1}=\ell_{2}}\delta_{m_{1}=m_{2}}\frac{1}{4\pi},
\end{align*}
while%
\begin{align*}
n^{2}Cov\left( \widehat{a_{\ell_{1}}^{m_{1}}},\widehat{a_{\ell_{2}}^{m_{2}}}%
\right) & =\sum_{k,\underline{\widetilde{x}}=1}^{n}\mathsf{E}%
Y_{\ell_{1}}^{m_{1}\ast}\left( \underline{\widetilde{x}}_{k}\right)
Y_{\ell_{2}}^{m_{2}}\left( \underline{\widetilde{x}}_{k}\right) \\
& =\sum_{k=1}^{n}\mathsf{E}Y_{\ell_{1}}^{m_{1}}\left( \underline{\widetilde{x%
}}_{k}\right) Y_{\ell_{2}}^{m_{2}\ast}\left( \underline{\widetilde{x}}%
_{k}\right) \\
& =n\delta_{\ell_{1}=\ell_{2}}\delta_{m_{1}=m_{2}}\frac{1}{4\pi}.
\end{align*}
Thus $\widehat{a_{\ell_{1}}^{m_{1}}}$ and $\widehat{a_{\ell_{2}}^{m_{2}}}$
are uncorrelated if $\ell_{1}\neq\ell_{2}$, $m_{1}\neq m_{2}$.

The real and imaginary parts of the complex Gaussian variate have the same
variance, which in our case is $1/8\pi n$ asymptotically. We have the same
value for $\left\vert \widehat{a_{\ell}^{m}}\right\vert =\left\vert \widehat{%
a_{\ell}^{-m}}\right\vert $, provided $m\neq0$, hence the asymptotic
distribution of $4\pi n\left( \left\vert \widehat{a_{\ell}^{m}}\right\vert
^{2}+\left\vert \widehat{a_{\ell}^{-m}}\right\vert ^{2}\right) $ is $\chi
^{2}$ with $2$ degrees of freedom, $4\pi n\left\vert \widehat{a_{\ell}^{0}}%
\right\vert ^{2}$ asymptotically $\chi^{2}$ with $1$ degree of freedom. It
follows that

\begin{theorem}
Under the above hypothesis $H_{0}$ we have 
\begin{equation*}
4\pi n\left\Vert \widehat{\underline{A}_{L}}\left( n\right) \right\Vert
^{2}=4\pi n\sum_{\ell=1}^{L}\sum_{m=-\ell}^{\ell}\left\vert \widehat{a_{\ell
}^{m}}\right\vert ^{2},
\end{equation*}
is asymptotically $\chi^{2}$ with $L\left( L+2\right) $ degrees of freedom.
\end{theorem}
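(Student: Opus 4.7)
The plan is to leverage the asymptotic complex normality of $\widehat{\underline{A}_{L}}\left( n\right) $ (Theorem~\ref{Theor_AssNormality}) and then show that under $H_{0}$ the limiting covariance structure reduces to an essentially diagonal form, so that $4\pi n\left\Vert \widehat{\underline{A}_{L}}\left( n\right) \right\Vert ^{2}$ becomes a sum of independent standardized squared Gaussians. The degree-of-freedom count has to be done carefully because the estimators inherit the Hermitian symmetry $\widehat{a_{\ell}^{-m}}=\left( -1\right) ^{m}\widehat{a_{\ell}^{m}}^{\ast}$, so the $2\ell+1$ complex entries at level $\ell$ encode only $2\ell+1$ real sufficient statistics, matching the eventual count $\sum_{\ell=1}^{L}\left( 2\ell+1\right) =L\left( L+2\right) $.

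First I would invoke Theorem~\ref{Theor_AssNormality} to obtain $\sqrt{n}\widehat{\underline{A}_{L}}\left( n\right) \overset{\mathcal{D}}{\rightarrow}\mathcal{CN}\left( 0,\mathbf{C}\right) $ under $H_{0}$. Second, I would verify that $\mathbf{C}$ is diagonal: substituting $a_{\ell}^{m}=\delta_{\ell=0}\delta_{m=0}/\sqrt{4\pi}$ into the general Clebsch--Gordan representation (\ref{Exp_Prod Y_gen}) kills every term except $h=0$, and the identity $C_{\ell_{1},0;\ell_{2},0}^{0,0}C_{\ell_{1},m_{1};\ell_{2},-m_{2}}^{0,0}$ forces $\ell_{1}=\ell_{2}$ and $m_{1}=m_{2}$. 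This yields the diagonal covariance $\limfunc{Cov}\left( \widehat{a_{\ell_{1}}^{m_{1}}},\widehat{a_{\ell_{2}}^{m_{2}}}\right) =\delta_{\ell_{1}=\ell_{2}}\delta_{m_{1}=m_{2}}/\left( 4\pi n\right) $ already displayed in the excerpt.

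Third, I would verify that the \emph{pseudo-covariance} $\mathsf{E}\widehat{a_{\ell_{1}}^{m_{1}}}\widehat{a_{\ell_{2}}^{m_{2}}}$ (without conjugation on the second factor) vanishes unless $m_{2}=-m_{1}$. Expanding the product $Y_{\ell_{1}}^{m_{1}\ast}Y_{\ell_{2}}^{m_{2}\ast}$ via Clebsch--Gordan and integrating against $1/4\pi$ leaves only the $L=0$, $m_{1}+m_{2}=0$ term. Combined with the Hermitian relation, this gives $\mathsf{E}\widehat{a_{\ell}^{m}}^{2}=0$ for $m\neq 0$, which forces the real and imaginary parts of $\widehat{a_{\ell}^{m}}$ to be uncorrelated with common variance $1/\left( 8\pi n\right) $, and hence asymptotically independent real Gaussians. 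I expect this to be the main obstacle: it is the step that makes the complex limit \emph{circularly symmetric} and converts each $|\widehat{a_{\ell}^{m}}|^{2}$ into a clean multiple of $\chi_{2}^{2}$ rather than a general indefinite quadratic form.

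Finally I would assemble the chi-squared decomposition. Using $\left\vert \widehat{a_{\ell}^{m}}\right\vert ^{2}=\left\vert \widehat{a_{\ell}^{-m}}\right\vert ^{2}$, one has
\begin{equation*}
\sum_{m=-\ell}^{\ell}\left\vert \widehat{a_{\ell}^{m}}\right\vert ^{2}=\left( \widehat{a_{\ell}^{0}}\right) ^{2}+2\sum_{m=1}^{\ell}\left\vert \widehat{a_{\ell}^{m}}\right\vert ^{2}.
\end{equation*}
The estimator $\widehat{a_{\ell}^{0}}$ is real with asymptotic variance $1/\left( 4\pi n\right) $, so $4\pi n\left( \widehat{a_{\ell}^{0}}\right) ^{2}\overset{\mathcal{D}}{\rightarrow}\chi_{1}^{2}$; and for each $m\in\left\{ 1,\ldots,\ell\right\} $, the circularly symmetric limit gives $8\pi n\left\vert \widehat{a_{\ell}^{m}}\right\vert ^{2}\overset{\mathcal{D}}{\rightarrow}\chi_{2}^{2}$. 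Adding the $1+2\ell=2\ell+1$ independent pieces at level $\ell$, using the diagonality established in Step~2 to obtain independence across distinct $\left( \ell,m\right) $ pairs, and summing over $\ell=1,\ldots,L$ delivers the stated $\chi_{L\left( L+2\right) }^{2}$ limit.
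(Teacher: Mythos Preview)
Your proposal is correct and follows essentially the same approach as the paper: diagonalize the covariance under $H_{0}$ via the Clebsch--Gordan identity, split each level $\ell$ into the real $m=0$ term and the paired $\pm m$ terms, and add up $1+2\ell$ real degrees of freedom per level. The one place where you are more careful than the paper is Step~3: the paper simply asserts that the real and imaginary parts of the limiting complex Gaussian have equal variance $1/(8\pi n)$ and are independent, whereas you explicitly verify the pseudo-covariance $\mathsf{E}\widehat{a_{\ell}^{m}}\widehat{a_{\ell}^{m}}=0$ for $m\neq 0$ that underlies this circular symmetry. This extra check is a genuine improvement in rigor but does not change the route.
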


This test contains the Rayleigh test for the special case of $L=1$, and
provides a general framework for testing uniformity.

\section{Inference on Rotational symmetry}

We consider rotational symmetry with respect to a given axis $\underline{%
\widetilde{x}}_{0}=\underline{\widetilde{x}}_{0}\left(
\vartheta_{0},\varphi_{0}\right) $.

\begin{definition}
$f$ is rotational symmetric about an axis $\underline{\widetilde{x}}_{0}$ if
for any rotation $g\in SO\left( 3\right) $, around axis $\underline{%
\widetilde{x}}_{0}$, $\Lambda\left( g\right) f\left( \underline{\widetilde{x}%
}\right) =f\left( \underline{\widetilde{x}}\right) $.
\end{definition}

We show a \textit{necessary and sufficient condition} for rotational
symmetry is that $f$ is a function of $\cos\gamma$ only, i.e. (\ref%
{Density_Rot_Sym}) be satisfied.

\begin{lemma}
$f$ is rotational symmetric about the axis $\underline{\widetilde{x}}_{0}$
if and only if \textbf{\ }$f\left( \underline{\widetilde{x}}\right) =f\left( 
\underline{\widetilde{x}}_{0}\cdot\widetilde{x}\right) =f\left( \cos
\gamma\right) $, where $\gamma$ is the angle between the rotation axis $%
\underline{\widetilde{x}}_{0}$ and the actual point $\underline{\widetilde{x}%
}$, moreover%
\begin{equation}
f\left( \cos\gamma\right) =\sum_{\ell=0}^{\infty}c_{\ell}\frac{2\ell+1}{4\pi}%
P_{\ell}\left( \cos\gamma\right) .  \label{Density_Rot_Sym}
\end{equation}
\end{lemma}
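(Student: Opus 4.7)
The plan is to reduce to the case where $\underline{\widetilde{x}}_0$ is the North pole and then exploit the fact that rotations about the $z$-axis act diagonally on the basis $\{Y_\ell^m\}$. Pick any rotation $g_0 \in SO(3)$ with $g_0 \underline{\widetilde{N}} = \underline{\widetilde{x}}_0$, and set $\widetilde{f}(\underline{\widetilde{x}}) = f(g_0 \underline{\widetilde{x}})$. Then $f$ is rotationally symmetric about $\underline{\widetilde{x}}_0$ iff $\widetilde{f}$ is rotationally symmetric about $\underline{\widetilde{N}}$, and $\underline{\widetilde{x}}_0 \cdot \underline{\widetilde{x}} = \underline{\widetilde{N}} \cdot g_0^{-1}\underline{\widetilde{x}}$, so the statement for a general axis follows from the North-pole case.

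For the North-pole case I would expand $\widetilde{f} = \sum_{\ell,m} a_\ell^m Y_\ell^m$ as in (\ref{Series_f_L}). A rotation $g_\alpha$ about the $z$-axis by angle $\alpha$ sends $(\vartheta,\varphi) \mapsto (\vartheta, \varphi-\alpha)$, and by (\ref{SpheriHarmOrthoNorm}) acts diagonally as $\Lambda(g_\alpha) Y_\ell^m = e^{-im\alpha} Y_\ell^m$ (equivalently, the Wigner matrix in (\ref{MatrixWignerD}) is diagonal for $g_\alpha$). Invariance $\Lambda(g_\alpha)\widetilde{f} = \widetilde{f}$ for every $\alpha$ combined with the uniqueness of the spherical-harmonic coefficients forces $a_\ell^m(e^{-im\alpha}-1) = 0$ for all $\alpha$, hence $a_\ell^m = 0$ whenever $m \neq 0$. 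Therefore
\[
\widetilde{f}(\underline{\widetilde{x}}) = \sum_{\ell=0}^\infty a_\ell^0 Y_\ell^0(\vartheta,\varphi) = \sum_{\ell=0}^\infty a_\ell^0 \sqrt{\tfrac{2\ell+1}{4\pi}}\, P_\ell(\cos\vartheta),
\]
using (\ref{Y2P}). Setting $c_\ell = a_\ell^0 \sqrt{4\pi/(2\ell+1)}$ and noting that for a rotation about $\underline{\widetilde{N}}$ the colatitude equals the angle between $\underline{\widetilde{x}}$ and $\underline{\widetilde{N}}$, this rearranges to the claimed Legendre expansion (\ref{Density_Rot_Sym}).

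The converse is immediate: if $f(\underline{\widetilde{x}}) = h(\underline{\widetilde{x}}_0 \cdot \underline{\widetilde{x}})$ and $g$ fixes $\underline{\widetilde{x}}_0$, then $g^{-1}\underline{\widetilde{x}}_0 = \underline{\widetilde{x}}_0$, so
\[
\Lambda(g) f(\underline{\widetilde{x}}) = h(\underline{\widetilde{x}}_0 \cdot g^{-1}\underline{\widetilde{x}}) = h(g\underline{\widetilde{x}}_0 \cdot \underline{\widetilde{x}}) = h(\underline{\widetilde{x}}_0 \cdot \underline{\widetilde{x}}) = f(\underline{\widetilde{x}}),
\]
and the Legendre expansion of any such $f$ holds by expanding $h$ in Legendre polynomials on $[-1,1]$.

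The only nontrivial point is the ``only if" direction, and specifically the step that invariance under \emph{all} rotations about $\underline{\widetilde{x}}_0$ forces every non-zonal coefficient to vanish. The work there is really just justifying that $\{Y_\ell^m\}_{m=-\ell}^{\ell}$ are linearly independent and that the zonal rotation acts diagonally via $e^{-im\alpha}$; everything else is a clean change of basis and a reparameterization of the constants $a_\ell^0$ into the $c_\ell$ appearing in (\ref{Density_Rot_Sym}).
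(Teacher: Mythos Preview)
Your proof is correct and, for the necessity direction, essentially identical to the paper's: both rotate $\underline{\widetilde{x}}_0$ to the North pole and use that a $z$-axis rotation by $\alpha$ acts diagonally on $Y_\ell^m$ via $e^{\pm im\alpha}$, forcing $a_\ell^m=0$ for $m\neq 0$; the paper concludes by integrating $a_\ell^k = a_\ell^k e^{ik\alpha}$ over $\alpha\in[0,2\pi]$, while you invoke uniqueness of the coefficients, which is the same thing.

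For the converse the two arguments diverge slightly. You give the direct, elementary observation that if $f=h(\underline{\widetilde{x}}_0\cdot\underline{\widetilde{x}})$ and $g$ fixes $\underline{\widetilde{x}}_0$ then $\Lambda(g)f=f$ because rotations preserve inner products, and then expand $h$ in Legendre polynomials. The paper instead applies the Funk--Hecke formula to compute $a_\ell^m = c_\ell Y_\ell^{m\ast}(\underline{\widetilde{x}}_0)$ and then the addition formula~(\ref{FormulaAddition}) to recover the Legendre expansion~(\ref{Density_Rot_Sym}). Your route is shorter and avoids both auxiliary formulas; the paper's route has the side benefit of exhibiting the full spherical-harmonic coefficients $a_\ell^m$ of $f$ explicitly in terms of $c_\ell$ and $\underline{\widetilde{x}}_0$, which is used elsewhere in the paper (see the Remark following the lemma and equation~(\ref{Equation_rot_sym_Spheri})).
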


\begin{proof}
Assume rotational symmetry and rotate $\underline{\widetilde{x}}_{0}$ to the
coordinate $z$-axis by a rotation $g_{0}$. Now, let $f_{0}\left( \underline{%
\widetilde{x}}\right) =\Lambda\left( g_{0}\right) f\left( \underline{%
\widetilde{x}}\right) $. Consider a rotation about the coordinate $z$-axis
(axis of rotation is the North pole $\underline{\widetilde{N}}$) say by an
angle $\alpha$. This transforms the coefficients by (\ref{Coeff_Rotation}),
where $D_{k,m}^{\left( \ell\right) }\left( g\right)
=\delta_{m=k}e^{im\alpha} $ so that 
\begin{equation*}
a_{\ell}^{k}=\sum_{m=-\ell}^{\ell}a_{\ell}^{m}\delta_{m=k}e^{im\alpha}=a_{%
\ell}^{k}e^{ik\alpha},
\end{equation*}
(see \cite{Varshalovich1988}, 4.5.6 (30), p.84). Integrating over $%
\int_{0}^{2\pi}d\alpha$, we have $a_{\ell}^{k}=\delta_{k=0}a_{\ell}^{0}$.
Use $Y_{\ell}^{0}\left( \underline{\widetilde{x}}\right) =\sqrt{\left(
2\ell+1\right) /4\pi}P_{\ell}\left( \cos\vartheta\right) $, then we have 
\begin{align}
f_{0}\left( \underline{\widetilde{x}}\right) & =\sum_{\ell=0}^{\infty
}a_{\ell}^{0}Y_{\ell}^{0}\left( \underline{\widetilde{x}}\right)  \notag \\
& =\sum_{\ell=0}^{\infty}a_{\ell}^{0}\sqrt{\frac{2\ell+1}{4\pi}}P_{\ell
}\left( \cos\vartheta\right) .  \label{Rot_Sym}
\end{align}
Let us denote $g_{\underline{\widetilde{x}}_{0}}=g\left( \underline{\omega }%
;\vartheta_{0},\varphi_{0}\right) $ a rotation given in terms of Euler
angles $\underline{\omega}$ around $\underline{\widetilde{x}}_{0}$ by $%
\underline{\omega}$. Now we rotate back and forward $\underline{\widetilde{x}%
}_{0}$ to the North pole $\underline{\widetilde{N}}$ and get $%
a_{\ell}^{k}=\delta_{k=0}a_{\ell}^{0}$. To guarantee the series expansion by
an orthonormal system defined on $\left[ 0,\pi\right] $, instead of the
whole sphere (see remark below), we put the coefficient 
\begin{equation}
c_{\ell}=\sqrt{\frac{4\pi}{2\ell+1}}a_{\ell}^{0}  \label{c_el}
\end{equation}
in (\ref{Rot_Sym}) and obtain (\ref{Density_Rot_Sym}) where $\gamma$ is the
angle between $\underline{\widetilde{x}}$ and $\underline{\widetilde{x}}_{0}$%
. \newline
To show sufficiency, conversely assume \textbf{\ }$f\left( \underline{%
\widetilde{x}}\right) =f\left( \underline{\widetilde{x}}_{0}\cdot\widetilde{x%
}\right) =f\left( \cos\gamma\right) $. Observe $\cos \gamma=\underline{%
\widetilde{x}}_{0}\cdot\underline{\widetilde{x}}$, and in general if we have
the dependence of $f$ on $\underline{\widetilde{x}}_{0}\cdot\underline{%
\widetilde{x}}$, then the Funk-Hecke formula (see (\ref{Formula_Funk-Hecke}%
)) provides%
\begin{equation}
\int_{\mathbb{S}_{2}}f\left( \underline{\widetilde{x}}_{0}\cdot \underline{%
\widetilde{x}}\right) Y_{\ell}^{m\ast}\left( \underline{\widetilde{x}}%
\right) \Omega\left( d\underline{\widetilde{x}}\right)
=c_{\ell}Y_{\ell}^{m\ast}\left( \underline{\widetilde{x}}_{0}\right) ,
\label{Equation_rot_sym_Spheri}
\end{equation}
and 
\begin{equation*}
c_{\ell}=2\pi\int_{-1}^{1}f\left( y\right) P_{\ell}\left( y\right) dy,
\end{equation*}
i.e. $a_{\ell}^{m}=c_{\ell}Y_{\ell}^{m\ast}$, and 
\begin{align*}
f\left( \underline{\widetilde{x}}\right) &
=\sum_{\ell=0}^{\infty}\sum_{m=-\ell}^{\ell}a_{\ell}^{m}Y_{\ell}^{m}\left( 
\underline{\widetilde{x}}\right) \\
& =\sum_{\ell=0}^{\infty}c_{\ell}\sum_{m=-\ell}^{\ell}Y_{\ell}^{m}\left( 
\underline{\widetilde{x}}\right) Y_{\ell}^{m\ast}\left( \underline{%
\widetilde{x}}_{0}\right) \\
& =\sum_{\ell=0}^{\infty}c_{\ell}\frac{2\ell+1}{4\pi}P_{\ell}\left(
\cos\gamma\right) ,
\end{align*}
(see formula (\ref{FormulaAddition})). Now, if $f\left( \underline{%
\widetilde{x}}\cdot\underline{\widetilde{\mu}}\right) $ is a density of the
form (\ref{Density_Rot_Sym}) then 
\begin{equation*}
\int_{\mathbb{S}_{2}}f\left( \underline{\widetilde{x}}\cdot \underline{%
\widetilde{\mu}}\right) \Omega\left( d\underline{\widetilde{x}}\right) =%
\frac{c_{0}}{4\pi}4\pi=1,
\end{equation*}
hence $c_{0}=1$.
\end{proof}

\begin{remark}
Note here if $\underline{\widetilde{x}}_{0}$ \textbf{is not the North} pole,
then series expansion (\ref{Equation_rot_sym_Spheri}) is orthogonal in terms
of Legendre polynomials $P_{\ell}$, but \textbf{not } an orthogonal one in
the sense of (\ref{Series_f_L}) (see (\ref{Equation_rot_sym_Spheri})).
Notice if $\underline{\widetilde{x}}_{0}$ is not the North pole then 
\begin{equation*}
\sqrt{\frac{2\ell+1}{4\pi}}P_{\ell}\left( \cos\gamma\right) \neq Y_{\ell
}^{0}\left( \underline{\widetilde{x}}\right) =\sqrt{\frac{2\ell+1}{4\pi}}%
P_{\ell}\left( \cos\vartheta\right) .
\end{equation*}
For instance consider the density 
\begin{equation}
f\left( \underline{\widetilde{x}};\ \gamma\right) \cong e^{\gamma \widetilde{%
x}_{2}^{2}},  \label{Dens_Bingh_1}
\end{equation}
from the $\mathbf{GFB}$ family, where $\underline{\widetilde{x}}=\left( 
\widetilde{x}_{1},\widetilde{x}_{2},\widetilde{x}_{3}\right) ^{\top}$, hence 
\begin{equation*}
\widetilde{x}_{2}^{2}=-\sqrt{\frac{4\pi}{15}}Y_{2,2}-\frac{2}{3}\sqrt {\frac{%
\pi}{5}}Y_{2,0}+\frac{\sqrt{4\pi}}{3}Y_{0,0},
\end{equation*}
see (\ref{matrix_xx}), which depends not only on $\vartheta$ but on $\varphi$
as well. Now, let $\underline{\widetilde{y}}=\left( 0,1,0\right) ^{\top}$
the $y-$axis, therefore we have $\widetilde{x}_{2}^{2}=\left( \underline{%
\widetilde{y}}\cdot\underline{\widetilde{x}}\right) ^{2}=\cos ^{2}\gamma$,
and $f$ is rotationally symmetric with $\underline{\widetilde{x}}_{0}=%
\underline{\widetilde{y}}$.
\end{remark}

\subsubsection{Mean direction for a rotationally symmetric distribution.}

Writing the rotation matrix $M_{\mu}$ in terms of Euler angles such that $%
M_{\mu}\underline{\widetilde{\mu}}=\underline{\widetilde{N}}$, and putting $%
\underline{\widetilde{x}}=M_{\mu}^{-1}\underline{\widetilde{y}}$ 
\begin{align*}
\mathsf{E}\widetilde{\underline{X}} & =\int_{\mathbb{S}_{2}}\underline{%
\widetilde{x}}f\left( \underline{\widetilde{x}}\cdot \underline{\widetilde{%
\mu}}\right) \Omega\left( d\underline{\widetilde{x}}\right) \\
& =M_{\mu}^{-1}\int_{\mathbb{S}_{2}}\underline{\widetilde{y}}f\left( 
\underline{\widetilde{N}}\cdot\underline{\widetilde{y}}\right) \Omega\left( d%
\underline{\widetilde{y}}\right) \\
& =c_{1}M_{\mu}^{-1}\underline{\widetilde{N}}=c_{1}\underline{\widetilde{\mu 
}},
\end{align*}
since rotational axis of $f\left( \underline{\widetilde{N}}\cdot \underline{%
\widetilde{y}}\right) $ is $\underline{\widetilde{N}}$, $Y_{\ell
}^{0}=const.P_{\ell}$, and by (\ref{coord2Sph}) we have 
\begin{align*}
\int_{\mathbb{S}_{2}}\widetilde{y}_{k}P_{\ell}\left( \widetilde{y}%
_{3}\right) \Omega\left( d\underline{\widetilde{y}}\right) & =0,\quad k=1,2
\\
\int_{\mathbb{S}_{2}}\widetilde{y}_{3}P_{\ell}\left( \widetilde{y}%
_{3}\right) \Omega\left( d\underline{\widetilde{y}}\right) &
=\delta_{\ell=1}\int_{\mathbb{S}_{2}}P_{1}^{2}\left( \widetilde{y}\right)
\Omega\left( d\underline{\widetilde{y}}\right) \\
& =\delta_{\ell=1}\frac{4\pi}{3}.
\end{align*}
We see that the mean direction is $c_{1}\underline{\widetilde{\mu}}$, since
the rotational axis $\underline{\widetilde{\mu}}$ is a unit vector the
resultant is $R=\left\vert c_{1}\right\vert $.

\begin{example}[Ex. \protect\ref{Ex_vMF} contd.]
\textit{von Mises-Fisher} distribution (see (\ref{Density_Fisher})) has mean
direction%
\begin{equation*}
\mathsf{E}\widetilde{\underline{X}}=\frac{I_{3/2}\left( \kappa \right) }{%
I_{1/2}\left( \kappa \right) }\underline{\widetilde{\mu }}=\left( \coth
\left( \kappa \right) -\frac{1}{\kappa }\right) \underline{\widetilde{\mu }}.
\end{equation*}
\end{example}

\begin{example}[Ex. \protect\ref{Ex_BM} contd.]
Brownian Motion distribution (\ref{dens_BM}) has mean $\underline{\mu }=R%
\underline{\widetilde{\mu }}$, with resultant $R=e^{-1/2\zeta }$.
\end{example}

\begin{example}[Ex. \protect\ref{EX_Ylm_compl_suqare} contd.]
$\left\vert Y_{\ell }^{m}\left( \underline{\widetilde{x}}\right) \right\vert
^{2}$ is written in Clebsch-Gordan series, in terms of spherical harmonics
according to (\ref{Series_f_L}), (see (\ref{Ylm_abs_suqare})) 
\begin{equation}
\left\vert Y_{\ell }^{m}\left( \underline{\widetilde{x}}\right) \right\vert
^{2}=\left( -1\right) ^{m}\frac{2\ell +1}{4\pi }\sum_{h=0}^{\ell }C_{\ell
,0;\ell ,0}^{2h,0}C_{\ell ,m;\ell ,-m}^{2h,0}P_{2h}\left( \cos \vartheta
\right) ,  \label{Density_SphH}
\end{equation}%
therefore $\left\vert Y_{\ell }^{m}\left( \underline{\widetilde{x}}\right)
\right\vert ^{2}$ is rotational symmetric with axis $\underline{\widetilde{N}%
}$, and the resultant is $0$.
\end{example}

\subsection{Moment of Inertia for a rotationally symmetric distribution.}

As in the previous case \textbf{\ }%
\begin{align*}
\mathsf{E}\widetilde{\underline{X}}\widetilde{\underline{X}}^{\top} & =\int_{%
\mathbb{S}_{2}}\underline{\widetilde{x}}\underline{\widetilde{x}}%
^{\top}f\left( \underline{\widetilde{x}}\cdot\underline{\widetilde{\mu}}%
\right) \Omega\left( d\underline{\widetilde{x}}\right) \\
& =M_{\mu}^{-1}\int_{\mathbb{S}_{2}}^{\top}\underline{\widetilde{y}}%
\underline{\widetilde{y}}^{\top}f\left( \underline{\widetilde{N}}\cdot%
\underline{\widetilde{y}}\right) \Omega\left( d\underline{\widetilde{y}}%
\right) M_{\mu}
\end{align*}
Put $c_{\ell}\ $by (\ref{c_el}), in (\ref{Var_Cov_matrix}), then the inertia
matrix is 
\begin{equation*}
\mathsf{E}\widetilde{\underline{X}}\widetilde{\underline{X}}^{\top}=M_{\mu
}^{-1}%
\begin{bmatrix}
-\frac{1}{3}c_{2}+\frac{1}{3} & 0 & 0 \\ 
0 & -\frac{1}{3}c_{2}+\frac{1}{3} & 0 \\ 
0 & 0 & \frac{2}{3}c_{2}+\frac{1}{3}%
\end{bmatrix}
M_{\mu}.
\end{equation*}

\begin{example}[\textbf{Ex. \protect\ref{Example_DW} contd.}]
Dimroth-Watson distribution (see (\ref{Distr_DW})and (\ref{c_el_DW})) has
moment of inertia with 
\begin{equation*}
c_{2}=\frac{2\pi}{M\left( 1/2,3/2,\gamma\right) }\int_{-1}^{1}\exp\left(
\gamma y^{2}\right) P_{\ell}\left( y\right) dy. 
\end{equation*}
\end{example}

\begin{example}[\textbf{Ex. \protect\ref{Ex_BM} contd.}]
Brownian Motion distribution (\ref{dens_BM}) has moment of inertia with $%
c_{2}=e^{-3/2\zeta}$
\end{example}

\subsubsection{Estimation for rotationally symmetric distributions}

Let us consider the estimation of coefficients $c_{\ell}$ of rotational
symmetric density%
\begin{equation*}
f\left( \cos\vartheta\right) =\sum_{\ell=0}^{\infty}c_{\ell}\frac{2\ell +1}{%
4\pi}P_{\ell}\left( \cos\vartheta\right) .
\end{equation*}
It is straightforward considering 
\begin{equation*}
\widehat{c_{\ell}}=\frac{1}{n}\sum_{k=1}^{n}P_{\ell}\left( \widetilde{x}%
_{3,k}\right)
\end{equation*}
where $\widetilde{x}_{3,k}=\cos\vartheta_{k}$, is the third component of
sample $\underline{\widetilde{x}}_{k}$. Thus $\mathsf{E}\widehat{c_{\ell}}%
=c_{\ell}$, and for deriving the variance of $\widehat{c_{\ell}}$ we need an
expression for $P_{\ell}\left( \widetilde{x}_{3}\right) ^{2}$, we have%
\begin{align*}
P_{\ell}\left( \widetilde{x}_{3}\right) ^{2} & =\frac{4\pi}{2\ell +1}%
Y_{\ell}^{0}\left( \widetilde{x}_{3}\right) ^{2} \\
& =\sum_{0\leq h\leq\ell}\sqrt{\frac{4\pi}{4h+1}}\left( C_{\ell,0;\ell
,0}^{2h,0}\right) ^{2}Y_{2h}^{0}\left( \widetilde{x}_{3}\right) \\
& =\sum_{0\leq h\leq\ell}\left( C_{\ell,0;\ell,0}^{2h,0}\right)
^{2}P_{2h}\left( \widetilde{x}_{3}\right)
\end{align*}
compare to (\ref{E_Prod_Y_rot})%
\begin{align*}
n\limfunc{Var}\widehat{c_{\ell}} & =\limfunc{Var}P_{\ell}\left( \widetilde{X}%
_{3}\right) =\mathsf{E}P_{\ell}\left( \widetilde{X}_{3}\right)
^{2}-c_{\ell}^{2} \\
& =\sum_{0\leq h\leq\ell}\left( C_{\ell,0;\ell,0}^{2h,0}\right)
^{2}c_{2h}-c_{\ell}^{2}.
\end{align*}

\begin{example}
We consider the Dimroth-Watson Distribution ((\ref{Density_Watson})) with
parameter $\gamma=2$ and $\underline{\widetilde{\mu}}=\underline{\widetilde{N%
}}$. The coefficients $c_{\ell}$ are calculated according to Example \ref%
{Example_DW}. The random sample $\underline{\widetilde{x}}_{1}$, $\underline{%
\widetilde{x}}_{2}$, \ldots, $\underline{\widetilde{x}}_{n}$, $n=2^{12}$ was
simulated using MATLAB package 3D\_Directional\_SSV see \cite{TGy_RSJ_WB2017}
for details. The following table shows the calculated and estimated values
for he first four coefficients
\end{example}

\begin{center}
\begin{tabular}{|l|l|l|l|l|}
\hline
$\ell$ & $1$ & $2$ & $3$ & $4$ \\ \hline
$c_{\ell}$ & $0.0000$ & $0.2969$ & $0.0000$ & $0.0576$ \\ \hline
$\widehat{c_{\ell}}$ & $0.0052$ & $0.2982$ & $0.0042$ & $0.0596$ \\ \hline
$\limfunc{std}\left( P_{\ell}\left( \widetilde{X}_{3}\right) \right) $ & $%
0.7366$ & $0.4778$ & $0.4655$ & $0.4024$ \\ \hline
$\widehat{\limfunc{std}\left( P_{\ell}\left( \widetilde{x}_{3}\right)
\right) }$ & $0.7295$ & $0.4763$ & $0.4664$ & $0.4030$ \\ \hline
\end{tabular}
\end{center}

\subsection{Testing for rotational symmetry}

We now consider testing the null hypothesis that the data comes from a
distribution which is rotational symmetric around a \textit{given} axis,
which we can assume without loss of generality, is the North pole (by
rotating the specified axis to the North pole).

\begin{center}
$H_{0}$: $\underline{\widetilde{X}}$ is rotational symmetric around the
North pole $\underline{\widetilde{N}}$.
\end{center}

Under this hypothesis $f$ has the form (\ref{Density_Rot_Sym}). In other
words 
\begin{equation*}
\mathsf{E}Y_{\ell}^{m\ast}\left( \underline{\widetilde{X}}\right) =c_{\ell
}\delta_{m=0}\sqrt{\frac{2\ell+1}{4\pi}}.
\end{equation*}
Now, based on the observations $\underline{x}_{1}\left(
\vartheta_{1},\varphi_{1}\right) $, $\underline{x}_{2}\left(
\vartheta_{2},\varphi _{2}\right) $, \ldots, $\underline{x}_{n}\left(
\vartheta_{n},\varphi _{n}\right) $, we have that $\widehat{a_{\ell}^{m}}$
is unbiased $\mathsf{E}\widehat{a_{\ell}^{m}}=a_{\ell}^{m}$, and 
\begin{equation*}
\mathsf{E}\widehat{a_{\ell}^{m}}=\delta_{m=0}c_{\ell}\sqrt{\frac{2\ell+1}{%
4\pi}}.
\end{equation*}

For the variance, appealing again to the Clebsch-Gordan series (\ref%
{Exp_Prod Y_gen}), $m_{1}m_{2}\neq0$, we repeat $C\left(
\ell_{1},m_{1};\ell_{2},m_{2}\right) =\mathsf{E}Y_{\ell_{1}}^{m_{1}}\left( 
\underline{\widetilde{x}}_{j}\right) Y_{\ell_{2}}^{m_{2}\ast}\left( 
\underline{\widetilde{x}}_{j}\right) $, so 
\begin{align}
C\left( \ell_{1},m_{1};\ell_{2},m_{2}\right) & =\left( -1\right)
^{m_{2}}\sum_{\left\vert \ell_{1}-\ell_{2}\right\vert \leq k\boldsymbol{\leq 
}\ell_{1}+\ell_{2}}\sqrt{\frac{\left( 2\ell_{1}+1\right) \left( 2\ell
_{2}+1\right) }{4\pi\left( 2k+1\right) }}C_{\ell_{1},0;\ell_{2},0}^{k,0}C_{%
\ell_{1},m_{1};\ell_{2},-m_{2}}^{k,0}\sqrt{\frac{2k+1}{4\pi}}c_{k},  \notag
\\
& =\frac{\delta_{m_{1}=m_{2}}\left( -1\right) ^{m_{1}}}{4\pi}\sqrt{\left(
2\ell_{1}+1\right) \left( 2\ell_{2}+1\right) }\sum_{\left\vert \ell
_{1}-\ell_{2}\right\vert \leq
k\leq\ell_{1}+\ell_{2}}C_{\ell_{1},0;\ell_{2},0}^{k,0}C_{\ell_{1},m_{1};%
\ell_{2},-m_{1}}^{k,0}c_{k},  \label{E_Prod_Y_rot}
\end{align}
since $C_{\ell_{1},m_{1};\ell_{2},-m_{2}}^{k,0}=0$, unless $m_{1}-m_{2}=0$,
8,7,1 (2) p. 259. then for $m_{1}m_{2}\neq0$ 
\begin{equation*}
\limfunc{Cov}\left( Y_{\ell_{1}}^{m_{1}}\left( \underline{\widetilde{x}}%
_{j}\right) ,Y_{\ell_{2}}^{m_{2}\ast}\left( \underline{\widetilde{x}}%
_{j}\right) \right) =\delta_{m_{1}=m_{2}}C\left(
\ell_{1},m_{1};\ell_{2},m_{1}\right) .
\end{equation*}

Now we introduce the vector $\widehat{\underline{B}_{L}}\left( n\right) $ of
estimated coefficients, $\widehat{\underline{B}_{L}}\left( n\right) =\left[ 
\widehat{\underline{a}^{1}}\left( n\right) ,\widehat{\underline{a}^{2}}%
\left( n\right) ,\ldots,\widehat{\underline{a}^{L}}\left( n\right) \right]
^{\top}$, where $\widehat{\underline{a}^{m}}\left( n\right) =\left[ \widehat{%
a_{m}^{m}},\widehat{a_{m+1}^{m}},\ldots,\widehat{a_{L}^{m}}\right] ^{\top}$
with dimension $L-m+1$, are the coefficients with order $m$. $\widehat{%
\underline{B}_{L}}\left( n\right) $ is with dimension $L\left( L+1\right) /2$%
, it is complex valued. We have $\mathsf{E}\widehat{\underline{B}_{L}}\left(
n\right) =0$, and the covariance matrix of $\widehat{\underline{B}_{L}}%
\left( n\right) $, denote it $\mathcal{C}_{B,L}=\limfunc{Cov}\left( \widehat{%
\underline{B}_{L}}\left( n\right) ,\widehat{\underline{B}_{L}}\left(
n\right) \right) $ is block diagonal with blocks $\limfunc{Cov}\left( 
\widehat{\underline{a}^{m}}\left( n\right) ,\widehat{\underline{a}^{m}}%
\left( n\right) \right) $. The entries of $\limfunc{Cov}\left( \widehat{%
\underline{a}^{m}}\left( n\right) ,\widehat{\underline{a}^{m}}\left(
n\right) \right) $ are given in terms of $C\left(
\ell_{1},m;\ell_{2},m\right) $.

\begin{lemma}
If $\underline{\widetilde{X}}$ is rotational symmetric then $2n\widehat{%
\underline{B}_{L}}\left( n\right) ^{\ast}\mathcal{C}_{B,L}^{-1}\widehat{%
\underline{B}_{L}}\left( n\right) $ is $\chi_{L\left( L+1\right) }^{2}$
distributed with $L\left( L+1\right) $ degrees of freedom.
\end{lemma}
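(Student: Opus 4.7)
The plan is to combine the multivariate central limit theorem with the standard quadratic-form identity for proper complex Gaussian vectors. The argument would proceed in four steps.

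First, I would apply Theorem \ref{Theor_AssNormality} restricted to the sub-vector $\widehat{\underline{B}_{L}}(n)$, which carries only the coefficients with positive order $m$. Under $H_{0}$ the density has the form (\ref{Density_Rot_Sym}), so by the preceding lemma $a_{\ell}^{m} = 0$ for every $m \neq 0$. Hence $\mathsf{E}\widehat{\underline{B}_{L}}(n) = 0$, and $\sqrt{n}\,\widehat{\underline{B}_{L}}(n)$ converges in law to a centered complex Gaussian whose covariance is given (entrywise) by formula (\ref{E_Prod_Y_rot}).

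Second, I would exploit the rotational invariance about $\underline{\widetilde{N}}$ to establish the block-diagonal structure of $\mathcal{C}_{B,L}$. Under $H_{0}$, the longitude $\varphi$ is uniform on $[0,2\pi]$ conditionally on $\vartheta$. Since $Y_{\ell}^{m}$ carries the factor $e^{im\varphi}$, the quantity $\mathsf{E}\,Y_{\ell_{1}}^{m_{1}*}(\underline{\widetilde{X}})Y_{\ell_{2}}^{m_{2}}(\underline{\widetilde{X}})$ contains $e^{-i(m_{1}-m_{2})\varphi}$ and therefore vanishes unless $m_{1}=m_{2}$. This is precisely what gives the block structure already indicated in the text preceding the lemma, with blocks indexed by $m = 1,\dots,L$ of sizes $L, L-1, \dots, 1$.

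Third, I would verify \emph{properness} (circular symmetry) of the limiting Gaussian, i.e., that the pseudocovariance $\mathsf{E}(\widehat{a_{\ell}^{m}}\,\widehat{a_{\ell'}^{m'}})$ (without conjugation) also vanishes for all $m,m' \geq 1$. The same longitude argument applies: the relevant $\varphi$-dependence is $e^{-i(m+m')\varphi}$, and because $m+m' \geq 2 > 0$ the $\varphi$-integral kills it. Properness is the key structural ingredient that makes the real and imaginary parts of the limiting Gaussian share a common covariance structure and permits the clean $\chi^{2}$ degree count.

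Finally, I would invoke the standard identity: if $Z \sim \mathcal{CN}(0,\Sigma)$ is proper with complex dimension $k$, then $2\,Z^{\ast}\Sigma^{-1}Z \sim \chi^{2}_{2k}$. The complex dimension of $\widehat{\underline{B}_{L}}$ is $\sum_{m=1}^{L}(L-m+1)=L(L+1)/2$, so $2k=L(L+1)$. Setting $Z_{n}=\sqrt{n}\,\widehat{\underline{B}_{L}}(n)$ and tracking the scaling between the per-sample covariance appearing in (\ref{E_Prod_Y_rot}) and the estimator covariance $\mathcal{C}_{B,L}$ produces the stated convergence; the continuous mapping theorem transfers the $\chi^{2}$ law through the quadratic form. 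I expect the \emph{properness} step to be the main obstacle: the asymptotic normality and block-diagonality follow routinely from the CLT and the Funk--Hecke/Fourier orthogonality arguments already established, but without properness the limiting quadratic form would only be a generalized (weighted) chi-square rather than the clean $\chi^{2}_{L(L+1)}$ claimed.
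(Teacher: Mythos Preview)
Your proposal is correct and follows essentially the same route as the paper: the paper establishes $\mathsf{E}\widehat{\underline{B}_{L}}(n)=0$ under rotational symmetry, derives the block-diagonal covariance via (\ref{E_Prod_Y_rot}) and the vanishing of $C_{\ell_1,m_1;\ell_2,-m_2}^{k,0}$ for $m_1\neq m_2$, and then states the lemma as an immediate consequence of the asymptotic complex normality in Theorem~\ref{Theor_AssNormality}. Your explicit verification of properness (the pseudo-covariance vanishing because $m+m'\geq 2$) is a point the paper leaves implicit, and it is indeed the ingredient that justifies the exact degree count $L(L+1)$ rather than a generalized chi-square; this makes your argument slightly more complete than the paper's own treatment.
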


In particular, we have for $m\neq0$, 
\begin{align*}
\mathsf{E}\left\vert \widehat{a_{\ell}^{m}}\right\vert ^{2} & =\frac {1}{%
n^{2}}\sum_{k,j=1}^{n}\mathsf{E}Y_{\ell}^{m\ast}\left( \underline{\widetilde{%
x}}_{k}\right) Y_{\ell}^{m}\left( \underline{\widetilde{x}}_{j}\right) \\
& =\frac{1}{n}C\left( \ell,m;\ell,m\right) ,
\end{align*}
and 
\begin{align*}
C\left( \ell,\ell;\ell,\ell\right) & =n\mathsf{E}\left\vert \widehat{%
a_{\ell}^{\ell}}\right\vert ^{2}=n\mathsf{E}Y_{\ell}^{\ell\ast }\left( 
\underline{\widetilde{x}}_{j}\right) Y_{\ell}^{\ell}\left( \underline{%
\widetilde{x}}_{j}\right) \\
& =\frac{\left( -1\right) ^{\ell}}{4\pi}\left( 2\ell+1\right) \sum_{0\leq k%
\boldsymbol{\leq}\ell}C_{\ell,0;\ell,0}^{2k,0}C_{\ell,\ell;\ell,-\ell}^{2k,0}
\end{align*}
see 8.5.2 a (32), b. (36) p. 251. Hence $\widehat{a_{\ell}^{\ell}}$ are
uncorrelated $\ $and small, we reject the null hypothesis if 
\begin{equation*}
2n\sum_{\ell=1}^{L}\frac{\left\vert \widehat{a_{\ell}^{\ell}}\right\vert ^{2}%
}{C\left( \ell,\ell;\ell\right) }\sim\chi_{2L}^{2},
\end{equation*}
is sufficiently large. In this way we simplify the statistic at the cost of
smaller degrees of freedom.

\section{Other forms of symmetry}

\subsection{Axial or antipodal symmetry}

Antipodal or axial symmetry refers to the density being the same at opposite
ends of the diameter i.e. the density at the points $\underline{\widetilde{x}%
}\left( \vartheta,\varphi\right) $ and $\underline{\widetilde{x}}\left(
\pi-\vartheta,\varphi+\pi\right) $ match (see \cite{bingham1974antipodally}%
). Without loss of generality, we may assume that the normal is the North
pole, because otherwise one can rotate the normal to the North pole.

\begin{definition}
$f$ is axially / antipodally symmetric, if $f\left( \underline{\widetilde{x}}%
\right) =f\left( -\underline{\widetilde{x}}\right) $ for all $\underline{%
\widetilde{x}}\in\mathbb{S}_{2}$.
\end{definition}

For the \textquotedblleft inversion" $\underline{\widetilde{x}}\left(
\vartheta,\varphi\right) \rightarrow-\underline{\widetilde{x}}\left(
\pi-\vartheta,\pi+\varphi\right) $, we have 
\begin{equation}
Y_{\ell}^{m}\left( -\underline{\widetilde{x}}\right) =\left( -1\right)
^{\ell}Y_{\ell}^{m}\left( \underline{\widetilde{x}}\right) .
\label{FormulaInversion}
\end{equation}
If $f\left( \underline{\widetilde{x}}\right) =f\left( -\underline{\widetilde{%
x}}\right) $, then 
\begin{align*}
f\left( \underline{\widetilde{x}}\right) &
=\sum_{\ell=0}^{\infty}\sum_{m=-\ell}^{\ell}a_{\ell}^{m}Y_{\ell}^{m}\left( 
\underline{\widetilde{x}}\right) \\
& =f\left( -\underline{\widetilde{x}}\right) \\
& =\sum_{\ell=0}^{\infty}\sum_{m=-\ell}^{\ell}\left( -1\right) ^{\ell
}a_{\ell}^{m}Y_{\ell}^{m}\left( \underline{\widetilde{x}}\right) ,
\end{align*}
see (\ref{FormulaInversion}), therefore $a_{2\ell+1}^{m}=-a_{2\ell+1}^{m}$,
hence $a_{2\ell+1}^{m}=0$, for all $m$ and $\ell$. We have the following

\begin{lemma}
$f$ is axially symmetric if and only if for all $\ell$ and $m$, $a_{2\ell
+1}^{m}=0$, in this case 
\begin{equation*}
f\left( \underline{\widetilde{x}}\right)
=\sum_{\ell=0}^{\infty}\sum_{m=-2\ell}^{2\ell}a_{2\ell}^{m}Y_{2\ell}^{m}%
\left( \underline{\widetilde{x}}\right) .
\end{equation*}
\end{lemma}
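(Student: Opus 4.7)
The plan is to exploit the parity relation (\ref{FormulaInversion}), namely $Y_{\ell}^{m}(-\underline{\widetilde{x}})=(-1)^{\ell}Y_{\ell}^{m}(\underline{\widetilde{x}})$, together with the uniqueness of the spherical harmonic expansion guaranteed by Theorem 1 (orthonormality of $\{Y_{\ell}^{m}\}$ on $\mathbb{S}_{2}$). Both directions of the equivalence are essentially one-line consequences, so the proof is more bookkeeping than obstacle; the only subtlety is to justify the term-by-term matching of coefficients.

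For the necessity direction, I would assume $f(\underline{\widetilde{x}})=f(-\underline{\widetilde{x}})$ and write out both sides using (\ref{Series_f_L}). Applying (\ref{FormulaInversion}) to the right-hand side gives
\begin{equation*}
\sum_{\ell=0}^{\infty}\sum_{m=-\ell}^{\ell}a_{\ell}^{m}Y_{\ell}^{m}(\underline{\widetilde{x}})=\sum_{\ell=0}^{\infty}\sum_{m=-\ell}^{\ell}(-1)^{\ell}a_{\ell}^{m}Y_{\ell}^{m}(\underline{\widetilde{x}}).
\end{equation*}
By the uniqueness of the expansion (equivalently, integrate both sides against $Y_{\ell}^{m\ast}$ and use (\ref{a_l_m}) together with orthonormality), we obtain $a_{\ell}^{m}=(-1)^{\ell}a_{\ell}^{m}$ for every $\ell,m$. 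For odd $\ell=2k+1$ this yields $a_{2k+1}^{m}=-a_{2k+1}^{m}$, and hence $a_{2k+1}^{m}=0$.

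For sufficiency, assuming $a_{2k+1}^{m}=0$ for all $k,m$, only even-degree terms survive in (\ref{Series_f_L}), and each satisfies $Y_{2k}^{m}(-\underline{\widetilde{x}})=(-1)^{2k}Y_{2k}^{m}(\underline{\widetilde{x}})=Y_{2k}^{m}(\underline{\widetilde{x}})$ by (\ref{FormulaInversion}). Substituting $-\underline{\widetilde{x}}$ into the reduced series therefore gives $f(-\underline{\widetilde{x}})=f(\underline{\widetilde{x}})$, establishing axial symmetry and confirming the displayed representation in the lemma statement. If any step feels delicate, it is the appeal to uniqueness of coefficients, but the uniform convergence of (\ref{Series_f_L}) allows termwise integration against $Y_{\ell}^{m\ast}$, so this is standard.
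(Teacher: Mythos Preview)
Your proof is correct and follows essentially the same route as the paper: both use the parity relation (\ref{FormulaInversion}) in the expansion (\ref{Series_f_L}) and compare coefficients to force $a_{2\ell+1}^{m}=0$. If anything, you are slightly more thorough, since the paper only writes out the necessity direction explicitly and leaves sufficiency implicit, whereas you spell out both.
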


In this case, the mean direction is undefined $\mathsf{E}\underline{%
\widetilde{X}}=0$.

\begin{example}[\textbf{Ex. \protect\ref{EX_Ylm_compl_suqare}, \protect\ref%
{EX_Ylm_real_suqare} contd.}]
Both $\left\vert Y_{\ell}^{m}\right\vert ^{2}$ (see (\ref{Ylm_abs_suqare}))
and $Y_{\ell,m}^{2}$ (see (\ref{Ylm_real_suqare}) are axially / antipodally
symmetric.
\end{example}

\subsubsection{Testing for Axial symmetry}

In this case as we have seen before, the mean direction $\mathsf{E}%
\underline{\widetilde{X}}=0$. For an observation vector $\underline{%
\widetilde{x}}_{1}\left( \vartheta_{1},\varphi_{1}\right) $, $\underline{%
\widetilde{x}}_{2}\left( \vartheta_{2},\varphi_{2}\right) $, \ldots, $%
\underline{\widetilde{x}}_{n}\left( \vartheta_{n},\varphi _{n}\right) $, we
have that $\widehat{a_{\ell}^{m}}$ is unbiased $\mathsf{E}\widehat{%
a_{\ell}^{m}}=a_{\ell}^{m}$, therefore 
\begin{equation*}
\mathsf{E}\widehat{a_{2\ell}^{m}}=a_{2\ell}^{m},\;\mathsf{E}\widehat{%
a_{2\ell +1}^{m}}=0.
\end{equation*}
Hence all $\widehat{a_{2\ell+1}^{m}}$ are small. We can calculate the
covariance matrix for estimators $\widehat{a_{2\ell+1}^{m}}$, consider first 
\begin{align*}
& \mathsf{E}Y_{2\ell_{1}+1}^{m_{1}}\left( \underline{\widetilde{X}}\right)
Y_{2\ell_{2}+1}^{m_{2}}\left( \underline{\widetilde{X}}\right) ^{\ast} \\
& =\left( -1\right) ^{m_{2}}\sum_{2\left\vert \ell_{1}-\ell_{2}\right\vert
\leq h\leq2\ell_{1}+2\ell_{2}+2}\sqrt{\frac{\left( 4\ell_{1}+3\right) \left(
4\ell_{2}+3\right) }{4\pi\left( 2h+1\right) }}C_{2\ell
_{1}+1,0;2\ell_{2}+1,0}^{h,0}C_{2\ell_{1}+1,m_{1};2%
\ell_{2}+1,-m_{2}}^{h,m_{1}-m_{2}}a_{h}^{m_{1}-m_{2}},
\end{align*}
by (\ref{Exp_Prod Y_gen}). Since $2\ell_{1}+2\ell_{2}+2$ is even, so $h$
must be even as well%
\begin{align*}
& \mathsf{E}Y_{2\ell_{1}+1}^{m_{1}}\left( \underline{\widetilde{X}}\right)
Y_{2\ell_{2}+1}^{m_{2}}\left( \underline{\widetilde{X}}\right) ^{\ast} \\
& =\left( -1\right) ^{m_{2}}\sum_{\left\vert \ell_{1}-\ell_{2}\right\vert
\leq h\leq\ell_{1}+\ell_{2}+1}\sqrt{\frac{\left( 4\ell_{1}+3\right) \left(
4\ell_{2}+3\right) }{4\pi\left( 4h+3\right) }}C_{2\ell_{1}+1,0;2\ell
_{2}+1,0}^{2h,0}C_{2\ell_{1}+1,m_{1};2%
\ell_{2}+1,-m_{2}}^{2h,m_{1}-m_{2}}a_{2h}^{m_{1}-m_{2}}.
\end{align*}

Let $\widehat{\underline{A}_{L}}\left( n\right) =\left[ \widehat{\underline{a%
}_{1}}\left( n\right) ,\widehat{\underline{a}_{3}}\left( n\right) ,\ldots,%
\widehat{\underline{a}_{2L+1}}\left( n\right) \right] ^{\top}$, where $%
\widehat{\underline{a}_{2\ell+1}}\left( n\right) =\left[ \widehat{%
a_{2\ell+1}^{-2\ell-1}},\widehat{a_{2\ell+1}^{-2\ell}},\ldots,\widehat{%
a_{2\ell+1}^{2\ell+1}}\right] $, and the covariance matrix $\mathcal{C}$ of $%
\widehat{\underline{A}_{L}}\left( n\right) $ is given in terms $\mathsf{E}%
Y_{2\ell_{1}+1}^{m_{1}}\left( \underline{\widetilde{X}}\right)
Y_{2\ell_{2}+1}^{m_{2}}\left( \underline{\widetilde{X}}\right) ^{\ast}$.

\begin{lemma}
If $\underline{\widetilde{X}}$ is axially symmetric then $2n\widehat{%
\underline{A}_{L}}\left( n\right) ^{\ast}\mathcal{C}^{-1}\widehat{\underline{%
A}_{L}}\left( n\right) $ has a $\chi_{\left( L+2\right) L}^{2}$ distribution
for large $n$.
\end{lemma}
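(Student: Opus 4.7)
The plan is to prove this the same way as the earlier omnibus tests: establish asymptotic normality of $\sqrt{n}\,\widehat{\underline{A}_L}(n)$ by the multivariate CLT, then apply the continuous mapping theorem to the quadratic form $2n\,\widehat{\underline{A}_L}(n)^{\ast}\mathcal{C}^{-1}\widehat{\underline{A}_L}(n)$ and identify its limit via the standard identity for Gaussian quadratic forms.

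First, under the axial-symmetry null every odd-degree coefficient $a_{2\ell+1}^m$ vanishes, so each coordinate $\widehat{a_{2\ell+1}^m}=\tfrac{1}{n}\sum_{k=1}^{n}Y_{2\ell+1}^{m\ast}(\underline{\widetilde{x}}_{k})$ is an average of iid bounded random variables with mean zero. The multivariate CLT then gives $\sqrt{n}\,\widehat{\underline{A}_L}(n)\overset{\mathcal{D}}{\to}Z$ with $Z\sim\mathcal{CN}(0,\mathcal{C})$, where the entries of $\mathcal{C}$ are precisely $\mathsf{E}Y_{2\ell_{1}+1}^{m_{1}}(\underline{\widetilde{X}})\,Y_{2\ell_{2}+1}^{m_{2}\ast}(\underline{\widetilde{X}})$, i.e.\ the Clebsch--Gordan expressions derived immediately before the lemma. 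Because $2(\ell_{1}+\ell_{2}+1)$ is even, only even-degree coefficients $a_{2h}^{m_{1}-m_{2}}$ of $f$ contribute to the sum, so $\mathcal{C}$ is determined by the even-degree part of the expansion---the portion of $f$ that remains free under the null---and can be consistently estimated from the sample.

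Next I would convert the complex quadratic form into a real one. The entries of $\widehat{\underline{A}_L}(n)$ are not all independent: the Hermitian identity $\widehat{a_{\ell}^{m\ast}}=(-1)^{m}\widehat{a_{\ell}^{-m}}$ means that $\mathcal{C}$ is singular on the ambient complex space. To resolve this, reparametrise $\widehat{\underline{A}_L}(n)$ in terms of real spherical harmonics $Y_{\ell,m}$ via the unitary change of basis between $\{Y_{\ell}^{m}\}$ and $\{Y_{\ell,m}\}$ recalled in Section~1; this produces a genuine real Gaussian vector $\widehat{\underline{U}}$ of independent components whose real covariance $\Sigma_{U}$ satisfies $\Sigma_{U}=\tfrac{1}{2}\,T^{\ast}\mathcal{C}\,T$ for the change-of-basis matrix $T$. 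The factor $\tfrac{1}{2}$ (stemming from the $1/\sqrt{2}$ in the definition of $Y_{\ell,m}$) is exactly what supplies the leading $2$ in the test statistic, and a direct computation yields
\begin{equation*}
2n\,\widehat{\underline{A}_L}(n)^{\ast}\mathcal{C}^{-1}\widehat{\underline{A}_L}(n)\;=\;n\,\widehat{\underline{U}}^{\top}\Sigma_{U}^{-1}\widehat{\underline{U}}\;\overset{\mathcal{D}}{\longrightarrow}\;\chi^{2}_{(L+2)L}
\end{equation*}
by the classical theory of real multivariate Gaussian quadratic forms, with Slutsky's lemma absorbing any plug-in estimation of $\mathcal{C}$.

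The hardest part will be this real-vs-complex bookkeeping: justifying the passage from the singular complex $\mathcal{C}$ to an invertible real $\Sigma_{U}$, verifying non-degeneracy of $\Sigma_{U}$ on the reduced parameter space, and carefully checking that the count of independent real parameters of the odd-degree block up to degree $2L+1$ indeed matches the claimed $(L+2)L$ degrees of freedom. Beyond that obstacle, the remaining pieces---the multivariate CLT, Slutsky's lemma, and the continuous mapping theorem applied to the quadratic form---are standard and follow the same template already used for the rotational-symmetry lemma proved earlier.
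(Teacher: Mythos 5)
Your high-level plan (multivariate CLT for $\widehat{\underline{A}_L}(n)$, then the Gaussian quadratic-form identity) is the only reasonable route, and the paper itself states this lemma without proof, so you are filling a genuine gap. However, your diagnosis of the obstruction is off, and that matters. The Hermitian relation $\widehat{a_{\ell}^{m\ast}}=(-1)^{m}\widehat{a_{\ell}^{-m}}$ is a \emph{conjugate-linear} constraint, not a complex-linear one, so it does not make $\mathcal{C}=\mathsf{E}\bigl[\widehat{\underline{A}_L}\widehat{\underline{A}_L}^{\ast}\bigr]$ singular; under the axial-symmetric null the off-diagonal entries of $\mathcal{C}$ are given by the Clebsch--Gordan expression preceding the lemma and on the diagonal they are positive, so $\mathcal{C}$ is generically invertible. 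What actually breaks the shortcut ``$2W^{\ast}\Sigma^{-1}W\sim\chi^{2}_{2p}$'' is that $\widehat{\underline{A}_L}$ is an \emph{improper} (non-circular) complex Gaussian: the pseudo-covariance $\mathsf{E}\bigl[\widehat{\underline{A}_L}\widehat{\underline{A}_L}^{\top}\bigr]$ is nonzero precisely because the vector includes both $\widehat{a_{\ell}^{m}}$ and its conjugate partner $\widehat{a_{\ell}^{-m}}$ as well as the real coordinate $\widehat{a_{\ell}^{0}}$.

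This error propagates into your claimed identity $2n\,\widehat{\underline{A}_L}^{\ast}\mathcal{C}^{-1}\widehat{\underline{A}_L}=n\,\widehat{\underline{U}}^{\top}\Sigma_{U}^{-1}\widehat{\underline{U}}$, which you assert but do not check, and which in fact fails. Take the smallest case, $L=0$, under an isotropic null so that $\mathcal{C}=I/4\pi$: with $\widehat{a_{1}^{-1}}=-\widehat{a_{1}^{1\ast}}$ one gets $n\,\widehat{\underline{A}}^{\ast}\mathcal{C}^{-1}\widehat{\underline{A}}=4\pi n\bigl(2\lvert\widehat{a_{1}^{1}}\rvert^{2}+(\widehat{a_{1}^{0}})^{2}\bigr)\overset{\mathcal{D}}{\to}\chi^{2}_{2}+\chi^{2}_{1}=\chi^{2}_{3}$ \emph{without} the leading factor $2$; with the factor of $2$ the limit is $2\chi^{2}_{3}$, which is not chi-squared at all. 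So the prefactor $2$ is inconsistent with a vector that carries both signs of $m$ and the real $m=0$ slot. You also flag, but never actually carry out, the degrees-of-freedom count. If you had, you would see that the odd-degree block $\ell=1,3,\ldots,2L+1$ contains $\sum_{\ell=0}^{L}(4\ell+3)=(L+1)(2L+3)$ real free parameters, which is not $(L+2)L$. A complete proof therefore has to either (i) restate the vector $\widehat{\underline{A}_L}$ (e.g.\ only $m\geq 1$ plus the real $m=0$ entries, or pass entirely to the real harmonics $Y_{\ell,m}$), recompute the covariance, and match the prefactor and degrees of freedom accordingly, or (ii) justify why the statement as written is correct despite the mismatch --- neither of which your proposal does.
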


We reject this null hypothesis of axial symmetry if $2n\widehat{\underline{A}%
_{L}}\left( n\right) ^{\ast}\mathcal{C}^{-1}\widehat{\underline{A}_{L}}%
\left( n\right) $ is large.

\subsection{Reflection with respect to equatorial plane}

This kind of symmetry may be observed e.g. in crystallography and
astrophysics.

\begin{definition}
$f$ is symmetric with respect to equatorial plane, if for any $\underline{%
\widetilde{x}},\underline{\widetilde{x}}^{\prime}\in\mathbb{S}_{2}$, such
that $\underline{\widetilde{x}}\left( \vartheta,\varphi\right) \rightarrow%
\underline{\widetilde{x}}^{\prime}\left( \pi-\vartheta ,\varphi\right) $,
then $f\left( \underline{\widetilde{x}}\right) =f\left( \underline{%
\widetilde{x}}^{\prime}\right) $.
\end{definition}

\begin{lemma}
\label{Lemma_Equa_plane}$f$ is symmetric with respect to equatorial plane if
and only if $a_{2\ell+1}^{2m}=0$, and $a_{2\ell}^{2m+1}=0$, for all $\ell$,
and $m$, in this case%
\begin{equation*}
f\left( \underline{\widetilde{x}}\right) =\sum_{\ell=0}^{\infty}\left(
\sum_{m=-\ell}^{\ell}a_{2\ell}^{2m}Y_{2\ell}^{2m}\left( \underline{%
\widetilde{x}}\right)
+\sum_{m=-\ell-1}^{\ell}a_{2\ell+1}^{2m+1}Y_{2\ell+1}^{2m+1}\left( 
\underline{\widetilde{x}}\right) \right) ,
\end{equation*}
\end{lemma}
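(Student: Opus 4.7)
The plan is to mimic the proof of the axial symmetry lemma, but with the parity rule for the reflection $\underline{\widetilde{x}}(\vartheta,\varphi)\mapsto\underline{\widetilde{x}}'(\pi-\vartheta,\varphi)$ replacing the inversion rule \eqref{FormulaInversion}. The key identity I would establish first is
\[
Y_{\ell}^{m}(\pi-\vartheta,\varphi)=(-1)^{\ell+m}\,Y_{\ell}^{m}(\vartheta,\varphi),
\]
which follows from the definition \eqref{SpheriHarmOrthoNorm}, since $e^{im\varphi}$ is unchanged, $\cos(\pi-\vartheta)=-\cos\vartheta$, and the associated Legendre functions satisfy the parity relation $P_{\ell}^{m}(-x)=(-1)^{\ell+m}P_{\ell}^{m}(x)$ (see Appendix~\ref{LegendrePoly}).

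Given this, the argument mirrors the axial symmetry case: expanding $f$ via Theorem~\ref{Series_f_L} and equating $f(\underline{\widetilde{x}})$ with $f(\underline{\widetilde{x}}')$ gives
\[
\sum_{\ell,m}a_{\ell}^{m}Y_{\ell}^{m}(\underline{\widetilde{x}})
=\sum_{\ell,m}(-1)^{\ell+m}a_{\ell}^{m}Y_{\ell}^{m}(\underline{\widetilde{x}}),
\]
and uniqueness of the spherical harmonic expansion forces $a_{\ell}^{m}=(-1)^{\ell+m}a_{\ell}^{m}$, hence $a_{\ell}^{m}=0$ whenever $\ell+m$ is odd. Conversely, if this vanishing condition holds, then the remaining terms are each invariant under the reflection (by the same identity above applied termwise, using uniform convergence to interchange sum and reflection), so $f(\underline{\widetilde{x}})=f(\underline{\widetilde{x}}')$.

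To finish, I would simply reorganize the surviving indices. The condition \emph{$\ell+m$ even} splits into (i) $\ell=2\ell'$ with $m$ even, writing $m=2m'$ where $-\ell'\le m'\le \ell'$; and (ii) $\ell=2\ell'+1$ with $m$ odd, writing $m=2m'+1$ where $-\ell'-1\le m'\le \ell'$ (the endpoints coming from $|2m'+1|\le 2\ell'+1$). Substituting these into \eqref{Series_f_L} yields the displayed double series. The only thing one has to be careful about is the asymmetric range $-\ell-1\le m\le \ell$ for the odd-degree terms, which is a routine translation of $|2m+1|\le 2\ell+1$; I do not foresee any real obstacle.
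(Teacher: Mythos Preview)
Your proposal is correct and follows essentially the same approach as the paper's proof: both use the parity identity $Y_{\ell}^{m}(\pi-\vartheta,\varphi)=(-1)^{\ell+m}Y_{\ell}^{m}(\vartheta,\varphi)$, equate the two expansions of $f$, and deduce $a_{\ell}^{m}=(-1)^{\ell+m}a_{\ell}^{m}$. If anything, you are slightly more thorough than the paper, since you explicitly justify the Legendre parity relation, spell out the converse direction, and verify the index ranges for the reorganized series.
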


See Appendix \ref{Appendix_Proofs2} for a proof.

\begin{remark}
Mean direction $\underline{\widetilde{\mu}}$ for a density which is
symmetric with respect to equatorial plane is $\mathsf{E}\underline{%
\widetilde{X}}=\sqrt{4\pi/3}\left( -a_{1,1},a_{1,-1},0\right) ^{\top}$, and
belongs to the $x,y$ plane.
\end{remark}

\subsubsection{Testing for symmetry with respect to the equatorial plane}

In this case, since the mean direction is given by $\mathsf{E}\underline{%
\widetilde{X}}=\sqrt{4\pi/3}\left( -a_{1,1},a_{1,-1},0\right) ^{\top}$, we
can clearly reject the hypothesis by testing if $\widehat{a_{1}^{0}}=0$.
Since this is not a sufficient condition to claim this type of symmetry, we
now consider the following more comprehensive procedure.

For an observation vector $\underline{\widetilde{x}}_{1}\left( \vartheta
_{1},\varphi_{1}\right) $, $\underline{\widetilde{x}}_{2}\left(
\vartheta_{2},\varphi_{2}\right) $, \ldots, $\underline{\widetilde{x}}%
_{n}\left( \vartheta_{n},\varphi_{n}\right) $, we have $\widehat{a_{\ell
}^{m}}$ is unbiased, therefore 
\begin{equation}
\mathsf{E}\widehat{a_{2\ell+1}^{2m}}=0,\;\mathsf{E}\widehat{a_{2\ell}^{2m+1}}%
=0,  \label{Equa_al}
\end{equation}
so that we expect all $\widehat{a_{2\ell+1}^{2m}}$ and $\widehat{a_{2\ell
}^{2m+1}}$ to be small. Again we can calculate the covariance matrix. By (%
\ref{Exp_Prod Y_gen}) we have 
\begin{align*}
& \mathsf{E}Y_{\ell_{1}}^{m_{1}}\left( \underline{\widetilde{X}}\right)
Y_{\ell_{2}}^{m_{2}}\left( \underline{\widetilde{X}}\right) ^{\ast} \\
& =\left( -1\right) ^{m_{2}}\sum_{\left\vert \ell_{1}-\ell_{2}\right\vert
\leq h\leq\ell_{1}+\ell_{2}}\sqrt{\frac{\left( 2\ell_{1}+1\right) \left(
2\ell_{2}+1\right) }{4\pi\left( 2h+1\right) }}C_{\ell_{1},0;%
\ell_{2},0}^{h,0}C_{\ell_{1},m_{1};%
\ell_{2},-m_{2}}^{h,m_{1}-m_{2}}a_{h}^{m_{1}-m_{2}}.
\end{align*}
Now if the parity of $m_{1}$ and $m_{2}$ is equal, then $m_{2}-m_{1}$ is
even, and if both $\ell_{1}$ and $\ell_{2}$ either odd or even then $h$
should be even. If the parity of $\ell_{1}$ and $\ell_{2}$ is equal and at
the same time the parity of $m_{1}$ and $m_{2}$ is also equal, then under (%
\ref{Equa_al}) we have 
\begin{align*}
C\left( \ell_{1},m_{1};\ell_{2},m_{2}\right) & =\mathsf{E}%
Y_{\ell_{1}}^{m_{1}}\left( \underline{\widetilde{X}}\right)
Y_{\ell_{2}}^{m_{2}}\left( \underline{\widetilde{X}}\right) ^{\ast} \\
& =\left( -1\right) ^{m_{2}}\sum_{\left\vert \ell_{1}-\ell_{2}\right\vert
/2\leq h\leq\left( \ell_{1}+\ell_{2}\right) /2}\sqrt{\frac{\left( 2\ell
_{1}+1\right) \left( 2\ell_{2}+1\right) }{4\pi\left( 4h+1\right) }}%
C_{\ell_{1},0;\ell_{2},0}^{2h,0}C_{\ell_{1},m_{1};%
\ell_{2},-m_{2}}^{2h,m_{1}-m_{2}}a_{2h}^{m_{1}-m_{2}} \\
& =\frac{\delta_{\ \ell_{1}=\ell_{2}}\delta_{m_{1}=m_{2}}}{4\pi}.
\end{align*}
On the other hand, if the parity of $\ell_{1}$ and $\ell_{2}$ are different,
then $h$ should be odd and $a_{h}^{m_{1}-m_{2}}\neq0$, if $m_{2}-m_{1}$ is
odd, i.e. the parity of $m_{1}$ and $m_{2}$ are also different, in which
case 
\begin{align*}
C\left( \ell_{1},m_{1};\ell_{2},m_{2}\right) & =\mathsf{E}%
Y_{\ell_{1}}^{m_{1}}\left( \underline{\widetilde{X}}\right)
Y_{\ell_{2}}^{m_{2}}\left( \underline{\widetilde{X}}\right) ^{\ast} \\
& =\left( -1\right) ^{m_{2}}\sum_{\left( \left\vert \ell_{1}-\ell
_{2}\right\vert -1\right) /2\leq h\leq\left( \ell_{1}+\ell_{2}-1\right) /2}%
\sqrt{\frac{\left( 2\ell_{1}+1\right) \left( 2\ell_{2}+1\right) }{4\pi\left(
4h+3\right) }}C_{\ell_{1},0;\ell_{2},0}^{2h\boldsymbol{+1}%
,0}C_{\ell_{1},m_{1};\ell_{2},-m_{2}}^{2h\boldsymbol{+1}%
,m_{1}-m_{2}}a_{2h+1}^{m_{1}-m_{2}} \\
& =0.
\end{align*}
Define $\widehat{\underline{A}_{L,1}}\left( n\right) =\left[ \widehat{%
\underline{a}_{1}}\left( n\right) ,\widehat{\underline{a}_{3}}\left(
n\right) ,\ldots,\widehat{\underline{a}_{2L+1}}\left( n\right) \right] $,
where $\widehat{\underline{a}_{2\ell+1}}\left( n\right) =\left[ \widehat{%
a_{2\ell+1}^{-2\ell}},\widehat{a_{2\ell+1}^{-2\ell+2}},\ldots ,\widehat{%
a_{2\ell+1}^{2\ell}}\right] $, and similarly $\widehat{\underline{A}_{L,2}}%
\left( n\right) =\left[ \widehat{\underline{a}_{2}}\left( n\right) ,\widehat{%
\underline{a}_{4}}\left( n\right) ,\ldots,\widehat{\underline{a}_{2L}}\left(
n\right) \right] $, where $\widehat{\underline{a}_{2\ell}}\left( n\right) =%
\left[ \widehat{a_{2\ell}^{-2\ell+1}},\widehat{a_{2\ell}^{-2\ell+3}},\ldots ,%
\widehat{a_{2\ell}^{2\ell-1}}\right] $. Let $\widehat{\underline{A}_{L}}%
\left( n\right) =\left[ \widehat{\underline{A}_{L,1}}\left( n\right) ,%
\widehat{\underline{A}_{L,2}}\left( n\right) \right] ^{\top}$ with dimension 
$\left( L+1\right) \left( 2L+1\right) $. Notice that the $\mathcal{C=}%
\limfunc{Var}\left( \widehat{\underline{A}_{L,1}}\left( n\right) \right) $
matrix is diagonal and we conclude

\begin{lemma}
If $\underline{\widetilde{X}}$ has density $f$ which is symmetric with
respect to equatorial plane then $2n\widehat{\underline{A}_{L}}\left(
n\right) ^{\ast}\mathcal{C}^{-1}\widehat{\underline{A}_{L}}\left( n\right) $
is $\chi_{\left( L+1\right) \left( 2L+1\right) }^{2}$ distributed with $%
\left( L+1\right) \left( 2L+1\right) $ degrees of freedom.
\end{lemma}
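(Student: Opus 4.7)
The plan is to mirror the argument used for the axial-symmetry lemma, with the only non-trivial extra ingredient being the parity-block structure of the covariance matrix that was essentially established in the calculation immediately preceding the statement.

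First I would invoke the characterization Lemma \ref{Lemma_Equa_plane}: under the null hypothesis the true coefficients satisfy $a_{2\ell+1}^{2m}=0$ and $a_{2\ell}^{2m+1}=0$, so every component of the stacked vector $\widehat{\underline{A}_L}(n)$ has mean $0$. Since $\widehat{\underline{A}_L}(n)$ is a sample mean of i.i.d.\ bounded vector-valued functions of $\underline{\widetilde{x}}_k$, the multivariate CLT (in the complex version already invoked in Theorem~\ref{Theor_AssNormality}) gives $\sqrt{n}\,\widehat{\underline{A}_L}(n)\overset{\mathcal{D}}{\to}\mathcal{CN}(0,n\mathcal{C})$, with the constraint $\widehat{a_\ell^{m\ast}}=(-1)^m\widehat{a_\ell^{-m}}$ inherited in the limit.

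Next I would identify $\mathcal{C}$. The two-case computation of $C(\ell_1,m_1;\ell_2,m_2)$ immediately above the statement already does most of the work: when the parity of $\ell_1,\ell_2$ matches and that of $m_1,m_2$ matches, only $h=0$ contributes under the null and one obtains $\delta_{\ell_1=\ell_2}\delta_{m_1=m_2}/(4\pi)$; when the parities differ the coefficient $a_{2h+1}^{m_1-m_2}$ is precisely one that vanishes under (\ref{Equa_al}) and the whole sum is zero. So on the index set selected in $\widehat{\underline{A}_L}(n)$ the asymptotic covariance is diagonal with value $1/(4\pi n)$ and $\mathcal{C}^{-1}$ acts as $4\pi n\,I$ on that subspace.

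Third, the quadratic form $2n\,\widehat{\underline{A}_L}(n)^{\ast}\mathcal{C}^{-1}\widehat{\underline{A}_L}(n)$ is, by standard continuous-mapping and the Gaussian calculus already used for the axial-symmetry lemma, a sum of squares of asymptotically independent standard normals; each complex pair $(\widehat{a_\ell^m},\widehat{a_\ell^{-m}})$ with $m\neq 0$ contributes one $\chi^2_2$ (the factor $2$ in $2n$ is exactly what turns $|\widehat{a_\ell^m}|^2+|\widehat{a_\ell^{-m}}|^2=2|\widehat{a_\ell^m}|^2$ into a $\chi^2_2$), while each real coefficient $\widehat{a_\ell^0}$ occurring in $\widehat{\underline{A}_{L,1}}(n)$ contributes one $\chi^2_1$. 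Finally, I would verify the degrees of freedom by a direct count: $\widehat{\underline{A}_{L,1}}(n)$ contains $\sum_{\ell=0}^{L}(2\ell+1)=(L+1)^2$ coefficients and $\widehat{\underline{A}_{L,2}}(n)$ contains $\sum_{\ell=1}^{L}2\ell=L(L+1)$, giving $(L+1)(2L+1)$ real degrees of freedom in total.

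The main obstacle is bookkeeping around the conjugation constraint $\widehat{a_\ell^{m\ast}}=(-1)^m\widehat{a_\ell^{-m}}$: a naive count of complex entries would double the true degrees of freedom, and one must be careful that the factor of $2$ multiplying $n$ in the statement is precisely absorbed by this doubling so that each index pair contributes the correct $\chi^2_1$ or $\chi^2_2$. The cleanest way to carry this out rigorously, and what I would do if space permitted, is to translate $\widehat{\underline{A}_L}(n)$ into the real-harmonic coefficients $\widehat{a_{\ell,m}}$, apply the real multivariate CLT to obtain an asymptotically $N(0,(4\pi n)^{-1}I)$ vector of dimension $(L+1)(2L+1)$, and then read off the chi-squared limit directly.
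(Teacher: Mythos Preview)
Your outline mirrors the paper's argument: invoke Lemma~\ref{Lemma_Equa_plane} for zero means, apply the CLT of Theorem~\ref{Theor_AssNormality}, analyse the covariance by parity, and count dimensions. Your dimension count $(L+1)^2+L(L+1)=(L+1)(2L+1)$ is exactly what the paper has in mind.

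There is one step you should revisit. You assert that in the same-parity case ``only $h=0$ contributes under the null,'' so that $\mathcal{C}=(4\pi n)^{-1}I$. That reduction is what happens under the \emph{uniformity} hypothesis, but it is not justified under mere equatorial symmetry. When $\ell_1,\ell_2$ have equal parity and $m_1,m_2$ have equal parity, the Clebsch--Gordan constraint forces $h$ even and $m_1-m_2$ even, so the coefficients $a_{2h}^{m_1-m_2}$ appearing in the sum are precisely of the type (degree and order of the same parity) that \emph{survive} under the null. For instance $C(1,0;3,0)$ involves $a_2^0$ and $a_4^0$, which are generically nonzero for an equatorially symmetric density. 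Hence $\mathcal{C}$ is not the scalar matrix in general, and your third paragraph, which uses $\mathcal{C}^{-1}=4\pi n\,I$ to pair off independent $\chi^2$ blocks, does not go through as written. The paper makes the identical leap, writing out the full sum and then equating it to $\delta_{\ell_1=\ell_2}\delta_{m_1=m_2}/(4\pi)$ without justification. The lemma as \emph{stated}, with the true covariance $\mathcal{C}$, is still correct via the CLT--quadratic-form route you sketch at the start; what fails is only the claimed simplification of $\mathcal{C}$. Your closing suggestion of passing to real harmonics is the cleanest fix, but even there the limiting covariance will not be a multiple of the identity under this hypothesis.
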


We reject this hypothesis of symmetry around the equatorial plane if $2n%
\widehat{\underline{A}_{L}}\left( n\right) ^{\ast}\mathcal{C}^{-1}\widehat{%
\underline{A}_{L}}\left( n\right) $ is large enough.

\subsection{Reflection with respect to meridial plane $\protect\varphi=%
\protect\varphi_{0}$ and $\protect\varphi=\protect\pi+\protect\varphi_{0}$}

\begin{definition}
$f$ is symmetric with respect to meridial plane $\varphi=\varphi_{0}$, and $%
\varphi=\pi+\varphi_{0}$, if for any $\underline{\widetilde{x}},\underline{%
\widetilde{x}}^{\prime}\in\mathbb{S}_{2}$, such that $\underline{\widetilde{x%
}}\left( \vartheta,\varphi\right) \rightarrow \underline{\widetilde{x}}%
^{\prime}\left( \vartheta,2\varphi_{0}-\varphi\right) $, then $f\left( 
\underline{\widetilde{x}}\right) =f\left( \underline{\widetilde{x}}%
^{\prime}\right) $.
\end{definition}

If $f$ is rotationally symmetric then it is symmetric with respect to
meridial planes which contains the axis $\underline{\widetilde{x}}_{0}$ of
symmetry.

\begin{lemma}
\label{Lemma_Merid_plane}$f$ is symmetric with respect to meridial plane $%
\varphi=\varphi_{0}$, and $\varphi=\pi+\varphi_{0}$, if and only if $a_{\ell
}^{m}=\left\vert a_{\ell}^{m}\right\vert e^{-im\left(
\varphi_{0}+\delta_{m>0}\pi\right) }$ for all $\ell$, and $m$. In this case 
\begin{equation*}
f\left( \underline{\widetilde{x}}\right)
=\sum_{\ell=0}^{\infty}\sum_{m=-\ell}^{\ell}\left\vert
a_{\ell}^{m}\right\vert e^{-im\left( \varphi_{0}+\delta_{m>0}\pi\right)
}Y_{\ell}^{m}\left( \underline{\widetilde{x}}\right) .
\end{equation*}
\end{lemma}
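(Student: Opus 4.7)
The plan is to convert the functional symmetry $f(\vartheta,\varphi)=f(\vartheta,2\varphi_{0}-\varphi)$ into algebraic constraints on the harmonic coefficients $a_{\ell}^{m}$ by first working out how a single $Y_{\ell}^{m}$ transforms under the meridial reflection, then invoking uniqueness of the spherical-harmonic expansion (\ref{Series_f_L}), and finally combining the resulting identity with the Hermitian relation $a_{\ell}^{m\ast}=(-1)^{m}a_{\ell}^{-m}$ noted in the theorem of Section~2.

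First I would use the explicit formula (\ref{SpheriHarmOrthoNorm}): the only factor that depends on $\varphi$ is $e^{im\varphi}$, which under $\varphi\mapsto 2\varphi_{0}-\varphi$ becomes $e^{2im\varphi_{0}}e^{-im\varphi}$. Recognising $(-1)^{m}\sqrt{\tfrac{2\ell+1}{4\pi}\tfrac{(\ell-m)!}{(\ell+m)!}}\,P_{\ell}^{m}(\cos\vartheta)\,e^{-im\varphi}$ as $Y_{\ell}^{m\ast}(\vartheta,\varphi)$, and then applying the conjugation identity $Y_{\ell}^{m\ast}=(-1)^{m}Y_{\ell}^{-m}$, I obtain the reflection law
\[
Y_{\ell}^{m}(\vartheta,2\varphi_{0}-\varphi)=(-1)^{m}e^{2im\varphi_{0}}Y_{\ell}^{-m}(\vartheta,\varphi).
\]
Substituting into (\ref{Series_f_L}), re-indexing $m\mapsto-m$ in the inner sum, and matching coefficient-by-coefficient against $f(\underline{\widetilde{x}})$ yields the necessary condition $a_{\ell}^{m}=(-1)^{m}e^{-2im\varphi_{0}}a_{\ell}^{-m}$.

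Next, I would eliminate $a_{\ell}^{-m}$ by means of the Hermitian relation, which collapses the above to $a_{\ell}^{m}e^{im\varphi_{0}}=\overline{a_{\ell}^{m}e^{im\varphi_{0}}}$, i.e.\ $a_{\ell}^{m}e^{im\varphi_{0}}\in\mathbb{R}$. Writing this real number as $\pm|a_{\ell}^{m}|$ and absorbing a possible minus sign into the phase via $e^{-im\pi}=(-1)^{m}$ (which, by the Hermitian relation, only needs to be recorded for strictly positive orders, since the sign at $-m$ is determined by the one at $m$) recovers the asserted representation $a_{\ell}^{m}=|a_{\ell}^{m}|e^{-im(\varphi_{0}+\delta_{m>0}\pi)}$. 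Sufficiency is immediate in the reverse direction: plugging that form back into (\ref{Series_f_L}) and reapplying the reflection law cancels the $e^{2im\varphi_{0}}$ against the $e^{-2im\varphi_{0}}$ in the assumed phase, returning $f(\underline{\widetilde{x}}^{\prime})=f(\underline{\widetilde{x}})$.

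The main obstacle I anticipate is the bookkeeping in the sign-absorption step: the reality condition $a_{\ell}^{m}e^{im\varphi_{0}}\in\mathbb{R}$ is symmetric in $m$, but the stated formula places the extra $\pi$-shift asymmetrically on the positive orders, so one must check carefully that the single sign forced by reality is distributed consistently across each pair $(m,-m)$ when combined with $a_{\ell}^{-m}=(-1)^{m}a_{\ell}^{m\ast}$. Beyond that, everything reduces to routine manipulation of the spherical-harmonic identities already recorded earlier in the paper.
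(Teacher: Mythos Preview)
Your approach is essentially the same as the paper's. Both arguments begin with the reflection law $Y_{\ell}^{m}(\vartheta,2\varphi_{0}-\varphi)=(-1)^{m}e^{2im\varphi_{0}}Y_{\ell}^{-m}(\vartheta,\varphi)$, match coefficients in the expansion (\ref{Series_f_L}) to obtain $a_{\ell}^{-m}=(-1)^{m}e^{2im\varphi_{0}}a_{\ell}^{m}$, and then combine this with the Hermitian relation $a_{\ell}^{m\ast}=(-1)^{m}a_{\ell}^{-m}$. The only cosmetic difference is in the last step: the paper writes $a_{\ell}^{m}=\rho_{\ell}^{m}e^{-i\varphi_{\ell}^{m}}$ in polar form and solves the resulting pair of linear equations in the phases $\varphi_{\ell}^{m},\varphi_{\ell}^{-m}$ directly, whereas you rephrase the same two relations as the single reality condition $a_{\ell}^{m}e^{im\varphi_{0}}\in\mathbb{R}$. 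Your flagged concern about the sign-absorption bookkeeping is exactly the point where the paper's linear system in the phases (tacitly taken without tracking the mod-$2\pi$ ambiguity) does the same work, so your caution is well placed but does not reflect a divergence in method.
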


See Appendix \ref{Appendix_Proofs3} for the proof.

\begin{remark}
If $\varphi_{0}=\pi$, then all coefficients $a_{\ell}^{m}$ are real.
\end{remark}

\begin{remark}
Mean direction $\underline{\widetilde{\mu}}$ for a density which is
symmetric with respect to meridial plane $\varphi=\varphi_{0}$, and $%
\varphi=\pi +\varphi_{0}$, 
\begin{align*}
\mathsf{E}\widetilde{\underline{X}} & =\sqrt{2\pi/3}\left(
a_{1}^{1}-a_{1}^{-1},-i\left( a_{1}^{1}+a_{1}^{-1}\right) ,\sqrt{2}%
a_{1,0}\right) \\
& =\sqrt{2\pi/3}\left( \left\vert a_{1}^{1}\right\vert \left( e^{-i\left(
\varphi_{0}+\pi\right) }-e^{i\varphi_{0}}\right) ,-i\left\vert
a_{1}^{1}\right\vert \left( e^{-i\left( \varphi_{0}+\pi\right) }+e^{i\varphi
_{0}}\right) ,\sqrt{2}a_{1,0}\right) \\
& =\sqrt{2\pi/3}\left( -2\left\vert a_{1}^{1}\right\vert \cos\varphi
_{0},2\left\vert a_{1}^{1}\right\vert \sin\varphi_{0},\sqrt{2}a_{1,0}\right)
,
\end{align*}
belongs to the meridial plane $\varphi=\varphi_{0}$, and $\varphi=\pi
+\varphi_{0}$, which is very reasonable. Mean direction characterizes the
plane, i.e. the angle $\varphi_{0}$. If $\varphi_{0}=\pi$, then $\mathsf{E}%
\widetilde{\underline{X}}=\sqrt{2\pi/3}\left( -2\left\vert
a_{1}^{1}\right\vert ,0,\sqrt{2}a_{1,0}\right) ^{\top}.$
\end{remark}

\subsubsection{Testing reflection with respect to meridial plane $\protect%
\varphi=\protect\varphi_{0}$, and $\protect\varphi=\protect\pi+\protect%
\varphi_{0}$}

In this case the imaginary part of $\mathsf{E}e^{im\left(
\varphi_{0}+\delta_{m>0}\pi\right) }\widehat{a_{\ell}^{m}}=\rho_{\ell}^{m}$
is zero, i.e. 
\begin{equation*}
\func{Im}\mathsf{E}e^{im\left( \varphi_{0}+\delta_{m>0}\pi\right) }\widehat{%
a_{\ell}^{m}}=0.
\end{equation*}
Based on the sample $\underline{\widetilde{x}}_{1}\left( \vartheta
_{1},\varphi_{1}\right) $, $\underline{\widetilde{x}}_{2}\left(
\vartheta_{2},\varphi_{2}\right) ,\ldots,\underline{\widetilde{x}}_{n}\left(
\vartheta_{n},\varphi_{n}\right) $, we consider the statistics%
\begin{equation*}
2\func{Im}e^{im\left( \varphi_{0}+\pi\right) }\widehat{a_{\ell}^{m}}=\left(
-1\right) ^{m}e^{im\varphi_{0}}\widehat{a_{\ell}^{m}}+e^{-im\varphi_{0}}%
\widehat{a_{\ell}^{-m}},
\end{equation*}
and reject the hypothesis of symmetry with respect to meridial plane if
these are significantly large.

\section{Real Data Example -- Sunspot Activity}

The solar photospheric activity is a long-standing subject of observations
and research in Astronomy. We consider the data on Sunspots which contains
daily positions and areas of sunspots. This data can be found in the
Debrecen Photoheliographic Data (DPD) sunspot catalogue \newline
(http://fenyi.solarobs.csfk.mta.hu/DPD/, \cite{baranyi2016line}, \cite%
{gyHori2016comparative}), \cite{gyenge2014variations} .

Locations of a spot refer to the position of the centroid of the whole spot
or that of the umbra if an umbra is identified within a spot. Locations are
defined by their Heliographic latitude and Heliographic longitude. We use
daily data labelled \textquotedblleft sDPD" and consider four columns:
column No. 8 with NOAA sunspot group number, column No 9 with spot numbers
within the group, column No.14 with Heliographic latitude which is positive:
North, negative: South, and finally column No.15 with Heliographic
longitude. The daily data contains the same spot as many times as its
lifetime in days. We transformed the data such that each location is
included only once. This way the data set between 1976-2014 includes 187,223
positions, see Figure \ref{Sunspot3}.

\FRAME{ftbphFU}{6.6893in}{1.7694in}{0pt}{\Qcb{From left to right: Sunsopt
data, the histogram and the fitted density}}{\Qlb{Sunspot3}}{sunspot3.tif}{%
\special{language "Scientific Word";type "GRAPHIC";maintain-aspect-ratio
TRUE;display "USEDEF";valid_file "F";width 6.6893in;height 1.7694in;depth
0pt;original-width 13.5101in;original-height 3.5414in;cropleft "0";croptop
"1";cropright "1";cropbottom "0";filename 'SunSpot3.tif';file-properties
"XNPEU";}}

The histogram of size $3072$ (resolution number: $16$) shows two girdles
with equal distances from the equator, see Figure \ref{Sunspot3}. From this
figure, we might surmise that a generalization of the Dimroth-Watson
Distribution of the following form, would provide a good fit. 
\begin{equation}
f\left( \underline{\widetilde{x}};\gamma,\alpha\right) \cong\frac{1}{2}%
e^{\gamma\cos^{2}\left( \vartheta-\alpha\right) }+\frac{1}{2}%
e^{\gamma\cos^{2}\left( \vartheta+\alpha\right) },  \label{MixedDW}
\end{equation}
where $\alpha\in\left[ 0,\pi/2\right] $ and $\gamma<0$. Here $\alpha$ moves
the girdle up and down and $\left\vert \gamma\right\vert $ is the parameter
of concentration. The model (\ref{MixedDW}) can be considered as a
particular case of a more general mixture of Dimroth-Watson Distributions,
viz.%
\begin{equation*}
f_{W}\left( \underline{\widetilde{x}};\gamma,\alpha\right) \cong
pe^{\gamma_{1}\cos^{2}\left( \vartheta-\alpha_{1}\right) }+\left( 1-p\right)
e^{\gamma_{2}\cos^{2}\left( \vartheta+\alpha_{2}\right) },
\end{equation*}
where $p\in\left[ 0,1\right] $, $\alpha_{1},\alpha_{2}\in\left[ 0,\pi/2%
\right] $ and $\gamma_{1}$, $\gamma_{2}<0$. The simulation of such a model
is quite straightforward, starting with the simulation of two DW random
variates with parameters $\gamma_{1},\gamma_{2}$, then shift them by $%
\alpha_{1},\alpha_{2}$ respectively and finally mix them up by $p$ and $1-p $%
.

We used the histogram $H\left( \widetilde{\underline{x}}\right) $, see \cite%
{TGy_RSJ_WB2017}, for estimation of the parameter $\alpha\ $of the model (%
\ref{MixedDW}). First we took the average of $H\left( \widetilde{\underline{x%
}}\right) $ by longitudes for each fixed colatitude, then the maximum value
gives an estimate $\widehat{\alpha}$, which comes out to be $\widehat{\alpha}%
=0.2527$.

Now, we estimate $\gamma$ for a given $\alpha$. The model (\ref{MixedDW}) is
rotationally symmetric with axis $\underline{\widetilde{N}}$ therefore the
series expansion of the density has the form 
\begin{equation*}
f\left( \underline{\widetilde{x}};\gamma,\alpha\right) =\sum_{\ell
=0}^{\infty}c_{2\ell}\left( \gamma,\alpha\right) \frac{2\ell+1}{4\pi }%
P_{2\ell}\left( \cos\left( \vartheta\right) \right) .
\end{equation*}
We estimate the coefficients $c_{2\ell}\left( \gamma,\alpha\right) $ from
the data for $\ell=1,2,\ldots,10$, and applied the method of nonlinear least
squares for fitting (\ref{MixedDW}), obtaining $\widehat{\gamma}=-39.0022$.

\FRAME{ftbphFU}{5.6948in}{2.0963in}{0pt}{\Qcb{The yearly estimated shift $%
\protect\alpha$, $\protect\widehat{\protect\alpha}\left( t\right) $ upper
curve and $-\protect\widehat{\protect\alpha}\left( t\right) $ lower curve, $%
t=1976,1977,\ldots ,2014$}}{\Qlb{Butterfly}}{butterflyplotalpha3.eps}{%
\special{ language "Scientific Word"; type "GRAPHIC"; maintain-aspect-ratio
TRUE; display "USEDEF"; valid_file "F"; width 5.6948in; height 2.0963in;
depth 0pt; original-width 7.6562in; original-height 2.802in; cropleft "0";
croptop "1"; cropright "1"; cropbottom "0"; filename
'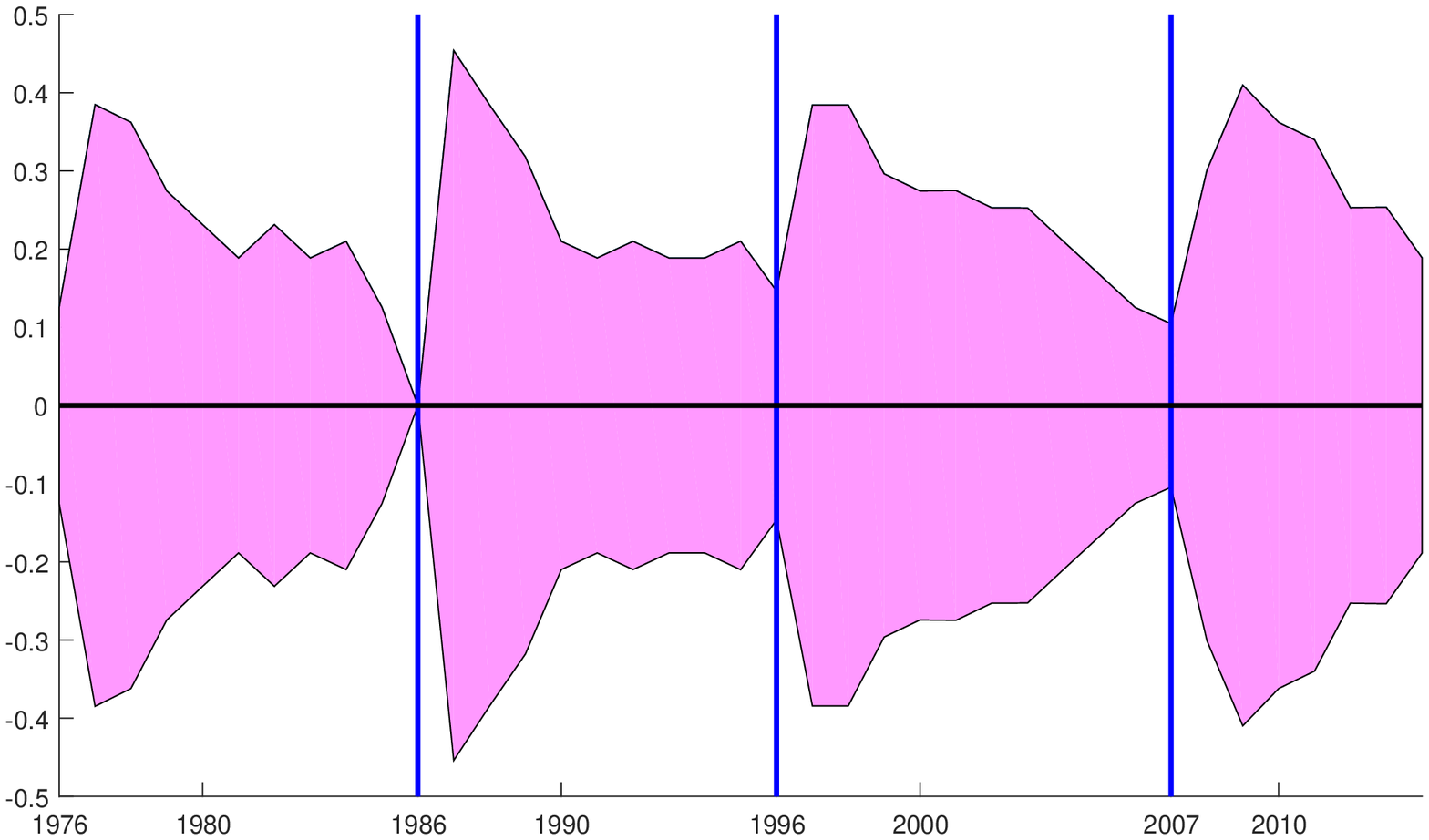';file-properties "XNPEU";}}

\begin{remark}
It is known that the Sunspot activity has a period of around 11 years, while
a more precise period can be obtained using Carrington rotation numbers, 
\cite{gyenge2014variations}. We estimated the shift $\alpha $ yearly between
1976-2014 years and plotted both $\widehat{\alpha }$ and $-\widehat{\alpha }$%
. The period of roughly 11 years shows up in the "Butterfly" Figure \ref%
{Butterfly}, which refers to the movement of girds by the years.
\end{remark}

\begin{acknowledgement}
This work was supported in part by the project EFOP-3.6.2-16-2017-00015
\end{acknowledgement}

\appendix

\subsection{Legendre polynomials and Spherical Harmonics}

\subsubsection{Legendre polynomials\label{LegendrePoly}}

Here we give a brief description and some properties of \textbf{Standardized
Legendre polynomials } (see e.g. \cite{Prudnikov30}, 2.17.5.1 for further
details)%
\begin{equation}
P_{\ell}\left( x\right) =\frac{1}{2^{\ell}\ell!}\frac{d^{\ell}\left(
x^{2}-1\right) ^{\ell}}{dx^{\ell}},\;x\in\left[ -1,1\right] ,
\label{Legendre_poly}
\end{equation}

$P_{0}\left( x\right) =1$, $P_{\ell}\left( 1\right) =1$, $P_{1}\left(
x\right) =x$, $P_{2}\left( x\right) =\left( 3x^{2}-1\right) /2$, (\cite%
{Erdelyi11a}, 2, p.180) they are orthogonal and 
\begin{equation*}
\int_{-1}^{1}\left[ P_{\ell}\left( x\right) \right] ^{2}dx=\frac{2}{2\ell+1},
\end{equation*}
hence%
\begin{equation}
\int_{\mathbb{S}_{2}}\left[ P_{\ell}\left( \cos\vartheta\right) \right]
^{2}\Omega\left( d\underline{\widetilde{x}}\right) =\frac{4\pi}{2\ell+1}.
\label{LegengreNormSqu}
\end{equation}

\subsubsection{More on Spherical Harmonics-- the Funk-Hecke formula and the
Clebsch-Gordan series}

At the \textit{North pole} $\underline{\widetilde{N}}$ when $\vartheta=0$
and $\varphi$ arbitrary, all spherical harmonics have the special value 
\begin{equation}
Y_{\ell}^{m}\left( \underline{\widetilde{N}}\right) =\delta_{m=0}\sqrt {%
\frac{2\ell+1}{4\pi}}.  \label{Y_North}
\end{equation}
Also from Equation (\ref{SpheriHarmOrthoNorm}), it follows%
\begin{align}
Y_{\ell}^{m\ast}\left( \vartheta,\varphi\right) & =Y_{\ell}^{m}\left(
\vartheta,-\varphi\right)  \label{Y_conjugate} \\
& =\left( -1\right) ^{m}Y_{\ell}^{-m}\left( \vartheta,\varphi\right) , 
\notag \\
Y_{\ell}^{-m}\left( \vartheta,\varphi\right) & =\left( -1\right)
^{m}e^{-i2m\varphi}Y_{\ell}^{m}\left( \vartheta,\varphi\right) .  \notag
\end{align}
For any two points $\underline{\widetilde{x}}_{1}\ $and $\underline{%
\widetilde{x}}_{2}$ from $\mathbb{S}_{2}$, we have the \textbf{addition
formula} (see \cite{Varshalovich1988}) 
\begin{equation}
\sum_{m=-\ell}^{\ell}Y_{\ell}^{m\ast}\left( \underline{\widetilde{x}}%
_{1}\right) Y_{\ell}^{m}\left( \underline{\widetilde{x}}_{2}\right) =\frac{%
2\ell+1}{4\pi}P_{\ell}\left( \cos\gamma\right) ,  \label{FormulaAddition}
\end{equation}
where $\cos\gamma=\underline{\widetilde{x}}_{1}\cdot\underline{\widetilde{x}}%
_{2}$. In particular 
\begin{equation}
\sum_{m=-\ell}^{\ell}Y_{\ell}^{m\ast}\left( \underline{\widetilde{x}}\right)
Y_{\ell}^{m}\left( \underline{\widetilde{x}}\right) =\frac{2\ell+1}{4\pi}.
\label{Ylm_Sum_L}
\end{equation}

Suppose $G$ is continuous on $\left[ -1,1\right] $, then $G\left(
\cos\gamma\right) =G\left( \underline{\widetilde{x}}_{1}\cdot \underline{%
\widetilde{x}}\right) $ is defined on $\mathbb{S}_{2}$, where $\underline{%
\widetilde{x}}_{1}$ is fixed and $\underline{\widetilde{x}}_{1}\cdot%
\underline{\widetilde{x}}=\cos\gamma$. The series expansion in terms of
Legendre polynomials, 
\begin{equation*}
G\left( \cos\gamma\right) =\sum_{\ell=0}^{\infty}\frac{2\ell+1}{2}%
\int_{-1}^{1}G\left( x\right) P_{\ell}\left( x\right) dxP_{\ell}\left(
\cos\gamma\right) ,
\end{equation*}
can be derived with the help of \textbf{Funk-Hecke formula} which says (\cite%
{Mueller1966}, p. 20)%
\begin{align}
\int_{\mathbb{S}_{2}}G\left( \underline{\widetilde{x}}_{1}\cdot \underline{%
\widetilde{x}}\right) Y_{\ell}^{m}\left( \underline{\widetilde{x}}\right)
\Omega\left( d\underline{\widetilde{x}}\right) & =G_{\ell
}Y_{\ell}^{m}\left( \underline{\widetilde{x}}_{1}\right) ,
\label{Formula_Funk-Hecke} \\
G_{\ell} & =2\pi\int_{-1}^{1}G\left( x\right) P_{\ell}\left( x\right) dx. 
\notag
\end{align}

\smallskip An important result that has been frequently used in this paper
and which provides the coefficients of the product of two spherical
harmonics $Y_{\ell_{1}}^{m_{1}}\left( \underline{\widetilde{x}}\right)
Y_{\ell_{2}}^{m_{2}}\left( \underline{\widetilde{x}}\right) ^{\ast}$ in
terms of linear combination of other spherical harmonics, is the so-called 
\textbf{Clebsch-Gordan series} (see \cite{Varshalovich1988} p.144), and is
given by 
\begin{equation}
Y_{\ell_{1}}^{m_{1}}\left( \underline{\widetilde{x}}\right)
Y_{\ell_{2}}^{m_{2}}\left( \underline{\widetilde{x}}\right) ^{\ast}=\left(
-1\right) ^{m_{2}}\sum_{\left\vert \ell_{1}-\ell_{2}\right\vert \leq
k\leq\ell_{1}+\ell_{2}}\sqrt{\frac{\left( 2\ell_{1}+1\right) \left(
2\ell_{2}+1\right) }{4\pi\left( 2k+1\right) }}C_{\ell_{1},0;%
\ell_{2},0}^{k,0}C_{\ell_{1},m_{1};%
\ell_{2},-m_{2}}^{k,m_{1}-m_{2}}Y_{k}^{m_{1}-m_{2}}\left( \underline{%
\widetilde{x}}\right) .  \label{Clebsch_Gordan_series}
\end{equation}
These quantities $C_{\ell_{1},m_{1};\ell_{2},m_{2}}^{\ell,m}$ are called the
Clebsch-Gordan coefficients and can be evaluated sometimes using available
MATLAB codes. For some applications involving these for 3D spectra on
sphere, see \cite{Marinucci2010}, \cite{Ter2015},.

Some basic properties as follows 
\begin{equation*}
\sum_{m_{1:2}=-\ell_{1:2}}^{\ell_{1:2}}C_{\ell_{1},m_{1};\ell_{2},m_{2}}^{%
\ell,m}C_{\ell_{1},m_{1};\ell_{2},m_{2}}^{\ell^{\ast},m^{\ast}}=\delta
_{\ell=\ell^{\ast}}\delta_{m=m^{\ast},}
\end{equation*}
\cite{Varshalovich1988} p.250.$C_{\ell_{1},0;\ell_{2},0}^{0,0}=\delta
_{\ell_{1}=\ell_{2}}\left( -1\right) ^{\ell_{1}}/\sqrt{2\ell_{1}+1}$, and
8.7.2.4 p.259%
\begin{equation}
C_{\ell_{1},m_{1};\ell_{2},m_{2}}^{0,0}=\delta_{m_{1}=-m_{2}}\delta_{\ell
_{1}=\ell_{2}}\frac{\left( -1\right) ^{\ell_{1}-m_{1}}}{\sqrt{2\ell_{1}+1}}.
\label{Cl_Gord_00}
\end{equation}
\cite{Varshalovich1988} 8.5.1 (a) p.248.

\textbf{Wigner D-matrices} $D_{m,k}^{\left( \ell\right) }\left( g\right) $,
see \cite{Varshalovich1988}, p. 79 for details. We introduce the notation $%
D^{\left( \ell\right) }=\left[ D_{m,k}^{\left( \ell\right) }\right] $, for
fixed rotation $g$. Thus $D^{\left( \ell\right) }$ denotes a unitary matrix
of order $2\ell+1$, and it follows $D^{\left( \ell\right) }\left[ D^{\left(
\ell\right) }\right] ^{-1}=D^{\left( \ell\right) }D^{\left( \ell\right)
\ast} $, $\det D^{\left( \ell\right) }=1$ (unimodular). We shall use the
integral 
\begin{equation}
\int_{SO\left( 3\right) }D_{m,k}^{\left( \ell\right) }\left( g\right)
dg=\delta_{\ell,0}\delta_{m,0}\delta_{k,0},  \label{Int_D}
\end{equation}
where $dg=\sin\vartheta d\vartheta d\varphi d\gamma/8\pi^{2}$ is the Haar
measure (see \cite{SteinWeiss} I.4.14).

\subsection{Some proofs}

\subsubsection{Example \protect\ref{Ex_vMF} \label{Appendix_Proofs_vMF}}

\begin{proof}
The Funk-Hecke formula (see (\ref{Formula_Funk-Hecke})) gives us 
\begin{equation*}
\int_{\mathbb{S}_{2}}f\left( \underline{\widetilde{x}}\cdot \underline{%
\widetilde{\mu}};\kappa\right) Y_{\ell}^{m\ast}\left( \underline{\widetilde{x%
}}\right) \Omega\left( d\underline{\widetilde{x}}\right)
=c_{\ell}Y_{\ell}^{m\ast}\left( \underline{\widetilde{\mu}}\right) ,
\end{equation*}
where 
\begin{equation*}
c_{\ell}=2\pi\int_{-1}^{1}f\left( y;\kappa\right) P_{\ell}\left( y\right) dy.
\end{equation*}
Now%
\begin{equation*}
\int_{-1}^{1}\exp\left( \kappa y\right) P_{\ell}\left( y\right) dy=\sqrt{%
\frac{2\pi}{\kappa}}I_{\ell+1/2}\left( \kappa\right) ,
\end{equation*}
hence $c_{\ell}=I_{\ell+1/2}\left( \kappa\right) /I_{1/2}\left(
\kappa\right) $. Plugging in $c_{\ell}Y_{\ell}^{m\ast}$ into the series
expansion (\ref{Series_f_L}) for $f$, we have 
\begin{equation*}
f\left( \underline{\widetilde{x}};\underline{\widetilde{\mu}},\kappa\right)
=\sum_{\ell=0}^{\infty}\sum_{m=-\ell}^{\ell}c_{\ell}Y_{\ell}^{m\ast}\left( 
\underline{\widetilde{\mu}}\right) Y_{\ell}^{m}\left( \underline{\widetilde{x%
}}\right) .
\end{equation*}
Finally applying the addition formula for the spherical harmonics (see (\ref%
{FormulaAddition})) we obtain, 
\begin{align*}
f\left( \underline{\widetilde{x}};\underline{\widetilde{\mu}},\kappa\right)
& =\sum_{\ell=0}^{\infty}c_{\ell}\frac{2\ell+1}{4\pi}P_{\ell}\left( 
\underline{\widetilde{x}}\cdot\underline{\widetilde{\mu}}\right) \\
& =\sum_{\ell=0}^{\infty}\frac{2\ell+1}{4\pi}\frac{I_{\ell+1/2}\left(
\kappa\right) }{I_{1/2}\left( \kappa\right) }P_{\ell}\left( \cos
\gamma\right) ,
\end{align*}
and using (\ref{Y2P}) gives us the desired result (\ref{Density_Fisher}).
\end{proof}

\subsubsection{Example \protect\ref{EX_Ylm_real_suqare} \label%
{Appendix_Proofs1}}

\begin{proof}
Note that for $m>0$%
\begin{align*}
2Y_{\ell,m}^{2} & =\left( Y_{\ell}^{m}+\left( -1\right)
^{m}Y_{\ell}^{-m}\right) ^{2} \\
& =\left( Y_{\ell}^{m}\right) ^{2}+\left( Y_{\ell}^{-m}\right)
^{2}+2\left\vert Y_{\ell}^{m}\right\vert ^{2},
\end{align*}
and by Clebsch-Gordan series (\ref{Clebsch_Gordan_series}) we have 
\begin{align}
\left( Y_{\ell}^{m}\right) ^{2} & =Y_{\ell}^{m}\left( \left( -1\right)
^{m}Y_{\ell}^{-m}\right) ^{\ast}  \notag \\
& =\frac{2\ell+1}{\sqrt{4\pi}}\sum_{h=m}^{\ell}\sqrt{\frac{1}{4h+1}}%
C_{\ell,0;\ell,0}^{2h,0}C_{\ell,m;\ell,m}^{2h,2m}Y_{2h}^{2m},
\label{Ylm_suqare}
\end{align}
see \cite{Varshalovich1988}, 8.5, (h), p.250. Similarly $\left(
Y_{\ell}^{-m}\right) ^{2}=\left( \left( Y_{\ell}^{m}\right) ^{2}\right)
^{\ast}$ 
\begin{equation}
\left( Y_{\ell}^{-m}\right) ^{2}=\frac{2\ell+1}{\sqrt{4\pi}}\sum_{h=m}^{\ell}%
\sqrt{\frac{1}{4h+1}}C_{\ell,0;\ell,0}^{2h,0}C_{\ell,m;%
\ell,m}^{2h,2m}Y_{2h}^{-2m},  \label{Yl-m_suqare}
\end{equation}
which follows from $\left( Y_{\ell}^{m}\right) ^{2}$. Therefore using (\ref%
{Ylm_abs_suqare}), we have 
\begin{align}
Y_{\ell,m}^{2} & =\frac{2\ell+1}{2\sqrt{4\pi}}\sum_{h=m}^{\ell}\sqrt {\frac{1%
}{4h+1}}C_{\ell,0;\ell,0}^{2h,0}C_{\ell,m;\ell,m}^{2h,2m}\left(
Y_{2h}^{2m}+Y_{2h}^{-2m}\right)  \notag \\
& +\left( -1\right) ^{m}\frac{2\ell+1}{\sqrt{4\pi}}\sum_{h=0}^{\ell}\sqrt{%
\frac{1}{4h+1}}C_{\ell,0;\ell,0}^{2h,0}C_{\ell,m;\ell,-m}^{2h,0}Y_{2h}^{0}, 
\notag
\end{align}
and the required representation (\ref{Ylm_real_suqare}) follows.
\end{proof}

\subsubsection{Lemma \protect\ref{Lemma_Equa_plane}\label{Appendix_Proofs2}}

\begin{proof}
Consider the transformation $\underline{\widetilde{x}}\left( \vartheta
,\varphi\right) \rightarrow\underline{\widetilde{x}}^{\prime}\left(
\pi-\vartheta,\varphi\right) $, we have 
\begin{equation*}
Y_{\ell}^{m}\left( \pi-\vartheta,\varphi\right) =\left( -1\right)
^{\ell+m}Y_{\ell}^{m}\left( \vartheta,\varphi\right) .
\end{equation*}
Now using symmetry, we get 
\begin{align*}
f\left( \underline{\widetilde{x}}\right) &
=\sum_{\ell=0}^{\infty}\sum_{m=-\ell}^{\ell}a_{\ell}^{m}Y_{\ell}^{m}\left( 
\underline{\widetilde{x}}\right) \\
& =f\left( \underline{\widetilde{x}}^{\prime}\right) \\
& =\sum_{\ell=0}^{\infty}\sum_{m=-\ell}^{\ell}\left( -1\right) ^{\ell
+m}a_{\ell}^{m}Y_{\ell}^{m}\left( \underline{\widetilde{x}}\right) .
\end{align*}
Using the fact that $a_{\ell}^{m}=\left( -1\right) ^{\ell+m}a_{\ell}^{m}$, $%
a_{2\ell+1}^{2m}=0$, and $a_{2\ell}^{2m+1}=0$, we get 
\begin{equation*}
f\left( \underline{\widetilde{x}}\right) =\sum_{\ell=0}^{\infty}\left(
\sum_{m=-\ell}^{\ell}a_{2\ell}^{2m}Y_{2\ell}^{2m}\left( \underline{%
\widetilde{x}}\right)
+\sum_{m=-\ell-1}^{\ell}a_{2\ell+1}^{2m+1}Y_{2\ell+1}^{2m+1}\left( 
\underline{\widetilde{x}}\right) \right) .
\end{equation*}
\end{proof}

\subsubsection{Lemma \protect\ref{Lemma_Merid_plane}\label{Appendix_Proofs3}}

\begin{proof}
Consider the transformation $\underline{\widetilde{x}}\left( \vartheta
,\varphi\right) \rightarrow\underline{\widetilde{x}}^{\prime}\left(
\vartheta,2\varphi_{0}-\varphi\right) $, we have 
\begin{equation*}
Y_{\ell}^{m}\left( \vartheta,2\varphi_{0}-\varphi\right) =\left( -1\right)
^{m}e^{i2m\varphi_{0}}Y_{\ell}^{-m}\left( \vartheta,\varphi\right) .
\end{equation*}
Now under the assumption of the lemma 
\begin{align*}
f\left( \underline{\widetilde{x}}\right) &
=\sum_{\ell=0}^{\infty}\sum_{m=-\ell}^{\ell}a_{\ell}^{m}Y_{\ell}^{m}\left( 
\underline{\widetilde{x}}\right) \\
& =f\left( \underline{\widetilde{x}}^{\prime}\right) \\
& =\sum_{\ell=0}^{\infty}\sum_{m=-\ell}^{\ell}\left( -1\right)
^{m}e^{i2m\varphi_{0}}a_{\ell}^{m}Y_{\ell}^{-m}\left( \underline{\widetilde{x%
}}\right) .
\end{align*}
From the polar form $a_{\ell}^{m}=\rho_{\ell}^{m}e^{-i\varphi_{\ell}^{m}}$,
and from the equations $a_{\ell}^{m\ast}=\left( -1\right) ^{m}a_{\ell}^{-m}$%
, and $a_{\ell}^{-m}=\left( -1\right) ^{m}e^{i2m\varphi_{0}}a_{\ell}^{m}$,
we get 
\begin{align*}
-\varphi_{\ell}^{m} & =\varphi_{\ell}^{-m}+m\pi, \\
\varphi_{\ell}^{-m} & =\varphi_{\ell}^{m}+2m\varphi_{0}+m\pi.
\end{align*}
The solution of these equations is given by 
\begin{align*}
\varphi_{\ell}^{-m} & =m\varphi_{0}, \\
\varphi_{\ell}^{m} & =-m\left( \varphi_{0}+\pi\right)
\end{align*}
which can be combined into the result 
\begin{equation*}
\varphi_{\ell}^{m}=-m\left( \varphi_{0}+\delta_{m>0}\pi\right) ,
\end{equation*}
where $\delta_{m>0}$ denotes the Kronecker delta. Hence $a_{\ell}^{m}=\rho_{%
\ell}^{m}e^{-im\left( \varphi_{0}+\delta_{m>0}\pi\right) }$ .
\end{proof}

%

\bibliographystyle{aabbrv}
\bibliography{01Bibl_MM16}

\end{document}